\newcommand{\calF}{\mathcal{F}}
\newcommand{\cL}{\mathcal{L}}
\newcommand{\calA}{\mathcal{A}}
\newcommand{\R}{\mathbb{R}}
\newcommand{\RR}{\mathbb{R}}
\newcommand{\N}{\mathbb{N}}
\newcommand{\ZZ}{\mathbb{Z}}
\newcommand{\EE}{\mathbb{E}}
\newcommand{\E}{\mathbb{E}}
\newcommand{\Bin}{\textrm{Bin}}
\DeclareMathOperator{\OPT}{\normalfont{OPT}}
\DeclareMathOperator{\ALG}{\normalfont{ALG}}
\DeclareMathOperator*{\argmax}{arg\,max}
\DeclareMathOperator{\Rel}{Rel}
\DeclareMathOperator{\toprel}{top-rel}
\DeclareMathOperator{\Greedy}{Greedy}
\DeclareMathOperator{\Hull}{Hull}
\DeclareMathOperator{\Core}{Core}
\DeclareMathOperator{\PHull}{PHull}
\DeclareMathOperator{\PCore}{PCore}
\def\set#1{\left\{ #1 \right\}}
\def\abs#1{\left| #1 \right|}
\def\prn#1{\left( #1 \right)}
\def\brk#1{\left[ #1 \right]}
\def\midd{\:\middle|\:}
\def\eps{\varepsilon}
\def\cM{\mathcal{M}}
\def\cF{\mathcal{F}}
\def\cN{\mathcal{N}}
\def\cL{\mathcal{L}}
\def\cI{\mathcal{I}}
\def\cJ{\mathcal{J}}
\def\cB{\mathcal{B}}
\def\cE{\mathcal{E}}
\def\bigO#1{\operatorname{O}\prn{#1}}
\def\bigOm#1{\operatorname{\Omega}\prn{#1}}
\theoremstyle{plain}
\newtheorem{theorem}{Theorem}
\newtheorem{lemma}{Lemma}
\newtheorem{corollary}{Corollary}
\newtheorem{proposition}[theorem]{Proposition}
\newtheorem{claim}{Claim}
{\bfseries}{\itshape} 
\newtheorem*{objective*}{Objective}
\newtheorem{observation}{Observation}
\theoremstyle{definition}
\newtheorem{definition}{Definition}
\newtheorem{example}{Example}
\newtheorem*{reg}{\emph{Running example}}
\pgfplotsset{compat=1.10}
\definecolor{TODOcolor}{cmyk}{0.05,0,0,0}
\definecolor{TODOtxtcolor}{cmyk}{0,1,1,0}
\definecolor{todocolor}{RGB}{220,236,255}
\definecolor{todotxtcolor}{RGB}{44,131,236}
\definecolor{reviewcolor}{RGB}{159,230,112}
\definecolor{reviewtxtcolor}{RGB}{68, 95, 184}
\newenvironment{todo-box}
   {\fboxrule1.5pt\begin{center}
     \begin{lrbox}{\@tempboxa}
     \begin{minipage}{0.9\columnwidth}
     \small\sffamily\parindent1.5em\color{todotxtcolor}}
   {\end{minipage}
    \end{lrbox}
    \fcolorbox{todotxtcolor}{todocolor}{\usebox{\@tempboxa}}\end{center}}
\newlength{\bibitemsep}\setlength{\bibitemsep}{.1\baselineskip plus .05\baselineskip minus .05\baselineskip}
\newlength{\bibparskip}\setlength{\bibparskip}{1.2pt}
\let\oldthebibliography\thebibliography
\renewcommand\thebibliography[1]{%
  \oldthebibliography{#1}%
  \setlength{\parskip}{\bibitemsep}%
  \setlength{\itemsep}{\bibparskip}%
}
\renewcommand{\paragraph}{%
  \@startsection{paragraph}{4}%
  {\z@}{1.6ex \@plus 1ex \@minus .2ex}{-0.2em}%
  {\normalfont\normalsize\bfseries}%
}
\newenvironment{symenum} {\enumerate[label=(\noexpand\thisenumsymbol),align=parleft,labelindent=0pt,itemindent=0pt,labelsep=35pt,leftmargin=*]}
 {\endenumerate}
\newcommand\thisenumsymbol{}
\newcommand\itemsymbol[1]{%
  \renewcommand{\thisenumsymbol}{#1}%
  \item
}
 \newcommand{\linkdest}[1]{\Hy@raisedlink{\hypertarget{#1}{}}}
\begin{document}

\title{Free-Order Online Selection for $k$-Systems
\vspace{.7cm}}

\author{
Krist\'{o}f B\'{e}rczi
\thanks{MTA-ELTE Matroid Optimization Research Group and HUN-REN--ELTE Egerv\'{a}ry Research Group, Department of Operations Research, E\"{o}tv\"{o}s Lor\'{a}nd University, Budapest, Hungary {\tt (kristof.berczi@ttk.elte.hu)}}
\and
Vasilis Livanos
\thanks{Center for Mathematical Modeling, Universidad de Chile, Santiago, Chile {\tt (vas.livanos@gmail.com)}}
\and
Jos\'{e} A.~Soto
\thanks{Department of Mathematical Engineering and Center for Mathematical Modeling, Universidad de Chile, Santiago, Chile {\tt (jsoto@dim.uchile.cl)}}
\and
Victor Verdugo
\thanks{Institute for Mathematical and Computational Engineering, and Department of Industrial and Systems Engineering, PUC Chile, Santiago, Chile {\tt (victor.verdugo@uc.cl)}}
}
\date{}

\thispagestyle{empty}
\maketitle
\begin{abstract}

The \emph{Matroid Secretary Problem} is a central question in online optimization, modeling sequential decision-making under combinatorial constraints. We introduce a \emph{bipartite graph framework} that unifies and extends several known formulations, including bipartite matching, matroid intersection, and matroid secretary problems. In this model, agents and items form a bipartite graph, and the goal is to select a feasible matching subject to independence constraints on both sides.

We first study the \emph{free-order} setting under edge-arrivals. For \emph{$k$-matroid intersection}, we leverage a core lemma by (Feldman, Svensson and Zenklusen, 2022) to design an $\operatorname{\Omega}(1/k^2)$-competitive algorithm, extending known results for single matroids. Building on this, we introduce \emph{$k$-growth systems} -- a new class of independence systems that lie properly between $k$-matchoids and $k$-extendible systems and may be of independent combinatorial interest. We establish a generalized core lemma for $k$-growth systems, showing that a suitably defined set of \emph{critical elements} retains a $\operatorname{\Omega}(1/k^2)$ fraction of the optimal weight. Using this lemma, we extend our $\operatorname{\Omega}(1/k^2)$-competitive algorithm to $k$-growth systems.

We then study the \emph{agent-arrival} model, which presents unique challenges to our framework. We extend the core lemma to this model and then apply it to obtain an $\operatorname{\Omega}(\beta/k^2)$-competitive algorithm for $k$-growth systems, where $\beta$ denotes the competitiveness of an appropriate type of order-oblivious algorithm for the item-side constraint. Finally, we extend our results to the case of \emph{multiple item selection}, and obtain constant-competitive algorithms for fundamental cases such as partition matroids and $k$-matching constraints.

We also study the structural role of \emph{$k$-growth systems} within the hierarchy of \emph{$k$-systems}. We analyze their closure under key operations, such as parallel extension, restriction and contraction. Our analysis further yields two characterizations: one of \emph{$k$-extendible systems} via contractions of \emph{$k$-systems}, and one of \emph{$k$-circuit bounded systems} via a natural generalization of the matroid circuit axiom.
\end{abstract}
\thispagestyle{empty}
\newpage

\pagenumbering{roman}
\tableofcontents

\newpage
\pagenumbering{arabic}
\setcounter{page}{1}

\section{Introduction}\label{sec:intro}

The \emph{secretary problem} is one of the classical examples in optimal stopping theory. 
In the original setup, one must select the best among $n$ candidates arriving in a uniformly random order. 
No information about the candidates is known in advance. 
Still, the decision-maker can rank the candidates observed so far and make an immediate and irrevocable decision to accept or reject the current one. 
The well-known strategy of observing the first $1/e$ fraction of candidates without accepting any, and then selecting the first one who is better than all previously seen candidates, yields an optimal $1/e$ probability of choosing the best one \cite{lindley,dynkin,historical-secretary}.

Recent years have seen a renewed surge of interest in the secretary problem and its extensions, mainly driven by their deep connections to problems in online mechanism design, where agents arrive sequentially and seek to acquire items or services. 
Beyond this setting, the secretary problem is also closely related to online matching, which encompasses numerous real-world applications, including ride-sharing platforms, online labor markets, advertising auctions, and cloud resource allocation. 
In most applications, the original setup is generalized to a combinatorial framework, where each arriving element $e$ reveals a nonnegative weight $w(e)$, and the objective is to maximize the expected total weight of the selected elements, subject to the chosen set being independent with respect to a given feasibility constraint.

Typical constraints that appear frequently, both in applications and in the literature, include uniform matroids\footnote{For a formal description of matroids see Definition~\ref{def:matroid}.} 
(also known as cardinality constraints) and partition matroids \cite{klein-secr}, matching constraints \cite{korula-pal-graphic,kesselheim-secretary-matching,Ezra-general-graph-arrival-matching-secretary,marinkovic-soto-verdugo}, and knapsack constraints \cite{BabaioffKnapsack,KesselheimPrimalDual,NaoriKnapsack,Abels-KnapsackBoosting,Klimm-Knack-generalknapsack-randomorder}, among others. 
Perhaps due to their rich combinatorial structure and the strong performance guarantees they provide for greedy algorithms, matroid constraints have received particular attention in the literature on the secretary problem. 
Although the \emph{Matroid Secretary Conjecture}\cite{mat-sec} -- which asks whether there exists a constant-competitive algorithm for every matroid -- remains unsolved (the best known competitive ratio is $\Omega(1/\log \log r)$ where $r$ is the rank of the underlying matroid\cite{loglogrank-matroid-sec1, loglogrank-matroid-sec2}), several important subclasses of matroids admit constant-factor competitive ratios, see, e.g., \cite{korula-pal-graphic,soto-secretary,kesselheim-secretary-matching,kristof-liv-jose-victor-label-msp,soto2021strong}.

In many important practical settings, however, the feasibility constraint cannot be captured by a single matroid. Instead, the chosen set must be independent with respect to multiple matroids defined over the same ground set -- in other words, a \emph{matroid intersection} constraint. In this model, Feldman, Svensson, and Zenklusen \cite{fsz-secretary-framework-matroid-intersection} showed that, if each individual matroid admits an \emph{order-oblivious}\footnote{An \emph{order-oblivious} secretary algorithm proceeds in two phases: a sampling phase that draws a random subset of elements, observes their weights, and selects none; and a selection phase that processes all remaining elements, with an arrival order that may be adversarial and even adaptive to the observed sample.} constant-competitive algorithm, then one can combine these to obtain a constant-competitive algorithm for the intersection, provided the number of matroids is constant.
Moving beyond the classical random-order formulation -- where the general matroid secretary conjecture remains unresolved -- one interesting variant is the \emph{free-order} model. In this setting, the algorithm is allowed to adaptively choose the next element to reveal its weight at each step. For this model, Jaillet, Soto, and Zenklusen \cite{free-order-secretary,jaillet-soto-zenklusen-arxiv} devised an algorithm that selects every element in the optimal-weight basis of a matroid with probability $1/4$ -- see Figure~\ref{fig:sampling-matroid}.

Motivated by both theoretical considerations and real-world applications, we introduce a new \emph{bipartite graph model} for the free-order setting. In this model, we are given a bipartite graph consisting of \emph{agents}~$A$ and \emph{items}~$B$, and the objective is to select a matching that satisfies two feasibility constraints: the set of matched agents must belong to a given family~$\cF_A$, and the set of matched items must belong to a given family~$\cF_B$. This formulation generalizes bipartite matching secretary, which corresponds to the case where both~$\cF_A$ and~$\cF_B$ are free matroids\footnote{In a \emph{free matroid}, all subsets of the ground set are independent.}, but is considerably more expressive. By varying~$\cF_A$ and~$\cF_B$, the model is able to capture a broad range of constraint families, including matroid and matroid intersection constraints, and even the random-order matroid secretary problem within a free-order framework. Moreover, approaching free-order selection through this two-sided matching perspective reveals a close relationship with the classical \emph{online bipartite matching} setting since, for instance, our model can naturally accommodate \emph{agent-arrival} settings.

\begin{reg}
To motivate our bipartite graph model, consider the following example. Agents represent facilities providing different services, such as water, electricity, or internet, while items correspond to clients. The weight of an edge $\set{a,b}$ represents the utility obtained if facility~$a$ services client~$b$. We note that this example is simplified: in realistic scenarios, a user may connect to multiple providers and providers may serve many clients. Nevertheless, it highlights the essential features of the free-order model and clarifies the behavior of our algorithms. When both $\cF_A$ and $\cF_B$ are free, the model reduces to a standard bipartite matching problem; later we show how our framework allows us to handle more complex variants of this setting.
\end{reg}

In what follows, we introduce our models, recall standard notions and prior results, and explain how our framework relates to previous work, to give readers a comprehensive understanding of the context in which our results operate. Readers already familiar with the literature may safely skip the discussion of prior work.

\subsection{Overview of our Contributions and Techniques}

While the matroid secretary problem has been extensively studied, much less is known when both sides impose feasibility constraints, even in the free-order model. Existing techniques focus on a single independence system or combine guarantees via order-oblivious algorithms, and do not extend naturally to this setting. With constraints on both sides, even basic questions -- such as the existence of constant-competitive algorithms or how the two sides interact -- become more subtle. In this paper, we leverage our new \emph{bipartite graph model} to investigate the free-order secretary problem under both the \emph{edge-arrival} and \emph{agent-arrival} settings. At a high level, our aim is to build on and significantly extend recent sampling-based techniques for free-order secretary problems. Our results consist of one conceptual and three algorithmic contributions.

The first challenge is that, beyond matroids, the standard notion of span no longer controls which elements remain relevant after sampling, and interactions among multiple constraints can break the structural guarantees of existing analyses. Our first key insight is that the core lemma due to Feldman, Svensson and Zenklusen can be abstracted to a general principle: a carefully defined set of \emph{critical} elements -- determined by a \emph{span-like operator} and a notion of \emph{relevance} -- retains a constant fraction of the optimum.

This observation allows us to design a constant-competitive algorithm for the free-order secretary problem under $k$-matroid intersection constraints, when $k$ is constant. Motivated by this insight, we introduce a new family of independence systems, which we call \emph{$k$-growth systems}. In our second algorithmic contribution, we show that $k$-growth systems still satisfy a version of the Feldman--Svensson--Zenklusen core lemma for this carefully defined set of critical elements. Interestingly, $k$-growth systems are far from just a technical choice for the Feldman--Svensson--Zenklusen core lemma: they generalize several natural independence constraints such as $k$-matchoids, bounded-ratio knapsacks and stable sets of intersection graphs, they are properly contained within the hierarchy of $k$-systems (see Section~\ref{sec:higher-order-systems} for definitions), and they exhibit several useful closure properties. For these reasons, $k$-growth systems constitute our main conceptual contribution. Moreover, $k$-growth systems are closed under combination in the style of $k$-matchoids. We present our results for $k$-growth systems to illustrate both their modeling power as well as the strength of our generalized core lemma.

Finally, in our third algorithmic contribution, we extend our setting to \emph{agent arrivals}, which introduces further challenges due to correlated information and limited observability. To overcome these challenges, we further refine our notion of critical elements and use a multi-phase sampling scheme.

Below, we describe each of our contributions and techniques in detail.



\paragraph{$k$-matroid intersection.}
We begin with $k$-matroid intersection. The algorithm of Jaillet, Soto, and Zenklusen~\cite{jaillet-soto-zenklusen-arxiv} provides a natural starting point: sample half of the elements, order the optimal solution within the sample by decreasing weight, and then process the remaining elements in an order according to the span of the prefixes of this sampled optimum. Unfortunately, this approach cannot be generalized to matroid intersection directly due to the following two obstacles: first, the spans of the individual matroids may require completely different orders and second, even with a favorable order, it is a priori unclear why the intersection of the individual optimal sets retains enough weight compared to the global optimum. The latter problem can be resolved using a lemma by Feldman, Svensson and Zenklusen \cite{fsz-secretary-framework-matroid-intersection}: by sampling a large fraction of the elements, they show that the so-called \emph{greedy-relevant} elements of the second phase retain a decent fraction of the optimum weight -- we refer to this as the \emph{core lemma} for matroid intersection (Lemma~\ref{lem:fsz-original-lemma}). Equipped with this structural insight, we design an algorithm that extends the approach of~\cite{jaillet-soto-zenklusen-arxiv}, by using the union of the individual spans of the matroids. Our algorithm selects every element in the intersection of the optimal sets of the greedy-relevant elements with probability $1/4$ (Theorem~\ref{thm:alg-general}). More importantly, this algorithm serves as a versatile black-box that applies beyond $k$-matroid intersection, whenever the underlying constraint family admits an appropriate analogue of the core lemma.

\begin{reg}
Suppose now that each facility requires certain raw materials or resources supplied by a central authority in order to operate. We can represent these dependencies in a different graph, where raw material providers are sources and the facilities are sinks. For each resource, the subset of facilities that can be served simultaneously forms a gammoid, represented by this new graph that is separate from the original bipartite graph. \footnote{A \emph{gammoid} is a special type of matroid in which a set $S$ of vertices in a given graph is independent if there exist vertex-disjoint paths from $S$ to a designated sink.} Hence, the global constraint on the agent side, i.e. on the facilities, can be viewed as an intersection of gammoids. Notably, although no constant-factor guarantees are known for the secretary problem with gammoid constraints in the random-order setting; in the free-order case, our bipartite graph framework yields constant-factor guarantees even when multiple gammoids are present.
\end{reg}

\paragraph{$k$-growth systems.}
We then generalize this result in two different directions. First, we revisit the core lemma by Feldman, Svensson and Zenklusen. We identify the key structural characteristic that underpins the lemma and, motivated by this property, we introduce a new family of independence systems which we call \emph{$k$-growth systems}. This new family fits nicely in the existing hierarchy of $k$-systems: it contains $k$-matchoids and, in turn, is contained within the class of $k$-extendible systems. Intuitively, $k$-growth systems require a stronger $k$-extendability property, not just for independent sets but for arbitrary sets.

At first glance, the definition of $k$-growth systems seems quite technical, so one might wonder whether they form a natural class within the $k$-system hierarchy. Beyond the closure properties discussed above, we clarify their place in this hierarchy by showing that $k$-growth systems properly contain $k$-matchoids and are properly contained in $k$-extendible systems. The same strict containments hold for $k$-circuit-bounded systems, which naturally prompts the question of how these two classes relate to one another. We show that they are incomparable: there exist $k$-growth systems that are not $k$-circuit-bounded, as well as $k$-circuit-bounded systems that are not $k$-growth. Thus, these notions reveal that the the $k$-systems hierarchy does not form a single chain of containment, but instead contains distinct and complementary structural classes.

We further motivate their study by showing that the parameter $k$ has a natural interpretation in many well-studied constraints. Specifically, for knapsacks, stable sets of intersection graphs, capacitated intervals and sparse packing constraints, $k$ serves as a bound on the ratio of sizes associated with the elements. For example, for a knapsack constraint, $k$ corresponds to the ratio of the maximum-size over minimum-size feasible subsets. When selecting a stable set of intersection graphs, e.g. a stable set of intersecting disks on the plane $k$ serves as the ratio of maximum to minimum radii of the disks. For the case of selecting at most $c$ intervals with lengths in $[a,b]$, $k$ corresponds to $c \cdot b/a$. Notice that this also captures unit-demand unsplittable flow on paths (UFP), by viewing the path edges as the ``points'' of the interval system. Moreover, one can interpret $d$-sparse packing constraints as a combination of knapsack constraints and these can be seen as $k$-growth systems for $k = d \cdot \max_j \rho_j$, where $\rho_j$ is the ratio of the maximum-size over minimum-size feasible subsets in the knapsack corresponding to the $j$-row. Thus, $k$-growth systems capture a broad class of constraints whose obstruction is local and whose complexity is controlled by bounded ratios.

To generalize the core lemma to $k$-growth systems, we first need to generalize the notion of span from matroids to $k$-growth systems. We introduce a new ``span-like'' operator, the \emph{primitive hull}, which, for any set $X$, consists of $X$ together with all elements $e$ that form a circuit with $X$. We note that computing the primitive hull can be challenging in general, though it is efficient for many natural constraints such as matroids or knapsacks. More broadly, our black-box algorithm primarily illustrates what is achievable information-theoretically, rather than guaranteeing fully efficient procedures for all systems.

We then generalize our core lemma to $k$-growth systems: we identify a set of \emph{critical} elements, those elements $e$ that do not lie in the primitive hull of the greedy-relevant elements of higher weight. We prove that, for every $k$-growth system, the set of critical elements preserves a $\bigOm{1/k^2}$ fraction of the weight of the optimal basis (Lemma~\ref{lem:core-k-growth-general}, Theorem~\ref{thm:core-lemma-vs-opt}). Substituting our new core lemma into the black-box algorithm from Theorem~\ref{thm:alg-general} immediately yields a $\bigOm{1/k^2}$-competitive algorithm for the free-order secretary problem over any $k$-growth system (Theorem~\ref{thm:growth-systems-alg}). In fact, we establish a stronger result; we introduce the notion of \emph{combination} of systems, which generalizes the classical $k$-matchoid construction, allowing the individual components of the overall system to be arbitrary constraint families rather than only matroids. We prove that any combination of growth systems, where the $i$-th system $M_i$ is a $k_i$-growth system, is itself a $k$-growth system, with $k = \max_e \sum_{i : e \in M_i} k_i$, the natural analogue of this parameter in $k$-matchoids. See Figure~\ref{fig:sampling-growth} for an illustration of our algorithm's phases.

\begin{reg}
In addition to resource constraints, each facility may also incur an operational cost, while a central authority operates under a fixed budget for servicing clients. This naturally induces a knapsack constraint on the set of facilities. The class of growth systems is particularly appealing here, as it accommodates the combined feasibility constraint of (bounded-ratio) knapsack and matroid intersection constraints.
\end{reg}

\paragraph{Agent-arrival setting.}
The strength of the bipartite graph model becomes evident when we turn to the \emph{agent-arrival} setting. Unfortunately, the results established for edge arrivals do not translate directly, as the agent-arrival setting introduces three main challenges. The first difficulty arises from the loss of partial observability. In the edge-arrival model, sampling a fraction of edges reveals information about some edges incident to each agent. Under agent arrivals, ``calling'' an agent reveals all of its incident edges simultaneously, and the agent cannot be matched later on. To address this, we switch from reasoning about all \emph{greedy-relevant edges} to a single \emph{top-relevant edge} per agent. This is the heaviest greedy relevant edge incident to that agent. Since the matching constraint allows at most one edge per agent, we can canonically associate each agent with its top-relevant edge. This perspective lets us extend the core lemma for $k$-growth systems to the agent-arrival setting, showing that the set of top-relevant edges still retains an $\bigOm{1/k^2}$ fraction of the optimal basis weight (Lemma~\ref{lem:core-agent-arrival}).

The second obstacle arises when attempting to use our general algorithm from Theorem~\ref{thm:alg-general} in a black-box manner. The algorithm requires querying agents in an order determined by the primitive hull on the agent side, ensuring that agents whose greedy-relevant edges are critical with respect to $\calF_A$ are considered with constant probability. This order, however, provides no guarantee on the item side -- in fact, the selected edges may form a dependent set in $\calF_B$. To address this, we restrict $\calF_B$ to systems that admit a special \emph{core-selecting order-oblivious} algorithm. Such an algorithm preserves its competitive guarantee even when, after a uniformly random sample phase, every \emph{critical} element of $\calF_B$ that was not sampled is selected with constant probability, regardless of the arrival order of the elements in the sampling phase. Notice that this requirement is different from that of a classical order-oblivious secretary algorithm as, for systems beyond matroids, the critical elements need not be in the optimal solution.

This, however, reveals the third, and most subtle, obstacle in our approach: When we sample agents, we simultaneously observe all edges incident to those agents. From the perspective of the item side, this does not induce a uniformly random sample of items or even of item-adjacent edges, since edges corresponding to the same agent are inherently correlated. Notice, however, that after our first sample, each agent contributes at most one greedy-relevant edge -- their top greedy-relevant edge, if one exists. Therefore, to overcome this final obstacle, after the first sample, we restrict our ground set to the (unknown) set of greedy-relevant edges. We can then perform a second, independent sampling phase, ensuring that the sampled greedy-relevant edges form a uniform random subset of the new ground set; for more details on this, see Section~\ref{sec:agent-arrival}. Finally, applying the algorithm from Theorem~\ref{thm:alg-general} as a black box (which introduces a third and final sampling phase) yields an $\bigOm{\beta / k^2}$-competitive algorithm for the agent-arrival model, where $\beta$ is the competitive ratio of the order-oblivious algorithm for $\calF_B$ (Theorem~\ref{thm:alg-general-agent}) -- see Figure~\ref{fig:sampling-order-oblivious}. This result combines the structural insights of the edge-arrival model with the algorithmic challenges of agent arrivals, and is thus the main algorithmic result of the paper.

\begin{reg}
Suppose that the items actually represent contracts between agencies and clients, with each facility responsible for servicing one contract. When a facility arrives, we observe all contracts that it could service. The selected contracts must form a matching between facilities and clients, corresponding to the item-side constraint $\cF_B$. Because this constraint admits a core-selecting, order-oblivious algorithm, our bipartite graph framework yields constant-factor guarantees for free-order agent arrivals in this setting.
\end{reg}

\begin{figure}[t]
    \centering
    \begin{subfigure}{\textwidth}
        \centering
        \includegraphics[width=0.5\linewidth]{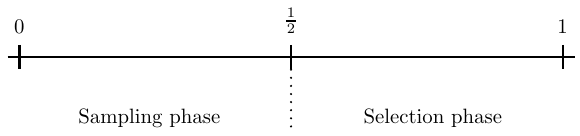}
        \caption{Illustration of the algorithm of Jaillet, Soto, and Zenklusen for a single matroid (Theorem~\ref{thm:jsz-free-order-matroid}), with a single sampling phase in $[0,1/2)$ and a selection phase in $[1/2, 1]$.}
        \label{fig:sampling-matroid}
    \end{subfigure}\\[5pt]
    \begin{subfigure}{\textwidth}
        \centering
        \includegraphics[width=0.5\linewidth]{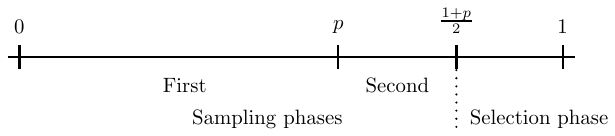}
        \caption{Illustration of our algorithm for combinations of $k$-growth systems in the edge-arrival setting (Theorem~\ref{thm:growth-systems-alg}), with a first sampling phase in $\left[0, p\right)$ to define the greedy-relevant elements in subsequent phases, a second sampling phase in $[p, (1+p)/2)$ used by our black-box algorithm (Theorem~\ref{alg:free-general}) to decide the order in which to ``call'' the greedy-relevant elements, and a selection phase in $[(1+p)/2, 1]$. Here, $p = \sqrt{1 - 1/(k+1)}$ for arbitrary $k$-growth systems and $p = 1 - 1/(2k)$ for $k$-matchoids.}
        \label{fig:sampling-growth}
    \end{subfigure}\\[5pt]
    \begin{subfigure}{\textwidth}
        \centering
        \includegraphics[width=0.5\linewidth]{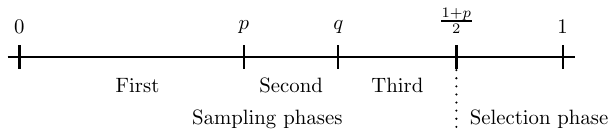}
        \caption{Illustration of our algorithm for combinations of $k$-growth systems in the agent-arrival setting (Theorem~\ref{thm:alg-general-agent}), with a first sampling phase in $[0, p)$ to define the greedy-relevant elements in subsequent phases, a second sampling phase in $[p, q)$ required by the corresponding order-oblivious algorithm for $\calF_B$, a third sampling phase in $[q, (1+q)/2)$ used by our black-box algorithm (Theorem~\ref{alg:free-general}) to decide the order in which to ``call'' the greedy-relevant agents, and a selection phase in $[(1+q)/2, 1]$. Here, $p = \sqrt{1 - 1/(k+1)}$ for arbitrary $k$-growth systems and $p = 1 - 1/(2k)$ for $k$-matchoids, and $q$ depends on the sampling probability of the order-oblivious algorithm for $\calF_B$.}
        \label{fig:sampling-order-oblivious}
    \end{subfigure}
    \caption{A visual explanation of the sampling and selection phases of each of our algorithms.}
    \label{fig:interval-sampling}
\end{figure}

\paragraph{Beyond matching constraints.}
We next consider settings in which each agent may select multiple items. Extending the core lemma to this setting remains challenging -- even for the intersection of two matroids -- so we take a different route and focus on the case of \emph{unrelated agents}. Here, each agent possesses their own, separate independence system governing the feasible subsets of its incident edges, and feasibility across agents is coupled only through a global item-side constraint. In this setting, agents arrive sequentially, reveal the weights of all adjacent edges, and the algorithm must irrevocably select a feasible subset for each agent. Interestingly, this model captures a rich class of problems in online selection, including the \emph{secretary problem with groups} of Korula and P\'{a}l \cite{korula-pal-graphic}, as well as variants where items represent edges in low-degree graphs or hypergraphs and feasible sets correspond to matchings~\cite{marinkovic-soto-verdugo,kesselheim-secretary-matching}. We design a constant-competitive algorithm (Theorem~\ref{thm:multiple-items}) for the agent-arrival setting when the item-side constraint is sufficiently simple, for example, a partition matroid or a $k$-matching constraint, demonstrating that our bipartite free-order framework extends naturally to heterogeneous and structurally diverse environments. It is worth noting that this approach leads to an $\Omega(1/k^2)$-competitive algorithm for combinations of systems admitting so-called $k$-directed certifiers~\cite{marinkovic-soto-verdugo}, which includes combinations of several classes of matroids.

\begin{reg}
Finally, when the constraints governing different facilities are independent of one another, we can allow each facility to serve multiple contracts while the contract side still satisfies a matching constraint. Our framework continues to provide constant-factor guarantees in this more general setting.
\end{reg}

\paragraph{The $k$-systems hierarchy.}
Finally, we examine the position of $k$-growth systems within the broader $k$-systems hierarchy and study their relationship to other well-known classes. As discussed earlier, we show that every $k$-matchoid is a $k$-growth system and present examples of $k$-growth systems that are not $k$-circuit bounded and of $k$-circuit bounded systems that are not $k$-growth. We also show that the class of $k$-growth systems is, in turn, properly contained within the class of $k$-extendible systems. 
We illustrate the expressiveness of $k$-growth systems through several examples, including bounded-ratio knapsack constraints and stable sets of interval graphs. Moreover, we show that all these classes, except for $k$-systems, are closed under parallel extension, restriction, and contraction. More importantly, we provide two new characterizations of well-studied classes in the $k$-systems hierarchy. Specifically, we show that $k$-extendible systems as exactly those $k$-systems whose contractions remain $k$-systems, and $k$-circuit bounded systems are the systems that satisfy a natural generalization of the circuit axiom for matroids.

\subsection{Related Work}

\paragraph{Online Matchings and Combinatorial Assignment.}
The baseline is unweighted one-sided bipartite online matching, where the \emph{Ranking} algorithm achieves the optimal ratio $1-1/e$~\cite{KVV}. Secretary-style generalizations introduce random arrival, leading to two standard models in general graphs: \emph{vertex arrival} (upon a vertex's arrival, incident edges to previously seen vertices are revealed and must be acted on immediately) and \emph{edge arrival} (edges appear one by one, each requiring an irrevocable decision). Early secretary formulations on graphs and hypergraphs are due to Korula and P\'al~\cite{korula-pal-graphic}. For general graphs, Ezra, Feldman, Gravin, and Tang~\cite{Ezra-general-graph-arrival-matching-secretary} established tight bounds of $5/12$ for vertex arrival and an exponential-time $1/4$ for edge arrival, with the latter later made polynomial-time in~\cite{marinkovic-soto-verdugo}. In bipartite weighted settings with budgets (AdWords), online allocation under random order was modeled and analyzed by Mehta, Saberi, Vazirani, and Vazirani~\cite{mehta2007adwords}, while related secretary-style approaches for weighted bipartite matching and its extensions were developed by Kesselheim, Radke, T\"{o}nnis, and V\"{o}cking~\cite{kesselheim-secretary-matching}. Beyond graphs, \emph{online submodular welfare} captures diminishing-returns valuations over arriving items and admits a $1/2$-competitive greedy algorithm, which is optimal among polynomial-time algorithms under standard assumptions~\cite{kapralov2013online}. A unifying abstraction for these problems is \emph{online combinatorial assignment} over independence systems, where agents arrive with (possibly multi-weight) preferences over a fixed feasibility family (e.g., $k$-hypergraph matchings, matroids, matroid intersections, matchoids), and the algorithm assigns at most one feasible element per agent while maintaining independence~\cite{marinkovic-soto-verdugo}.

\paragraph{Prophet Inequalities and Online Contention Resolution Schemes.}
Secretary problems are closely connected to \emph{prophet inequalities}, a model that has attracted significant attention due to its applications in mechanism design and auction theory \cite{haji,ChawlaHMS,chawla07, qiqi, BlumHolen08, Adamczyk17, Alaei14, feldman-combinatorial, Dutting2, kesselheim-mhr-ppm, Dobzinski-Comb-Auc, Dobzinski-Comb-Auct-Impos, AssadiKS, assad-singla, dutting-combinatorial,correa-cristi}. In the prophet inequality framework, the element weights $w(e)$ are drawn independently from known distributions, and the arrival order is typically adversarial. The classical single-item result of Krengel, Sucheston, and Garling~\cite{kren-such,kren-such2,sam-cahn} established the optimal $1/2$-competitive ratio, later extended to matroid constraints by Kleinberg and Weinberg~\cite{klein-wein}, who also obtained a $1/(4k-2)$-competitive ratio for intersections of $k$ matroids. Recently, \cite{WeinbergMatroidProphetLB} showed an inapproximability result that is slightly better than $O(1/\sqrt{k})$. In the IID prophet inequality setting, where all elements are drawn independently from identical distributions, the optimal competitive ratio is known to be approximately $0.7451$~\cite{correa-iid}. This hardness carries over to the free-order setting: any inapproximability result that applies to the IID case also applies to free-order algorithms, so the problem remains fundamentally constrained by the same structural limitations. Early progress in the free-order model was made by Beyhaghi, Golrezaei, Paes Leme, P\'{a}l and Sivan~\cite{beyhaghi-fo} and Peng and Tang~\cite{peng-tang}, with the current best competitive ratio of $0.7258$ due to Bubna and Chiplunkar~\cite{bubna-chiplunkar}. Determining the optimal competitive ratio in this setting remains a central open question in optimal stopping theory.

Secretary problems are also closely related to \emph{online contention resolution schemes} (OCRSs), which play a central role in Bayesian mechanism design and posted-price mechanisms~\cite{ocrs,rubin,rs,chek-liv}. OCRSs share many conceptual similarities with prophet inequalities and can be combined in a black-box way, both properties that align with several ideas developed in this paper. Feldman, Svensson, and Zenklusen~\cite{ocrs} introduced the first OCRSs for matroid, matching, and knapsack constraints, achieving a $1/(e(k+1))$ guarantee for $k$-matroid intersections. Lee and Singla~\cite{lee-singla} later obtained optimal guarantees for matroids under a slightly stronger assumption of a weaker adversary. Despite this progress, determining the optimal OCRS guarantees for matroids, matchings, and knapsacks remains an active area of research \cite{liv-tight-greedy-ocrs,efgt,Ezra-general-graph-arrival-matching-secretary,tristan,ma-rocrs-matching,NutiCRSMatching,brubach,MaCalumNutiVanishingCRS}. Recently, Dughmi~\cite{dughmi1,dughmi2,Dughmi2025Combined} established a deep and beautiful equivalence: obtaining constant-factor OCRSs for matroid constraints in the random-order setting -- for any correlated distribution for which such guarantees exist offline -- is equivalent to resolving the matroid secretary conjecture.

\paragraph{The $k$-system Hierarchy.}
The notion of a $k$-system goes back to the classical analysis of greedy algorithms for independence systems by Jenkyns~\cite{jenkyns1976efficacy} and by Korte and Hausmann~\cite{korte1978analysis}, who characterized approximation guarantees in terms of the rank quotient, that is, the ratio of the largest and smallest bases of a subset. This concept was later adopted under the name $p$-system in submodular maximization~\cite{calinescu2011maximizing}. Jenkyns~\cite{jenkyns1976efficacy} also implicitly introduced the notion later referred to as $k$-circuit-bounded systems, although no explicit name was proposed in that work. The class of $k$-extendible systems was introduced by Mestre~\cite{mestre2006greedy}, fitting into the hierarchy
\begin{align*}
\text{$k$-matroid intersection} \subseteq \text{$k$-matchoid} \subseteq \text{$k$-circuit bounded} \subseteq \text{$k$-extendible} \subseteq \text{$k$-system}.
\end{align*}
Most previous work on $k$-systems has been motivated by algorithmic applications, notably the analysis of greedy procedures and the approximation of submodular maximization over independence systems. Consequently, $k$-systems are well understood from the viewpoint of performance guarantees, while their purely combinatorial structure has received comparatively less attention.

Two closely related combinatorial optimization problems are \emph{matroid $k$-parity} and
\emph{$k$-uniform matroid matching}, introduced by Lee, Sviridenko, and
Vondrák~\cite{lee2013matroid} and studied further in subsequent work.
The special case $k=2$ recovers the classical matroid parity and matroid
matching problems, introduced by Lawler~\cite{Lawler1971} and
Lovász~\cite{Lovasz1980}. From an algorithmic viewpoint, the maximum
feasible set in either problem extends the notion of a maximum feasible set
in a $k$-matchoid; from an oracle viewpoint, these models are strictly
more general (see Huang and Ward~\cite{HuangWard2023} and references
therein). From the perspective of independence systems, both can also be
viewed as generalizations of $k$-matchoids. They are contained in the class of $k$-extendible systems, but we do not otherwise elaborate on
their precise location within the hierarchy above.


\subsection{Organization}

We start with Section~\ref{sec:model}, where we describe our bipartite graph model and present some necessary preliminaries and definitions. In Section~\ref{sec:k-matroid-intersection} we describe our algorithm for $k$-matroid intersection, presented in a black-box way as it will be used for our later results as well. In Section~\ref{sec:lemma-extension} we define our new independence class of $k$-growth systems and present our generalized Core lemma for combinations of $k$-growth systems. Section~\ref{sec:agent-arrival} contains our results for the agent-arrival model. We then extend the agent-arrival setting to allow multiple items per agent when the agents' constraints are unrelated in Section~\ref{sec:multiple-items}. Section~\ref{sec:higher-order-systems} relates $k$-growth systems to the other classes in the $k$-systems hierarchy and describes several interesting properties of the hierarchy classes. We conclude the paper by a list of open problems in Section~\ref{sec:discussion}.
\section{The Bipartite Graph Model}\label{sec:model}

In this section, we define our proposed bipartite graph model for free-order secretary problems. We begin with the basic notation; all further terminology is introduced as needed. We denote the sets of \textit{reals} and \textit{integers} by $\R$ and $\ZZ$, and add the subscripts $+$ when referring to positive values. For a positive integer $k$, we write $[k] \coloneqq \set{1,\dots,k}$. Given a ground set $S$, a subset $X\subseteq S$, and an element $y\in S$, we abbreviate $X\setminus\{y\}$ and $X\cup\{y\}$ as $X-y$ and $X+y$, respectively, and write $\{x\}$ simply as $x$ when no confusion arises. For a $p \in [0,1]$ and a set $S$ of elements, we use $S(p)$ to denote a random subset of $S$ where every element appears independently with probability $p$. For a function $w\colon S\to\R_+$, we use $w(X)\coloneqq\sum_{x\in X}w(x)$. If $G=(V,E)$ is a graph, and $S\subseteq V$ is a subset of vertices, we use $\delta(S)$ to denote the set of edges with one endpoint in $S$ and one endpoint outside~$S$.
Let us first recall the definition of independence systems and matroids.

\begin{definition}[Independence System]\label{def:indep-system}
An \emph{independence system} $\cF = (S,\cI)$ consists of a finite set $S$ and a family of \emph{independent sets} $\cI \subseteq 2^S$ satisfying the following two properties:
\begin{enumerate}[label=(I\arabic*)]\itemsep0em
    \item $\cI$ is non-empty, \label{it:nonempty}
    \item $\cI$ is downward closed, that is, $I\in\cI$ implies $J\in\cI$ for all $J\subseteq I$. \label{it:downward}
\end{enumerate}
Note that conditions~\ref{it:nonempty} and~\ref{it:downward} together imply $\emptyset\in\cI$. Any set $X \notin \cI$ is called \emph{dependent}. The inclusion-wise minimal dependent sets are called \emph{circuits}, and a circuit of size $1$ is called a \emph{loop}. For any subset $X \subseteq S$, the inclusion-wise maximal independent subsets of $X$ are called \emph{bases} of~$X$.
\end{definition}

Examples of independence systems include matroids, matchings in hypergraphs,
matchoids, $k$-extendible systems, and $k$-systems (see
Section~\ref{sec:higher-order-systems} for details and definitions). Below, we provide a few basic definitions for matroids and refer the reader to \cite{oxley-matroids} for an extensive treatment.

\begin{definition}[Matroid]\label{def:matroid}
A matroid $M=(S,\cI)$ is an independence system that satisfies the following \emph{augmentation} property:
\[
\forall\, I,J\in\cI,\ |I|<|J|
\ \Longrightarrow\ \exists\, e\in J\setminus I \text{ such that } I\cup\{e\}\in\cI.
\]
This property is equivalent to the statement that, for every $X\subseteq S$,
all maximal independent subsets of $X$ (the bases of $X$) have the same
cardinality. With this, one defines the rank of $X$, denoted by $r(X)$, as the
common size of the bases of $X$. The closure (or span) of $X$ is
\[
\mathrm{span}(X)\;=\;X\ \cup\ \{e\in S\setminus X:\ r(X\cup\{e\})=r(X)\}.
\]
\end{definition}

Throughout the paper, we consider independence systems equipped with a nonnegative weight function $w \colon S \to \mathbb{R}_+$ on the ground set, with distinct values. In matroids, the maximum-weight basis under distinct weights is unique. In general independence systems, however, the maximum-weight member of the system may not be unique even with distinct weights. One can fix uniqueness via an appropriate tie-breaking rule, such as a lexicographic rule; thus, assuming uniqueness of a maximum-weight member is standard and usually without loss of generality. This assumption eliminates tie-breaking issues, allowing us to focus on structural and algorithmic aspects of the problem. For $X \subseteq S$, let $\OPT(X)$ be a maximum-weight basis of $X$ with respect to $w$.

\begin{definition}[Weighted greedy on $X$]\label{def:greedy}
Given $X\subseteq S$, the \emph{weighted greedy algorithm} processes the
elements of $X$ in non-increasing order of $w$ and builds a set $G$ starting
from $\emptyset$, adding an element $e$ when $G\cup\{e\}\in\cI$. We denote its
output by $\Greedy(X)$.
\end{definition}
For the next proposition, recall that an independence system is a $k$-system if, within every subset of the ground set, the ratio between the sizes of any two maximal independent subsets is at most $k$; the
definition is repeated in Section~\ref{sec:higher-order-systems}. The following proposition, by Korte and Hausmann \cite{korte1978analysis}, relates the weighted greedy algorithm to $k$-systems.

\begin{proposition}\label{prop:greedy-matroid-ksystem}
For matroids, $\Greedy(X)$ is the unique maximum-weight basis of $X$ under
distinct weights, i.e., $\Greedy(X)=\OPT(X)$. More generally, an independence
system is a $k$-system if and only if, for every $X\subseteq S$ and every
nonnegative weight function $w$, the greedy algorithm yields a $k$-approximate maximum-weight basis, i.e.,
\[
w(\Greedy(X)) \ge \frac{1}{k} w(\OPT(X)).
\]
\end{proposition}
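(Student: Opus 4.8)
The plan is to derive the matroid case as the special case $k=1$ of the $k$-system characterization, and to prove the latter by the classical rank-quotient argument of Jenkyns and of Korte and Hausmann. Since both sides of the equivalence are unaffected by infinitesimal perturbations of the weights, I may assume $w$ has distinct values, so that $\Greedy(X)$ and the decreasing-weight order it scans are unambiguous. Fix $X\subseteq S$ and list its elements as $e_1,\dots,e_n$ with $w(e_1)>\cdots>w(e_n)\ge 0$; write $X_i=\{e_1,\dots,e_i\}$ and set $G=\Greedy(X)$, $O=\OPT(X)$, $G_i=G\cap X_i$, $O_i=O\cap X_i$.

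The backbone of the proof consists of two structural facts. First, $G_i$ is a \emph{basis} (maximal independent subset) of $X_i$: it is exactly the partial solution held by greedy after scanning $e_1,\dots,e_i$, and if some $e_j$ with $j\le i$ could be added to $G_i$ while preserving independence, then by downward closure it could already have been added to the smaller set $G_{j-1}$, contradicting the fact that greedy rejected $e_j$. Second, $O_i$ is an independent subset of $X_i$, hence extends to some basis of $X_i$. For the ``only if'' direction, assume the system is a $k$-system, i.e.\ any two bases of any subset have sizes within a factor $k$; then $|O_i|\le k|G_i|$ for every $i$. Now telescope over the weight gaps $d_i:=w(e_i)-w(e_{i+1})\ge 0$ (with $d_n:=w(e_n)$): since $w(e_i)=\sum_{j=i}^n d_j$, we get $w(G)=\sum_i d_i|G_i|$ and $w(O)=\sum_i d_i|O_i|\le k\sum_i d_i|G_i|=k\,w(G)$, which is the claimed $w(\Greedy(X))\ge\tfrac1k w(\OPT(X))$. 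Specializing to matroids, all bases of any subset are equicardinal, so matroids are $1$-systems and the above gives $w(\Greedy(X))\ge w(\OPT(X))$; since $\Greedy(X)$ is itself a basis, this forces $w(\Greedy(X))=w(\OPT(X))$, and a standard exchange argument (exchanging along the circuit created by the heaviest element of the symmetric difference strictly improves any non-greedy maximum-weight basis) shows that under distinct weights the maximum-weight basis is unique, so $\Greedy(X)=\OPT(X)$.

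For the ``if'' direction I argue by contraposition. If the system is not a $k$-system, pick $X$ with bases $B_1,B_2$ satisfying $|B_1|>k|B_2|$, and design a weight function that traps greedy in the small basis: give each element of $B_2$ a (distinct) weight in $(1,1+\eps)$ and each element of $X\setminus B_2$ a (distinct) weight in $(1-\eps,1)$. Greedy run on $X$ scans and accepts all of $B_2$ (it is independent) and then, since $B_2$ is already maximal in $X$, halts, so $\Greedy(X)=B_2$ with $w(\Greedy(X))<(1+\eps)|B_2|$; meanwhile $B_1\subseteq X$ is independent, so $w(\OPT(X))\ge w(B_1)>(1-\eps)|B_1|>(1-\eps)\,k|B_2|$. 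Because $|B_1|>k|B_2|$, choosing $\eps$ small enough yields $w(\OPT(X))>k\,w(\Greedy(X))$, i.e.\ $w(\Greedy(X))<\tfrac1k w(\OPT(X))$, contradicting the assumed guarantee.

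Everything here is routine once the bookkeeping is set up; the two places that require genuine care are (a) verifying that the greedy partial solutions $G_i$ really are maximal in the prefixes $X_i$ — this is the bridge from the algorithmic guarantee to the purely combinatorial rank-quotient condition — and (b) tuning $\eps$ in the converse so that the strict inequality $|B_1|>k|B_2|$ is not washed out by the $O(\eps)$ weight gap between $B_2$ and the rest of $X$.
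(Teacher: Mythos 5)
Your proof is correct. The paper itself states this proposition without proof, citing Korte and Hausmann; your argument is precisely the classical one from that source — the prefix/rank-quotient telescoping $w(\Greedy(X))=\sum_i d_i|G_i|$ with the key observation that each $G_i$ is a basis of the prefix $X_i$, together with the standard weight-gadget $((1,1+\eps)$ on $B_2$, $(1-\eps,1)$ off $B_2)$ for the converse — so it fills in exactly what the paper defers to the reference.
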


\subsection{Free-order Bipartite Secretary Model}

Let $G = (A \cup B, E)$ be an undirected bipartite graph. We refer to $A$ as the ``agent'' side and $B$ as the ``item'' side. Let $w\colon E\to\R_+$ be a nonnegative injective weight function over the edge set. Furthermore, let $\cF_A = (A, \cI_A)$ and $\cF_B = (B, \cI_B)$ denote independence systems over the vertices on the agent and item side, respectively.
In the \emph{bipartite secretary model}, we are given $G$, $\cF_A$, and $\cF_B$ as input.\footnote{Similarly to matroid algorithms, we assume that independence systems are given via an independence oracle that, for a given set $X$, answers yes or no depending on whether $X$ is independent or not.} We say that a set of edges $F\subseteq E$ is a \emph{feasible matching} if
\begin{enumerate}\itemsep0em
    \item[$(i)$] $F$ is a matching in $G$.
    \item[$(ii)$] The set of agents incident to $F$, $\set{a \in A \midd \delta(a)\cap F\neq \emptyset}$, is independent in $\cF_A$.
    \item[$(iii)$] The set of items incident to $F$, $\set{b \in B \midd \delta(b)\cap F \neq \emptyset}$ is independent in $\cF_B$.
\end{enumerate}
If an edge $\set{a,b}$ is in $F$ we say that item $b$ is assigned to agent $a$.
The weight function $w$ over the edges is initially unknown and revealed over sequential rounds. An algorithm for this problem starts with an empty set of edges $\ALG$. In round $i$, one or more edges may be revealed, and at the end of the round, the algorithm must immediately and irrevocably decide to add a subset $F_i$ of the edges revealed in that round to the current solution $\ALG$, under the condition that $\ALG$ is always a feasible matching. The goal is to select a maximum-weight feasible matching. We focus on two versions of the model, differing in how rounds are defined:
\begin{itemize}\itemsep0em
  \item[(EA)] \textbf{Edge Arrival:} at each round, a single new edge reveals its weight.
  \item[(AA)] \textbf{Agent Arrival:} at each round, the set $\delta(a)$ of all edges incident to some agent $a\in A$ reveal their weights simultaneously.
\end{itemize}

In the \emph{free-order} setting, the algorithm adaptively chooses the next
object to be revealed: the next edge in (EA) or the next agent in (AA). This formulation is remarkably powerful: it generalizes several settings
previously studied in the literature, even when $\cF_A$ and $\cF_B$ are
simple constraints such as matroids. We briefly discuss these connections below.

\subsection{Relation to Existing Models}

We begin by noting that the free-order edge-arrival variant is exactly the \emph{free-order secretary problem} over the independence system of feasible matchings induced by \((G,\cI_A,\cI_B)\). Motivated by this viewpoint, we develop our results and state existing ones for the free-order model on general independence systems. Let us focus on the case when both $\cF_A$ and $\cF_B$ are matroids. We distinguish between different cases according to their respective matroid classes.

\paragraph{$\cF_A$ and $\cF_B$ are both free matroids.}
In this setting, the edge arrival model coincides with the bipartite matching secretary problem, for which there exists a $1/4$-competitive (polynomial-time) algorithm even in random-order~\cite{marinkovic-soto-verdugo}. For the agent arrival model, there is an optimal $1/e$-competitive algorithm in random-order~\cite{kesselheim-secretary-matching}.

\paragraph{$\cF_A$ is a general matroid and $\cF_B$ is a free matroid.}
For the specific case in which the graph $G$ itself is a matching, both the edge and the agent-arrival model coincide with the free-order matroid secretary problem, for which there exists a $1/4$-competitive algorithm~\cite[Theorem~\ref{thm:jsz-free-order-matroid}]{jaillet-soto-zenklusen-arxiv}. For general graphs, neither version has been studied before. It is worth noting that this setting corresponds to the intersection of a general matroid and a partition matroid, which inherits results from the more general case of intersecting two arbitrary matroids.

\paragraph{$\cF_A$ is a free matroid and $\cF_B$ is a general matroid.} Here, the edge arrival model is the same as in the above setting where $\cF_A$ is a general matroid and $\cF_B$ is the free matroid.
Surprisingly, the agent arrival model for this case is at least as hard as the classical matroid secretary problem. To see this, let $G$ be a complete bipartite graph. The adversary can choose a random perfect matching $X$ of the graph and assign non-zero weights only to edges in $X$. Every time we observe the weights of edges incident to an agent in any order, we only observe one edge with positive weight, and we can only select it at that moment. Since all agents look the same before they arrive, the order in which they arrive does not matter, and we are forced to select them uniformly at random. Therefore, we recover the classical matroid secretary problem.

\paragraph{$\cF_A$ and $\cF_B$ are both general matroids.} Note that the set of edges we can choose at the same time actually forms an independent set in the intersection of $\cF_A$ and $\cF_B$. To see this, consider the following standard construction. Let $M_A = (E, \cJ_A)$ be the matroid where, for each agent $a \in A$, all edges in $\delta(a)$ are treated as parallel copies of the same element in $\cF_A$. This way, each agent $a$ can be matched to at most one item $e$ and the constraint for $\cF_A$ is satisfied. Using the same approach, define another matroid $M_B = (E, \cJ_B)$ using the item side. By construction, any feasible matching is an independent set that belongs to both $M_A$ and $M_B$. In the edge arrival model, no constant-competitive guarantee was known for this setting prior to our work. We obtain a $1/64$-competitive algorithm (Corollary \ref{cor:alg-guarantee-matroidintersection}) for this setting.

In the agent-arrival model, this framework generalizes the case where $\cF_A$ is the free matroid and $\cF_B$ is an arbitrary matroid, and is therefore at least as hard as the classical matroid secretary problem under a uniformly random order. Nevertheless, we will see that if the matroid constraint is simple enough (see Section \ref{sec:agent-arrival}) we can achieve a constant competitive algorithm for this setting too.

\subsection{Additional definitions}

\paragraph{Competitive ratio.}\; Let $(S,\cI)$ be an independence system and let $w:S\to\RR_+$ be injective, that is, all weights are distinct. Fix a maximum-weight basis $\OPT=\OPT(S)\in\cI$; note that for matroids with injective weights, this basis is unique, so the choice for that case is canonical.
We say an algorithm for any of the considered variants of the secretary problem is $\alpha$-\emph{probability-competitive} if, for this fixed $\OPT$,
every $e\in\OPT$ is included in the algorithm’s output $\ALG\in\cI$ with probability
at least $\alpha$. We say the algorithm is $\alpha$-\emph{utility-competitive} if the output satisfies $\E[w(\ALG)]\ge \alpha\,w(\OPT)$.
Note that
$\alpha$-probability-competitiveness implies $\alpha$-utility-competitiveness, since
\[
\E[w(\ALG\cap\OPT)]=\sum_{e\in\OPT} w(e)\Pr[e\in\ALG]\ge\alpha \cdot w(\OPT).
\]

\paragraph{Combinations of independence systems.} \; In most of our results we work with independence systems obtained from simpler ones via an operation we call \emph{combination}, which strictly extends the usual \emph{intersection}. Let $\{(S_j,\mathcal{I}_j)\}_{j=1}^M$ be independence systems. Their \emph{combination} is the system $(S,\mathcal{I})$ where
\[
S=\bigcup_{j=1}^M S_j
\qquad\text{and}\qquad
\mathcal{I}=\{\,I\subseteq S:\ \forall j\in[M],\ I\cap S_j\in\mathcal{I}_j\,\}.
\]
Note that if all $S_j$ coincide, the combination reduces to the intersection $\bigcap_{j=1}^M \mathcal{I}_j$ on that common ground set.

\paragraph{Example: Matchings and $k$-matchoids.}
For a graph $G=(V,E)$, the family of matchings $\mathcal{M}=(E,\mathcal{I})$ with
\(
\mathcal{I}=\{\,F\subseteq E:\ \forall v\in V,\ |\delta(v)\cap F|\le 1\,\}
\)
is exactly the combination of the rank-1 uniform matroids at each vertex, $\mathcal{M}_v=(\delta(v),\mathcal{I}_v)$ with
\(
\mathcal{I}_v=\{\,F_v\subseteq \delta(v):\ |F_v|\le 1\,\}.
\)
Each edge of $E$ belongs to exactly two components (those of its endpoints).
More generally, $k$-matchoids are exactly combinations of matroids in which each element belongs to at most $k$ components (see Section~\ref{sec:higher-order-systems}); these strictly generalize $k$-matroid intersections.

\paragraph{Example: Generalized Assignment Problem (GAP).}
Let $A$ be agents, $B$ jobs, and $E=A\times B$, where $(a,b)$ means assigning job $b$ to agent $a$. Each agent $a$ has time budget $T_a\ge 0$, and processing $b$ on $a$ consumes $t_a(b)\ge 0$. For each $a\in A$, define a knapsack system on $\delta(a)=\{(a,b):b\in B\}$ by $\mathcal{I}_a=\{F\subseteq \delta(a):\sum_{(a,b)\in F} t_a(b)\le T_a\}$. For each $b\in B$, define a rank-1 uniform matroid on $\delta(b)=\{(a,b):a\in A\}$ by $\mathcal{I}_b=\{F\subseteq \delta(b):|F|\le 1\}$. The feasible assignments for the GAP problem are exactly those in the \emph{combination} of the per-agent knapsacks $(\delta(a),\mathcal{I}_a)$ and the per-job rank-1 uniform matroids $(\delta(b),\mathcal{I}_b)$.

\section{Free-Order Secretary Problem on \texorpdfstring{$k$}{k}-Matroid Intersection} \label{sec:k-matroid-intersection}

In this section, we present our algorithm for $k$-matroid intersection. Before that, we briefly recall the $1/4$-competitive free-order matroid secretary algorithm of Jaillet, Soto, and Zenklusen~\cite{free-order-secretary, jaillet-soto-zenklusen-arxiv}.

\subsection{Warm-up: Free-Order Matroid Secretary}

Let $M = (S,\cI)$ be a matroid given in advance, and let $w\colon S\to\R_+$ be an unknown positive weight function, where, without loss of generality, all elements have distinct weights. For the sake of completeness and to motivate future extensions, we include a proof of the following result that appeared in~\cite{jaillet-soto-zenklusen-arxiv,free-order-secretary}; the algorithm is presented as Algorithm~\ref{alg:free-matroid}, see Figure~\ref{fig:sampling-matroid} for an illustration. 

\begin{algorithm}[t]
\DontPrintSemicolon
\KwData{A matroid $M=(S,\cI)$.}
\KwResult{An independent set $\ALG \in \cI$.}
    $\ALG \gets \emptyset$\;
    \lFor{each $e\in S$}{
        Choose $t_e$ independently and uniformly from $(0,1)$
    }
    $S_1 \gets \set{e \midd t_e \in (0, 1/2)}$ \;
    $S_2 \gets \set{e \midd t_e \in [1/2, 1)}$ \;
    Observe (without accepting) all elements in $S_1$ \\
    
    $\textbf{unseen} \gets S_2$ \tcp*{The set of elements not yet revealed}
    Sort the elements of $S_1$ in decreasing order of weights as $e_1,\dots, e_m$ \;
    \For{$j = 1$ to $m$}{
        Let $Q = \textbf{unseen} \cap \text{span}(\{e_1,\dots, e_j\})$ \;
        \For{each $f \in Q$ in arbitrary order}{
            $\textbf{unseen} \gets \textbf{unseen} - f$ \tcp*{Reveal $f$}
            \lIf{$w(f) > w(e_j)$ and $\ALG + f \in \cI$}{
                $\ALG \gets \ALG + f$
            }
        }
      }

    $Q \gets \textbf{unseen}$ \\
    \For{each $f \in Q$  in arbitrary order}{
        \lIf{$\ALG + f \in \cI$}{
            $\ALG \gets \ALG + f$
        }
    }
    
\Return $\ALG$
\caption{Free-order Matroid Secretary \cite{jaillet-soto-zenklusen-arxiv}}\label{alg:free-matroid}
\end{algorithm}

\begin{theorem}[Jaillet, Soto, Zenklusen \cite{jaillet-soto-zenklusen-arxiv,free-order-secretary}]\label{thm:jsz-free-order-matroid} 
Algorithm~\ref{alg:free-matroid} is $1/4$-probability competitive for the free-order matroid secretary problem.
\end{theorem}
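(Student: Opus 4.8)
The plan is to establish the (stronger) probability‑competitiveness claim directly: fix an arbitrary $e\in\OPT(S)$ and show $\Pr[e\in\ALG]\ge 1/4$, which is exactly the assertion of the theorem. I will use repeatedly the standard matroid facts that, under distinct weights, $\Greedy(S)$ is the unique maximum‑weight basis and $e\in\OPT(S)$ if and only if $e\notin\mathrm{span}(\{f\in S:\ w(f)>w(e)\})$; in particular $e\in\OPT(T)$ for every $T$ with $e\in T\subseteq S$, which I apply with $T=S_1\cup\{e\}$.

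First I would condition on $e\in S_2$, an event of probability exactly $1/2$ — the first of the two factors in $1/4$. Conditioned on this, I claim $e$ is always \emph{eligible}: it is processed at a well‑defined moment — inside some iteration $j$ of the main for‑loop, or in the final clean‑up loop viewed as a last iteration with threshold $-\infty$ — and it passes the weight test there. If $e\notin\mathrm{span}(S_1)$ this is immediate, since every set $Q$ built in the main loop lies in $\mathrm{span}(S_1)$, so $e$ surfaces only in the clean‑up loop, which imposes no threshold. If $e\in\mathrm{span}(S_1)=\mathrm{span}(\OPT(S_1))$, let $C$ be the unique circuit of $\OPT(S_1)+e$ and let $g^*$ be the lightest element of $C-e\subseteq\OPT(S_1)$. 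Using $\mathrm{span}(\{e_1,\dots,e_j\})=\mathrm{span}(\OPT(S_1)\cap\{e_1,\dots,e_j\})$ together with the uniqueness of $C$, one checks that $e\in\mathrm{span}(\{e_1,\dots,e_j\})$ precisely when $C-e\subseteq\{e_1,\dots,e_j\}$, i.e. precisely when $w(e_j)\le w(g^*)$; hence $e$ is first revealed at the iteration $j_e$ with $e_{j_e}=g^*$. Moreover $w(g^*)<w(e)$: otherwise $C-e$ would consist of elements heavier than $e$ that span $e$, contradicting $e\in\OPT(S)$ via the characterization above. So $e$ clears the threshold $w(e_{j_e})=w(g^*)$.

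The remaining and decisive step is to show that, still conditioning on $e\in S_2$, the element $e$ is accepted at iteration $j_e$ with probability at least $1/2$. Write the partial solution at the moment $e$ is examined as $A'\cup D$, where $A'$ was accepted in earlier iterations and $D$ within iteration $j_e$ before $e$. All of $A'$ lies in $\mathrm{span}(\{e_1,\dots,e_{j_e-1}\})$ (or in $\mathrm{span}(S_1)$ in the clean‑up case), a flat avoiding $e$, so $e\notin\mathrm{span}(A')$ and the test $\ALG+e\in\mathcal{I}$ fails only if $e\in\mathrm{span}(A'\cup D)$, i.e. only if some element examined before $e$ within iteration $j_e$ blocks it. Every such competitor lies in $S_2$ and in the ``shell'' $\mathrm{span}(\{e_1,\dots,e_{j_e}\})\setminus\mathrm{span}(\{e_1,\dots,e_{j_e-1}\})$ — the parallel class of $e_{j_e}$ in the contraction $M/\{e_1,\dots,e_{j_e-1}\}$ — and these competitors are examined in uniformly random order. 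I would bound the blocking probability by $1/2$ using \emph{both} sources of randomness at once, the uniform intra‑iteration order and the random sample, since conditioning on the full sample $S_1$ does not suffice: already for a rank‑one parallel extension, where the shell is the whole ground set and $e$ wins only if it precedes all $\ell-1$ sample elements heavier than $e_{j_e}$, the conditional acceptance probability given $S_1$ is just $1/(\ell-1)$, far below $1/2$ in general; but $\ell-1$ is a shifted‑geometric random variable and averaging over the sample restores $\E[1/(\ell-1)]=\ln 2\ge 1/2$. A symmetrization/averaging argument of this flavour gives $\ge 1/2$ in general, and combining the two factors yields $\Pr[e\in\ALG]\ge\tfrac12\cdot\tfrac12=\tfrac14$.

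I expect this last step to be the main obstacle. The matroid bookkeeping behind eligibility — locating $e$'s iteration through the fundamental circuit $C$ and checking the threshold — is delicate but essentially mechanical. The $1/2$ lower bound on the acceptance probability of an eligible element, by contrast, is the one place where a per‑sample analysis genuinely fails, so the argument has to be run jointly over the random partition $(S_1,S_2)$ and the intra‑iteration orderings; designing that coupling so that the worst case is exactly the parallel‑extension instance is the heart of the proof.
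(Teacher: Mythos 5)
Your first factor of $1/2$ (from $\Pr[e\in S_2]$) and your eligibility argument — locating the iteration via the lightest element $g^*$ of the fundamental circuit of $\OPT(S_1)+e$, and checking $w(g^*)<w(e)$ — are correct, and you are right that a per-sample analysis of the second factor fails (your rank-one example makes this vivid). The genuine gap is exactly where you place it: you assert that ``a symmetrization/averaging argument of this flavour gives $\ge 1/2$ in general'' and that ``the worst case is exactly the parallel-extension instance,'' but you supply neither the coupling nor a proof of that extremality, and it is not evident that the blocking-competitor/sample-size averaging you sketch can be carried out for a general matroid.

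The paper's route to the second $1/2$ is quite different and renders the intra-iteration randomization irrelevant. Alongside $h_1$ — the critical element of $S_1-e$ that governs the iteration $j^*$ at which $e$ first enters $Q$, with $e_{j^*}=h_1$ (this is essentially your $g^*$) — the paper defines the mirror quantity $h_2$ on $S_2-e$: the heaviest $h\in S_2-e$ such that the elements of $S_2-e$ of weight at least $w(h)$ span $e$, with $h_i=\bot$ and $w(\bot)=0$ if the corresponding half does not span $e$. Since $S_1-e$ and $S_2-e$ are exchangeable, and this is independent of the event $\{e\in S_2\}$, we get $\Pr[w(h_1)\ge w(h_2)\mid e\in S_2]\ge 1/2$. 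The decisive observation is that on the event $\{w(h_1)\ge w(h_2)\}\cap\{e\in S_2\}$, acceptance of $e$ is \emph{deterministic}: when $e$ is first revealed at iteration $j^*$ (or in the cleanup loop if $h_1=\bot$), every element already in $\ALG$ lies in $S_2-e$ and has weight strictly greater than $w(e_{j^*})=w(h_1)\ge w(h_2)$, and by the maximality defining $h_2$, the elements of $S_2-e$ above any threshold exceeding $w(h_2)$ do not span $e$; hence $\ALG+e\in\cI$. No averaging over $|S_1|$ and no use of the uniform order within $Q$ is required — the only probabilistic step is the exchangeability of the two halves of $S-e$. This comparison of the sample threshold $h_1$ against the selection-side threshold $h_2$ is the ingredient your proposal is missing; without it, the claimed $1/2$ bound on the conditional acceptance probability is unsupported.
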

\begin{proof}
Let $\OPT$ be the unique maximum weight basis of $S$ and fix $f \in \OPT$. Let $h_1$ be the heaviest element in $S_1 - f$ such that $f$ is spanned by the elements of higher weight than $h_1$ in $S_1 - f$. In other words,
\[
h_1 \coloneqq \argmax \set{w(h) \midd h \in S_1 - f\ \text{such that}\ f \in \text{span}(\set{g \in S_1 - f \midd w(g) > w(h)}}.
\]
Similarly, let $h_2$ be the heaviest element in $S_2 - f$ such that $f$ is spanned by the elements of higher weight than $h_2$ in $S_2 - f$. In other words,
\[
h_2 \coloneqq \argmax \set{w(h) \midd h \in S_2 - f\ \text{such that}\ f \in \text{span}(\set{g \in S_2 - f \midd w(g) > w(h)}}.
\]
If there are no such elements, we set $h_1 = \bot$ and $h_2 = \bot$, respectively, with $w(\bot) = 0$. 

Since $S_1 - f$ and $S_2 - f$ are identically distributed, we have that $\Pr[w(h_1) \geq w(h_2)] = 1/2$, regardless of whether one or both of $h_1$ and $h_2$ are $\bot$. Next, we condition on the independent event that $f \in S_2$, and conclude that both $w(h_1) \geq w(h_2)$ and $f \in S_2$ occur together with probability at least $1/4$. In what follows, we condition on these two events occurring. Since $f \in \OPT$, we know that $f \not\in \text{span}(\set{g \in S \midd w(g) > w(f)}$). Thus, $w(f) > w(h_1) \geq w(h_2)$.

Recall from the description of the algorithm that $S_1$ is ordered in decreasing order of weights as $e_1,\dots, e_m$. Assume that $h_1$ is the $j^*$-th highest weight element of $S_1$, that is, $h_1 = e_{j^*}$, where we set $j^* = m+1$ if $h_1 = \bot$. By construction, $f$ is revealed by the algorithm precisely in iteration $j^*$, when it first enters the set $Q$. Therefore, at the moment when $f$ is revealed, $f$ satisfies both $w(f) > w(e_{j^*}) = w(h_1)$ and $\ALG + f \in \cI$, and thus $f$ is included in $\ALG$. We thus conclude that for every $f\in \OPT$, $\Pr[f \in \ALG] \geq 1/4$.
\end{proof}

Theorem~\ref{thm:jsz-free-order-matroid} shows that, for a single matroid, it suffices to use half of the elements as a sample to determine both an order in which to reveal the remaining elements and thresholds $w(e_j)$ for those called in the $j$-th group, in order to obtain a constant-factor competitive algorithm. The algorithm then processes the elements in this order and greedily selects those whose value exceeds their threshold. Interestingly, the order in which elements are revealed depends on the \emph{span} of prefixes of the sampled set, that is, $\set{e_1,\dots,e_j}$, when the elements are sorted by weight. Our generalization to $k$-matroid intersection systems builds directly on this observation.

\subsection{Obstacles and Workarounds}

Next, we consider the $k$-matroid intersection setting. Let $M_i = (S,\cI_i)$ be a matroid for $i\in[k]$, and let $M = (S,\cI)$, with $\cI = \cap_{i=1}^k \cI_i$, be their intersection. When we try to generalize Algorithm~\ref{alg:free-matroid} to the intersection of $k$ matroids, two main obstacles arise. 
First, the spans of the prefixes $\set{e_1, \dots, e_j}$ across the matroids can be completely unrelated. In fact, the best order to reveal the non-sampled elements for one matroid may even be the reverse of the best order for another. Second, even if we had a favorable order, there is no hope for a constant-factor probability-competitive algorithm. This is because matroids satisfy a minor-optimality condition that does not extend to matroid intersection. Specifically, for a single matroid, if $f \in \OPT(S)$, then $f \in \OPT(X)$ for any subset $X \subseteq S$ containing $f$. For matroid intersection, however, this monotonicity condition can fail; $\OPT(S)$ may contain $f$, but removing a single element $e \in S-f$ can lead to $\OPT(S-e)$ omitting $f$ entirely. Thus, we have to rely on utility-competitiveness instead.

\paragraph{Union-Span and Critical Elements.}

To circumvent the first difficulty, a natural approach is to use the \emph{union-span}: for each prefix $\set{e_1, \dots, e_j}$, we ``call'' all unseen elements that are in the union of all individual matroid span functions. Notice that the elements that are in the intersection of the optimum matroid bases of all individual matroids are exactly those elements not union-spanned by elements in $S$ of higher weight; we call these the \emph{critical} elements.

A natural first idea is to use Algorithm~\ref{alg:free-matroid} to select each critical element with probability $1/4$. However, even for simple matroid intersection systems, such as bipartite matchings, the set of critical elements may have insufficient weight relative to the optimum independent set. To see this, consider the following bipartite matching system over a simple path $e_1,  \dots, e_\ell$ with weights $1 + \eps > w(e_1)  > \dots > w(e_\ell) = 1$. Note that, for each $i \geq 2$, every element $e_i$ is \emph{union-spanned} by $e_{i-1}$, so $e_1$ is the only critical element. Therefore, even though the weight of a maximum matching is roughly $\ell / 2$, the only critical element has weight $1$. This shows that, in general, the set of critical elements may have arbitrary low weight compared to the optimum basis. Given this, one might expect that the issue could be resolved by considering the set of critical elements with respect to a random subset of $S$ instead of $S$ itself. This approach fails as well, however: for instance, there may be many parallel elements with the same weight, or elements that are loops in some matroids but not in others.

\paragraph{Greedy-Relevant Elements.}

To address the second issue mentioned above, we use a different set, first defined by Feldman, Svensson and Zenklusen~\cite{fsz-secretary-framework-matroid-intersection}.

\begin{definition}[Greedy-Relevant Elements]\label{def:greedy-relevant-elements}
Let $Y \subseteq S$. We say that an element $y \in S \setminus Y$ is \emph{greedy-relevant with respect to $Y$} if $y$ belongs to the output of the weighted-greedy algorithm on $Y + y$. We use $\Rel(Y)$ to denote the set of greedy-relevant elements of $Y$. In other words,
\[
\Rel(Y) = \set{y \in S \setminus Y \midd y \in \Greedy(Y+y)}.
\]
\end{definition}
Feldman, Svensson and Zenklusen \cite{fsz-secretary-framework-matroid-intersection} showed that if every element of $S$ is contained in $Y$ with probability $p$ for some large enough $p > 0$, then the set of critical elements of $\Rel(Y)$ has high enough weight compared to the weight of the set $\Greedy(Y)$ that the weighted-greedy algorithm selects on $Y$. More precisely, they showed the following lemma.

\begin{lemma}[Core Lemma for Matroid Intersection]\label{lem:fsz-original-lemma}
Let $(S,\cI)$ be a $k$-matroid intersection system and $Y = Y(p)$ be a random set that contains each element of $S$ independently with probability $p$ for some $p\in[0,1]$. Then 
\begin{align*}
\E\brk{w\prn{\bigcap_{i=1}^k(\OPT_i(\Rel(Y))}} &\geq \E\brk{\frac{(1-(1-p)k))(1-p)}{p}w(\Greedy(Y))}\\
&\geq \frac{(1-(1-p)k)(1-p)}{k}w(\OPT(S)),
\end{align*}
where $\OPT_i$ represents the optimum weight basis in the $i$-th matroid, and $\OPT$ is the optimum weight basis of the $k$-matroid intersection system. Choosing $p = 1 - \frac{1}{2k}$ yields
\[
\E\brk{w\prn{\bigcap_{i=1}^k(\OPT_i(\Rel(Y))}} \geq \frac{1}{4k^2} \: w(\OPT(S)).
\]
\end{lemma}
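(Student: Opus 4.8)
The plan is to prove the two inequalities in turn. For the second inequality, observe that by Proposition~\ref{prop:greedy-matroid-ksystem} the weighted-greedy algorithm on the full ground set $S$ is a $k$-approximation, so $\E[w(\Greedy(Y))]$ can be lower bounded by relating it to $w(\OPT(S))$. Concretely, $Y=Y(p)$ contains $\OPT(S)$ restricted to itself with each element present independently with probability $p$; one shows $\E[w(\Greedy(Y))]\ge p\cdot\E[w(\OPT(Y))]\ge p\cdot\ldots$, and the cleanest route is to argue directly that $\E[w(\Greedy(Y))]\ge \tfrac{p}{k}\,w(\OPT(S))$ — since greedy on $Y$ is $k$-approximate with respect to $w(\OPT(Y))$, and $\E[w(\OPT(Y))]\ge p\,w(\OPT(S))$ because $\OPT(S)\cap Y$ is a feasible set in $Y$ with expected weight $p\,w(\OPT(S))$. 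Combining with the prefactor $\frac{(1-(1-p)k)(1-p)}{p}$ from the first line then gives $\frac{(1-(1-p)k)(1-p)}{k}w(\OPT(S))$, which is the claimed second bound. The final numerical consequence with $p=1-\tfrac{1}{2k}$ is a one-line calculation: $1-(1-p)k = 1-\tfrac12 = \tfrac12$ and $1-p=\tfrac{1}{2k}$, so the coefficient is $\tfrac{1}{2}\cdot\tfrac{1}{2k}\cdot\tfrac{1}{k}=\tfrac{1}{4k^2}$.

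The substantive work is the first inequality, $\E[w(\bigcap_{i=1}^k \OPT_i(\Rel(Y)))]\ge \E\big[\tfrac{(1-(1-p)k)(1-p)}{p}w(\Greedy(Y))\big]$, and here I would invoke the structural machinery of Feldman, Svensson and Zenklusen. The key point is to relate the intersection of the individual matroid optima $\bigcap_i \OPT_i(\Rel(Y))$ to $\Greedy(Y)$ element by element. I would condition on $Y$, process the elements of $S$ in decreasing weight order, and track for each element $g\in\Greedy(Y)$ whether $g$ is also \emph{critical} with respect to $\Rel(Y)$ in every matroid, i.e.\ $g\notin\mathrm{span}_i(\{g'\in\Rel(Y): w(g')>w(g)\})$ for all $i$. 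An element $g\in\Greedy(Y)$ fails to be in $\bigcap_i\OPT_i(\Rel(Y))$ only if it is $\mathrm{span}_i$-covered in some matroid $i$ by heavier greedy-relevant elements, and one charges this failure to those heavier elements. The crucial combinatorial fact is that each greedy-relevant element in $\Rel(Y)$ can be "responsible" for covering only a bounded number of greedy elements across the $k$ matroids (a counting argument using that each matroid $M_i$ contributes at most one covering element per $g$, together with a careful accounting of how many times an element of $\Rel(Y)$ is chosen into $\OPT_i(\Rel(Y))$). Feeding this into a probabilistic charging argument over the randomness of $Y$ — where the factor $(1-(1-p)k)$ captures the probability that a fixed heavier greedy-relevant element does \emph{not} cause trouble in any of the $k$ matroids, and the $(1-p)/p$ factor comes from the standard "this element is outside $Y$ but the heavier ones are in $Y$" conditioning — yields the claimed bound.

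The main obstacle I anticipate is the charging/counting step in the first inequality: making precise how greedy-relevant elements of $Y$ relate, across all $k$ matroids simultaneously, to the elements that greedy itself selects, and showing that the "loss" incurred by passing from $\Greedy(Y)$ to $\bigcap_i\OPT_i(\Rel(Y))$ is controlled by exactly the factor $(1-(1-p)k)(1-p)/p$. This requires exploiting the precise definition of greedy-relevance (Definition~\ref{def:greedy-relevant-elements}) — in particular that $y\in\Rel(Y)$ iff greedy on $Y+y$ picks $y$, which in the $k$-matroid intersection case means $y$ is independent with the heavier elements already selected in \emph{every} matroid — and then decoupling the per-matroid span conditions using independence of the indicators $\{e\in Y\}$. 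The rest of the argument (the $k$-approximation bound for greedy, the reduction to $w(\OPT(S))$, and the numerical substitution $p=1-\tfrac{1}{2k}$) is routine once this core charging lemma is in place; indeed, since Lemma~\ref{lem:fsz-original-lemma} is explicitly attributed to~\cite{fsz-secretary-framework-matroid-intersection}, I would present the first inequality by citing their argument and focus the written proof on deriving the two clean consequences stated in the lemma.
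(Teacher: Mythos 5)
Your derivation of the second inequality and the numerical substitution at $p=1-\frac{1}{2k}$ is correct and matches what the paper does for its generalized core lemma (compare the proof of Theorem~\ref{thm:core-lemma-vs-opt}): $\E[w(\Greedy(Y))]\ge\frac1k\E[w(\OPT(Y))]\ge\frac1k\E[w(\OPT(S)\cap Y)]=\frac{p}{k}w(\OPT(S))$, and then plugging in. Since Lemma~\ref{lem:fsz-original-lemma} is quoted from Feldman--Svensson--Zenklusen, your decision to cite them for the first inequality and only spell out the consequences is also in line with the paper's own treatment.

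That said, the informal sketch you offer of the FSZ argument does not accurately reflect the mechanism. The first inequality is not established by a per-element charging in which $(1-(1-p)k)$ ``captures the probability that a fixed heavier greedy-relevant element does not cause trouble in any of the $k$ matroids.'' In the paper's version (the proof of Lemma~\ref{lem:core-k-growth-general}, which strictly generalizes this statement and recovers the same coefficient since $1-(1-p)k=p-(1-p)(k-1)$), the argument is a coupling: an offline simulation processes elements in decreasing weight order, flips a $p$-biased coin for each element that greedy would accept, and partitions the processed elements into four disjoint sets $R,N,R',N'$ with $R'\cup N'=\Greedy(Y)$, $R\cup N=\Rel(Y)$, $R=\Core(\Rel(Y))$. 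One then proves four \emph{linear} inequalities between the expected sizes, most notably the counting bound $\E[|N'|]\le k\,\E[|R|]+(k-1)\,\E[|N|]$ (in the matroid case, via rank increments), alongside the sampling identities $(1-p)\E[|R'|]=p\,\E[|R|]$ and $(1-p)\E[|N'|]=p\,\E[|N|]$. The coefficient $\frac{(1-(1-p)k)(1-p)}{p}$ emerges purely from algebraically combining these inequalities, not from a decoupled per-element probability. If you intend to present a self-contained proof rather than a citation, you would need to set up this coupling and prove the counting bound; the vague ``charge failure to heavier elements, decoupling via independence of $\{e\in Y\}$'' plan would not directly yield the correct linear relations, because the events governing whether a greedy element is critical in each matroid are not independent across matroids and the accounting is done in aggregate, not per element.
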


\paragraph{Hull and Core.}
Given Lemma~\ref{lem:fsz-original-lemma}, we are almost ready to describe our algorithm for $k$-matroid intersection. The main difference from the algorithm for a single matroid is that we first sample a relatively large set of elements $Y$ in order to identify a good subset $\Rel(Y)$ of greedy-relevant elements, and then execute the free-order algorithm on $\Rel(Y)$. We note that this algorithm can be applied in a black-box manner to more general independence systems, assuming that one has appropriate notions of \emph{critical} elements and what it means for an element to be \emph{spanned} by a set. We will make use of this observation in later sections; for this reason, the algorithm is presented in its general form.

To describe our algorithm in its general form, we need to generalize the notion of \emph{span} from matroids and also describe a set of \emph{core} elements that will play the role of $\OPT$ for more general independence systems.

\begin{definition}[Primitive Hull]\label{def:prim-hull}
Let $(S, \cI)$ be an independence system. We define the \emph{primitive hull} function $\PHull \colon 2^S\to 2^S$ by
\[
\PHull(X) \coloneqq X \cup \set{e \in S \midd e \text{ is in a circuit of } X+e}.
\]
\end{definition}

Notice that, for a single matroid constraint, $\PHull$ is equivalent to the matroid $\text{span}$ function. In fact, $\PHull$ satisfies the following useful properties.
\begin{enumerate}[label=(H\arabic*)]\itemsep0em
    \item (Inclusion) For all $X \subseteq S, \: \: X \subseteq \PHull(X)$. \label{Hinclusion}
    \item (Monotonicity) For all $X, Y \subseteq S$ with $X\subseteq Y$, we have $\PHull(X) \subseteq \PHull(Y)$. \label{HMonotonicity}
    \item (Basis Extension) For all $X \subseteq S$ and $e \in S \setminus X$, if $e \notin \PHull(X)$ then for any $I \subseteq X$ such that $I \in \cI$, we have $I+e \in \cI$. In particular, if $X \in \cI$, then $X+e \in \cI$. \label{Hextension}
\end{enumerate}
On the other hand, $\PHull$ does not necessarily satisfy some properties of the matroid $\text{span}$ function, such as $\text{span}(\text{span}(X)) = \text{span}(X)$.

For combinations of systems, however, we use the union of the primitive hulls of each system. We call this the \emph{hull}.
\begin{definition}[Hull]\label{def:hull}
Let $\calF = (S,\cI)$ be a combination of a given set of systems $\set{(S_j,\cI_j)}_{j = 1}^M$. We define the \emph{hull} function $\Hull \colon 2^S\to 2^S$ of the pair $\calF, \set{(S_j,\cI_j)}_{j = 1}^M$ by
\[
\Hull(X) \coloneqq \bigcup_{j = 1}^M \PHull_j(X \cap S_j),
\]
where $\PHull_j$ is the primitive hull of $(S_j, \cI_j)$.
\end{definition}
It is worth mentioning that, if $\cF$ is a combination of two or more systems, $\PHull \subseteq \Hull$ but the reverse is not true, even for the intersection of two matroids. To see this, suppose that $\cF$ is the intersection of two matroids $M_1$ and $M_2$ and consider three elements $x,y$ and $a$ such that $\set{x,y,a}$ is a circuit in $M_1$ and $\set{x,y}$ is a circuit in $M_2$. Let $X= \set{x,y}$. Then, $a \in \PHull_{M_1}(X)$, which implies that $a \in \Hull_{\cF}(X)$, but $a \notin \PHull_{\cF}(X)$ since $X = \set{x,y}$ is itself a circuit in $\cF$ and thus adding $a$ to $X$ does not create any new circuits.
However, notice that, for $k$-matroid intersection, we have $\PHull_j = \text{span}_j$ for every component system $(S_j, \cI_j)$, and thus $\Hull$ is equivalent to the union-span.

We notice that the function $\Hull$ just defined satisfies all \ref{Hinclusion}, \ref{HMonotonicity} and \ref{Hextension}. 
The inclusion and monotonicity properties of $\Hull$ follow directly from the corresponding properties for the individual primitive hulls. To see the basis extension property, observe that, for $X \subseteq S$ and $e \in S \setminus X$, we have
\begin{align*}
e \notin \Hull(X) &\iff \forall j, \: \: e \notin \PHull_j(X \cap S_j) \\
&\iff \forall j, \: \: \forall I_j \subseteq X \cap S_j \text{ s.t. } I_j \in \cI_j, \: \: I_j + e  \in \cI_j \\
&\iff \forall I \subseteq X \text{ s.t. } I \in \cI, \: \: I + e  \in \cI
\end{align*}

Next, we define the set of \emph{critical} elements for larger families of systems beyond matroids. We call this the set of \emph{core} elements.
\begin{definition}[Primitive Core]\label{def:prim-core}
Let $(S, \cI)$ denote an independence system and let $w \colon S \to \R_+$ denote an arbitrary injective weight function. For any set $X \subseteq S$ and element $e$, let $A_e(X) = \set{f \in X \midd w(f) > w(e)}$ denote the set of elements with weight higher than $X$. We define the set of \emph{primitive core} elements as
\[
\PCore(X) = \set{e \in X \midd e \not\in \PHull\prn{A_e(X)}}.
\]
\end{definition}
In other words, an element $x \in X$ that is in the primitive core of $X$ cannot be the smallest element of any circuit in $X$. Equivalently, $x$ is in the primitive core of $X$ if, for every independent set $I \in \cI$ with $I \subseteq \set{y \in X \midd w(y) > w(x)}$, we have $I + x \in \cI$. Note that the last formulation implies that $x$ belongs to every basis of $\set{y \in X \midd w(y) \geq w(x)}$.
Recall that $\Greedy(X)$ denotes the set of elements selected by the weighted-greedy algorithm on $X$.
\begin{lemma}\label{lem:prim-inclusion}
For all $X\subseteq Y \subseteq S$, we have $X \cap \PCore(Y) \subseteq \PCore(X) \subseteq \Greedy(X).$
\end{lemma}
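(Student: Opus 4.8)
The statement is a chain of two inclusions, and I would treat them separately, relying only on the elementary properties of $\PHull$ listed as \ref{Hinclusion}--\ref{Hextension} together with the definition of the weighted-greedy algorithm.

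\emph{First inclusion: $X\cap\PCore(Y)\subseteq\PCore(X)$.} Fix $e\in X\cap\PCore(Y)$. Since $X\subseteq Y$, the sets of heavier elements nest: $A_e(X)=\{f\in X:w(f)>w(e)\}\subseteq\{f\in Y:w(f)>w(e)\}=A_e(Y)$. By the monotonicity property \ref{HMonotonicity} of $\PHull$ this gives $\PHull(A_e(X))\subseteq\PHull(A_e(Y))$. Now $e\in\PCore(Y)$ means precisely $e\notin\PHull(A_e(Y))$, so $e\notin\PHull(A_e(X))$ as well; since also $e\in X$, this is exactly the definition of $e\in\PCore(X)$. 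This step is purely set-theoretic and should be a couple of lines.

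\emph{Second inclusion: $\PCore(X)\subseteq\Greedy(X)$.} Fix $e\in\PCore(X)$ and run the weighted-greedy algorithm on $X$, which processes elements in strictly decreasing weight order (weights are distinct) and maintains an independent set $G\in\cI$. Consider the moment greedy reaches $e$: by then it has examined exactly the elements of $X$ heavier than $e$, so the current partial solution satisfies $G\subseteq A_e(X)$ and $G\in\cI$. Because $e\in\PCore(X)$ we have $e\notin\PHull(A_e(X))$, so the basis-extension property \ref{Hextension} (applied with $A_e(X)$ in the role of the set and $G$ in the role of the independent subset) yields $G+e\in\cI$. Hence greedy adds $e$, i.e.\ $e\in\Greedy(X)$. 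Note this argument works verbatim with $\Hull$ in place of $\PHull$ for combinations of systems, since $\Hull$ also satisfies \ref{Hextension}.

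\textbf{Where the difficulty (if any) lies.} Neither inclusion is technically hard; the only thing requiring care is bookkeeping in the second part -- namely justifying that when greedy reaches $e$ its partial solution lies entirely in $A_e(X)$, which is immediate from processing in decreasing weight order and distinctness of weights. The substantive content of the lemma is really the observation that the "primitive core" is the right combinatorial surrogate for $\OPT$ in general independence systems: it is preserved under restriction (first inclusion) and is always picked up by greedy (second inclusion), which is what makes the black-box algorithm of Theorem~\ref{thm:alg-general} go through once an appropriate core lemma is available. I would state the two inclusions as the two displayed steps and keep the whole proof to under half a page.
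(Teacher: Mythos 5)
Your proposal is correct and follows essentially the same two-step argument as the paper: for the first inclusion the paper phrases it directly in terms of circuits (if $e$ is never the smallest element of a circuit in $Y$, the same holds in $X\subseteq Y$) whereas you invoke monotonicity of $\PHull$, which is the same observation; for the second inclusion both arguments note that greedy's partial solution $G$ just before $e$ is an independent subset of $A_e(X)$ and apply the basis-extension property \ref{Hextension}. The only cosmetic difference is that the paper uses its "equivalent" reformulation of $\PCore$ where you cite \ref{Hextension} explicitly.
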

\begin{proof}
Let $e \in X \cap \PCore(Y)$. For any circuit $C$ with $C \subseteq Y$, $e$ cannot be the smallest element of $C$. In particular, the same holds for every circuit contained in $X$, and so $e \in \PCore(X)$.
Next, let $e \in \PCore(X)$, and suppose we run the weighted-greedy algorithm on $X$. Let $H$ be the set constructed by greedy just before $e$ is considered. Note that $H$ is an independent set contained in $\set{y \in X \midd w(y) > w(x)}$. Since $e \in \PCore(X)$, we have that $H + e \in \cI$, and so $e$ is added to the set constructed by the weighted-greedy algorithm.
\end{proof}

One consequence of Lemma~\ref{lem:prim-inclusion} is that $\PCore(X)$ is always independent for any $X$. However, if $\cF$ is a matroid, we have a stronger property.

\begin{lemma}\label{lem:prim-core-matroids}
Let $\cM = (S,\cI)$ be a matroid. For any $X \subseteq S$, we have
\[
\PCore(X) = \Greedy(X) = \OPT(X),
\]
where $\OPT(X)$ denotes the unique maximum weight basis of $X$.
\end{lemma}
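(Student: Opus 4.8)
The plan is to combine two facts already available: by Proposition~\ref{prop:greedy-matroid-ksystem}, for a matroid we have $\Greedy(X)=\OPT(X)$ (the unique maximum-weight basis), and by Lemma~\ref{lem:prim-inclusion} we always have $\PCore(X)\subseteq\Greedy(X)$. Hence it suffices to establish the reverse inclusion $\Greedy(X)\subseteq\PCore(X)$ in the matroid case. Throughout I would use that in a matroid $\PHull$ coincides with the span operator, so that $e\in\PCore(X)$ is equivalent to $e\notin\text{span}(A_e(X))$, where $A_e(X)=\set{f\in X\midd w(f)>w(e)}$.

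To prove $\Greedy(X)\subseteq\PCore(X)$, fix $e\in\Greedy(X)$ and look at the run of the weighted-greedy algorithm on $X$. Let $H$ be the independent set that greedy has built just before $e$ is considered; since the weights are distinct, the elements processed before $e$ are exactly those of $A_e(X)$, so $H=\Greedy(A_e(X))$. I would then show $\text{span}(H)=\text{span}(A_e(X))$: for each $f\in A_e(X)\setminus H$, greedy rejected $f$, meaning that the partial set $H_f\subseteq H$ present at that moment satisfied $H_f+f\notin\cI$, hence $f\in\text{span}(H_f)\subseteq\text{span}(H)$; together with $H\subseteq\text{span}(H)$ this gives $A_e(X)\subseteq\text{span}(H)$, and monotonicity plus idempotence of span yield the reverse containment of spans. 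Now, since $e\in\Greedy(X)$, greedy adds $e$ to $H$, i.e.\ $H+e\in\cI$, which for a matroid with $H$ independent forces $e\notin\text{span}(H)=\text{span}(A_e(X))$, that is, $e\in\PCore(X)$. This proves the missing inclusion.

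Combining $\PCore(X)\subseteq\Greedy(X)\subseteq\PCore(X)$ gives $\PCore(X)=\Greedy(X)$, and Proposition~\ref{prop:greedy-matroid-ksystem} then identifies this common set with $\OPT(X)$, the unique maximum-weight basis, completing the proof. I do not expect a genuine obstacle here: the only slightly delicate point is the standard matroid-greedy fact that the partial greedy solution already spans the set of heavier elements, which I would spell out explicitly as above rather than merely cite; everything else is a direct translation between the $\PHull$/$\PCore$ notation and the classical span characterization of the maximum-weight basis.
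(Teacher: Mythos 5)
Your proof is correct and follows essentially the same approach as the paper: the paper simply asserts in one sentence that ``for a matroid with distinct weights, $e \in \Greedy(X)$ if and only if $e$ is not spanned by elements of higher weight,'' whereas you spell out that equivalence in detail via the standard fact that the partial greedy solution spans the set of heavier elements. Both arguments then conclude via the greedy-optimality of matroids (Proposition~\ref{prop:greedy-matroid-ksystem}).
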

\begin{proof}
The first equality follows since, for a matroid with distinct weights, $e \in \Greedy(X)$ if and only if $e$ is not spanned by elements of higher weight, which exactly corresponds to the definition of $e \in \PCore(X)$. The second equality holds since, for matroids, the greedy algorithm on $X$ returns the optimum basis of $X$.
\end{proof}

This equivalence fails to hold for more general independence systems. In fact, even for the intersection of two matroids, the sets $\PCore(X)$, $\Greedy(X)$, and $\OPT(X)$ may be all different.
For combinations of systems, we use the intersection of the primitive cores of each system. We call this set the \emph{core} and its elements the \emph{critical} elements.
\begin{definition}[Core and Critical Elements]\label{def:core}
Let $\calF = (S,\cI)$ be a combination of systems $\set{(S_j,\cI_j)}_{j = 1}^M$ and $w \colon S \to \R_+$ be an arbitrary weight function. We define the \emph{core} of a set $X \subseteq S$ as
\[
\Core(X) \coloneqq \bigcap_{j = 1}^M \prn{(X \setminus S_j) \cup \PCore_j(X \cap S_j)},
\]
where $\PCore_j$ is the primitive core of $(S_j, \cI_j)$. We call the elements in $\Core(X)$ the \emph{critical} elements of $X$.
\end{definition}
Notice that, for the case of $k$-matroid intersection, we have
\begin{align*}
e \in \Core(X) &\iff \forall j, \: \: e \in \PCore_j(X) \iff \forall j, \: \: e \notin \PHull_j(\set{f \in X \midd w(f) > w(e)}) \\
&\iff \forall j, \: \: e \notin \text{span}_j(\set{f \in X \midd w(f) > w(e)}) \\
&\iff \forall j, \: \: e \in \OPT_j(X).
\end{align*}
In other words, for the intersection of $k$ matroids, $\Core(X)$ is exactly $\bigcap_{j = 1}^k \OPT_j(X)$.
It is easy to see that, for a combination of systems, the relationship between the core and the hull from the component systems extends to the entire combination.
\begin{observation}\label{obs:core-hull-combination}
Let $\calF = (S,\cI)$ be a combination of systems $\set{(S_j,\cI_j)}_{j = 1}^M$ and $w \colon S \to \R_+$ be an arbitrary weight function. Then, for all $X \subseteq S$ and $e \in X$,
\[
e \in \Core(X) \iff e \notin \Hull(\set{f \in X \midd w(f) > w(e)}).
\]
\end{observation}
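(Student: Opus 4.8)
The plan is to unfold both sides of the claimed equivalence directly from the definitions of $\Core$, $\PCore_j$, and $\Hull$, and observe that the two resulting conditions coincide. First I would fix $e \in X$ and abbreviate $A \coloneqq \set{f \in X \midd w(f) > w(e)}$. Unfolding the definition of $\Core$, we have $e \in \Core(X)$ precisely when, for every $j \in [M]$, $e \in (X \setminus S_j) \cup \PCore_j(X \cap S_j)$. I would then split on whether $e \in S_j$: if $e \notin S_j$ then $e \in X \setminus S_j$ and the $j$-th condition is automatically satisfied; if $e \in S_j$ then $e \in X \cap S_j$, so the $j$-th condition becomes $e \in \PCore_j(X \cap S_j)$, which by definition of $\PCore_j$ means $e \notin \PHull_j\big(\set{f \in X \cap S_j \midd w(f) > w(e)}\big)$. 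Since $\set{f \in X \cap S_j \midd w(f) > w(e)} = A \cap S_j$, this reads $e \notin \PHull_j(A \cap S_j)$. Hence $e \in \Core(X)$ if and only if $e \notin \PHull_j(A \cap S_j)$ for every $j$ with $e \in S_j$.

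Next I would unfold the right-hand side: $e \notin \Hull(A) = \bigcup_{j=1}^M \PHull_j(A \cap S_j)$ means $e \notin \PHull_j(A \cap S_j)$ for every $j \in [M]$. Comparing the two characterizations, the only gap is the indices $j$ with $e \notin S_j$. But $\PHull_j$ is an operator on $2^{S_j}$ whose image is contained in $S_j$, so whenever $e \notin S_j$ the condition $e \notin \PHull_j(A \cap S_j)$ holds vacuously. Therefore quantifying over ``all $j$ with $e \in S_j$'' is equivalent to quantifying over ``all $j \in [M]$'', and the two conditions coincide, which establishes the equivalence.

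There is essentially no obstacle here; the only point requiring a moment of care is to notice that the weight-threshold set $A_e$ taken relative to $X \cap S_j$ inside $\PCore_j$ agrees with $A_e(X) \cap S_j$, which is immediate since restricting to $S_j$ commutes with thresholding by $w(e)$. The statement is really a bookkeeping lemma that repackages the component-wise definitions of core and hull into a single clean criterion, mirroring the matroid-intersection identity $\Core(X) = \bigcap_j \OPT_j(X)$ derived just before, and it is this criterion that will be invoked when reasoning about critical elements in the algorithm of Theorem~\ref{thm:alg-general}.
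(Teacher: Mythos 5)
Your proof is correct and takes essentially the same approach as the paper: both unfold $\Core$, $\PCore_j$, and $\Hull$ component-wise and reconcile the two sides via the case distinction $e \in S_j$ versus $e \notin S_j$. You merely make explicit the small observation, left implicit in the paper's chain of equivalences, that $\PHull_j$ has image inside $S_j$ so the $j$-indices with $e \notin S_j$ contribute vacuously to both sides.
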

\begin{proof}
Let $\PHull_j$ and $\PCore_j$ denote the primitive hull and primitive core of $(S_j, \cI_j)$, respectively. We have
\begin{align*}
e \in \Core(X) &\iff \forall j, \: \: (e \notin S_j) \vee (e \in \PCore_j(X \cap S_j)) \\
&\iff \forall j, \: \: (e \notin S_j) \vee (e \notin \PHull_j(\set{f \in X \cap S_j \midd w(f) > w(e)})) \\
&\iff e \notin \Hull(\set{f \in X \midd w(f) > w(e)}).\qedhere
\end{align*}
\end{proof}

\paragraph{Algorithmic Aspects.}

Our black-box algorithm depends on the ability to solve the following problem for the given feasibility constraint: given a set $X$ and an element $e$, is $e \in \Hull(X)$? Since $\Hull$ is defined as the union of $\PHull_j$ over all system components $\cF_j$, this problem reduces to testing whether $e \in \PHull_j(X)$ for some set $X$ and $e$.

Fix a system component $\cF_j$. Whenever $\cF_j$ is a matroid, computing $\PHull_j(X) = \text{span}_j(X)$ can be done in polynomial time. For the case of knapsack constraints, we can compute $\PHull_j(X)$ by solving a knapsack subproblem, which can be done in pseudopolynomial time. However, for general systems, computing $\Hull(X)$ can be arbitrarily hard; we do not pursue this direction here, as it lies outside the scope of the present work.

\subsection{Free-Order Secretary Algorithm for \texorpdfstring{$k$}{k}-Matroid Intersection}

We now have all the necessary ingredients to design a general free-order algorithm that applies to any system, 
presented as Algorithm~\ref{alg:free-general}; see Figure~\ref{fig:sampling-growth} for an illustration. The following theorem relates the expected weight of the independent set returned by our algorithm to that of $\Core(\Rel(Y))$, where $Y$ is a random subset of $S$.

\begin{algorithm}[ht]
\DontPrintSemicolon
\KwData{An independence system $(S, \cI)$, a parameter $p$.}
\KwResult{An independent set $\ALG$.}
    $\ALG \gets \emptyset$ and $q \gets (1+p) / 2$\;
    \lFor{each $e\in S$} {
        Choose $t_e$ independently and uniformly from $(0,1)$
    }
    $Y \gets \set{e \midd t_e \in (0, p)}$ \;
    $S_1 \gets \set{e \midd t_e \in [p, (1+p)/2)}$ \;
    $S_2 \gets \set{e \midd t_e \in [(1+p)/2, 1)}$ \;
    Observe (without accepting) all elements in $Y \cup S_1$

    $\textbf{unseen} \gets S_2$ \tcp*{The set of elements not yet revealed}
    Sort the elements of $\Rel(Y) \cap S_1$ in decreasing order of weights as $e_1,\dots, e_m$ \;
    \For{$j = 1$ to $m$}{
        Let $Q = \textbf{unseen} \cap \Hull(\set{e_1,\dots, e_j})$ \;
        \For{each $f \in Q$ in arbitrary order}{
            $\textbf{unseen} \gets \textbf{unseen} - f$ \tcp*{Reveal $f$}
            \If{$f \in \Rel(Y)$ and $w(f) > w(e_j)$ and $\ALG + f \in \cI$\label{alg2:line13}}{
                $\ALG \gets \ALG + f$
            }
        }
      }

    $Q \gets \textbf{unseen}$ \\
    \For{each $f \in Q$ in arbitrary order}{
        \lIf{$f \in \Rel(Y)$ and $\ALG + f \in \cI$ \label{alg2:line20}}{
            $\ALG \gets \ALG + f$
        }
    }
    \Return $\ALG$
\caption{Free-Order General Secretary}\label{alg:free-general}
\end{algorithm}

\begin{theorem}\label{thm:alg-general}
Let $(S,\cI)$ be an independence system.
Algorithm~\ref{alg:free-general} returns an independent set $\ALG$ such that, for every element $f$
\[
\Pr\brk{f \in \ALG \midd f \in \Core(\Rel(Y))} \geq \frac{1}{4}.
\]
\end{theorem}
\begin{proof}
First, note that the algorithm always returns an independent set, since before adding any element $e$ to $\ALG$, it explicitly checks whether $\ALG + e \in \cI$. We now condition on the set $Y$. Observe that $\Rel(Y) \subseteq S_1 \cup S_2$, and let $B \coloneqq \Core(\Rel(Y))$. We next show that each element of $B$ is added to $\ALG$ with probability $1/4$. The proof of this fact follows the same argument as in Algorithm~\ref{alg:free-matroid}, but we include it here for completeness.

Fix $f \in B$ and let $h_1$ be the heaviest element in $(S_1 - f) \cap \Rel(Y)$ such that $f$ is in the hull of the elements of higher weight than $h_1$ in $(S_1 - f) \cap \Rel(Y)$. In other words,
\[
h_1 \coloneqq \argmax \set{w(h) \midd h \in (S_1 - f) \cap \Rel(Y) \text{ s.t. } f \in \Hull(\set{g \in (S_1 - f) \cap \Rel(Y) \midd w(g) \geq w(h)})}.
\]
Similarly, let $h_2$ be the heaviest element in $(S_2 - f) \cap \Rel(Y)$ such that $f$ is spanned by the elements of higher weight than $h_2$ in $(S_2 - f) \cap \Rel(Y)$. In other words,
\[
h_2 \coloneqq \argmax \set{w(h) \midd h \in (S_2 - f) \cap \Rel(Y) \text{ s.t. } f \in \Hull(\set{g \in (S_2 - f) \cap \Rel(Y) \midd w(g) \geq w(h)})}.
\]
If there are no such elements, we set $h_1 = \bot$ and $h_2 = \bot$, respectively, with $w(\bot) = 0$.

Since, even conditioned on $Y$, the sets $(S_1 - f) \cap \Rel(Y)$ and $(S_2 - f) \cap \Rel(Y)$ are identically distributed, we have that $\Pr[w(h_1) \geq w(h_2)] = 1/2$, regardless of whether one or both of $h_1$ and $h_2$ are $\bot$. Next, we condition on the independent event that $f \in S_2$, and conclude that both $w(h_1) \geq w(h_2)$ and $f \in S_2$ occur together with probability at least $1/4$. In what follows, we condition on these two events occurring.

We now verify that $f$ satisfies all the conditions for inclusion in $\ALG$ in lines~\ref{alg2:line13} and~\ref{alg2:line20} of Algorithm~\ref{alg:free-general}. Since $f \in \Rel(Y)$, the first condition is satisfied.
Next, recall that, $S_1 \cap \Rel(Y)$ is ordered in decreasing order of weights as $e_1, \dots, e_m$. Assume that $h_1$ is the $j^*$-th highest weight element of $S_1 \cap \Rel(Y)$, that is, $h_1 = e_{j^*}$,  where $j^* = m+1$ if $h_1 = \bot$. By construction, $f$ is revealed by the algorithm precisely in iteration $j^*$, when it first enters the set $Q$. Furthermore, since $f \in \Core(\Rel(Y))$, by the monotonicity of $\Hull$ we conclude that
\[
f \notin \Hull(\set{g \in \Rel(Y) \midd w(g) > w(f)}) \supseteq \Hull(\set{g\in \Rel(Y)\cap S_1 \midd w(g) > w(f)}).
\]
Thus, $w(f) > w(h_1) \geq w(h_2)$, so we satisfy the second condition of line \ref{alg2:line13}.

Let $\ALG'$ denote the solution immediately before $f$ is revealed, and note that all elements in $\ALG'$ are in $\Rel(Y) \cap S_2$ and have weights larger than $w(e_{j^*})$. There are two possibilities. First, if $h_2 \neq \bot$, since $w(e_{j^*}) = w(h_1) \geq w(h_2)$ by the definition of $h_1$ and $h_2$, we have
\[
f \notin \Hull(\{ g \in (S_2 - f)\cap \Rel(Y) \colon w(g) \geq w(e_{j^*}) \}) \supseteq \Hull(\ALG').
\]
Here, the last containment follows by the monotonicity of $\Hull$, since all elements considered by $\ALG'$ for addition had weights larger than $w(e_{j^*})$. Second, if $h_2 = \bot$ then, again by the monotonicity of $\Hull$, we have
\[
f \notin \Hull((S_2-f)\cap \Rel(Y)) \supseteq \Hull(\ALG').
\]
It follows that at the moment when $f$ is revealed, $f \notin \Hull(\ALG')$, and thus $\ALG'+f\in \cI$ by the properties of $\Hull$. We conclude that all conditions hold, and therefore $f \in \ALG$. We have shown that for every realization of $Y$ and every $f \in B = \Core(\Rel(Y))$, $\Pr\brk{f \in \ALG \midd Y} \geq \frac{1}{4}$. Since this holds for every realization of $Y$, by unconditioning, we get that $\Pr\brk{f \in \ALG} \geq \frac{1}{4}$ as stated.
\end{proof}

We now combine~Lemma \ref{lem:fsz-original-lemma} and Theorem~\ref{thm:alg-general} to get the following corollary.

\begin{corollary}\label{cor:alg-guarantee-matroidintersection}
Let $(S,\cI)$ be a $k$-matroid-intersection system. For $p = 1-1/2k$,
Algorithm~\ref{alg:free-general} returns a set $\ALG$ such that
\[
\E[w(\ALG)] \geq \frac{1}{16k^2} w(\OPT).
\]
\end{corollary}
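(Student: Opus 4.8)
The plan is to obtain the bound by feeding the Core Lemma for matroid intersection (Lemma~\ref{lem:fsz-original-lemma}) into the black‑box guarantee of Theorem~\ref{thm:alg-general}. The bridge between the two is the identity recorded right after Definition~\ref{def:core}: for a $k$‑matroid‑intersection system one has $\Core(X)=\bigcap_{i=1}^{k}\OPT_i(X)$ for every $X\subseteq S$, and in particular $\Core(\Rel(Y))=\bigcap_{i=1}^{k}\OPT_i(\Rel(Y))$. Thus the set that Theorem~\ref{thm:alg-general} certifies the algorithm recovers with constant probability is exactly the set whose weight Lemma~\ref{lem:fsz-original-lemma} lower bounds, so the two results plug directly into each other.

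First I would run Algorithm~\ref{alg:free-general} with $p=1-1/(2k)$ and argue conditionally on the random sample $Y=Y(p)$. The proof of Theorem~\ref{thm:alg-general} in fact establishes the pointwise statement that for every realization of $Y$ and every $f\in\Core(\Rel(Y))$ we have $\Pr[f\in\ALG\midd Y]\ge 1/4$. Since weights are nonnegative and $\ALG$ is always independent, $w(\ALG)\ge w(\ALG\cap\Core(\Rel(Y)))$, and hence, using linearity over the (now fixed) set $\Core(\Rel(Y))$,
\[
\E\brk{w(\ALG)\midd Y}\;\ge\;\sum_{f\in\Core(\Rel(Y))}w(f)\,\Pr\brk{f\in\ALG\midd Y}\;\ge\;\frac{1}{4}\,w\brk{\Core(\Rel(Y))}.
\]
Taking expectation over $Y$ and substituting $\Core(\Rel(Y))=\bigcap_{i=1}^{k}\OPT_i(\Rel(Y))$ yields
\[
\E[w(\ALG)]\;\ge\;\frac{1}{4}\,\E\brk{w\prn{\bigcap_{i=1}^{k}\OPT_i(\Rel(Y))}}.
\]

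Then I would invoke Lemma~\ref{lem:fsz-original-lemma} with the same value $p=1-1/(2k)$, which gives $\E\brk{w\prn{\bigcap_{i=1}^{k}\OPT_i(\Rel(Y))}}\ge \tfrac{1}{4k^{2}}\,w(\OPT(S))$. Combining the two displayed inequalities produces $\E[w(\ALG)]\ge \tfrac14\cdot\tfrac{1}{4k^{2}}\,w(\OPT)=\tfrac{1}{16k^{2}}\,w(\OPT)$, which is the claim.

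There is no serious obstacle here; the proof is bookkeeping. The one point that requires care is the coupling between the algorithm's output $\ALG$ and the random target set $\Core(\Rel(Y))$: one must not treat $\Core(\Rel(Y))$ as a deterministic set and apply probability‑competitiveness naively, but instead use that the $1/4$ bound of Theorem~\ref{thm:alg-general} holds uniformly over realizations of $Y$, so that the conditional estimate above is legitimate prior to averaging over $Y$. Everything else is a direct substitution of constants.
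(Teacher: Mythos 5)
Your proof is correct and follows the same route the paper intends: the paper states the corollary as an immediate combination of Lemma~\ref{lem:fsz-original-lemma} and Theorem~\ref{thm:alg-general}, and your argument fills in exactly those bookkeeping steps, including the correct observation (established inside the proof of Theorem~\ref{thm:alg-general}) that the $1/4$ bound holds conditionally on every realization of $Y$, which is what makes the chain $\E[w(\ALG)]\ge \tfrac14\,\E[w(\Core(\Rel(Y)))]\ge \tfrac{1}{16k^2}w(\OPT)$ legitimate.
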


\paragraph{Next Steps.}
In the following sections, we extend both Lemma~\ref{lem:fsz-original-lemma} and Theorem~\ref{thm:alg-general} in two different directions. First, we define and explore a new class of independence systems, called $k$-growth systems, which includes $k$-matchoids as well as knapsack constraints with bounded size ratios; this new class is already of independent combinatorial interest on its own. We then generalize the Core Lemma (Lemma~\ref{lem:fsz-original-lemma}), first to $k$-matchoids and then to general $k$-growth systems. Using the notion of $\Hull$ for both systems, we recover $\bigOm{1 / k^2}$-utility competitive guarantees for both, in the free-order setting.

Second, we turn to the agent-arrival bipartite graph model. All the results above apply directly to the edge-arrival case, provided the edge independence system belongs to one of the studied classes. However, these results do not naturally extend to the agent-arrival setting, where multiple edges are revealed simultaneously. We overcome this difficulty by establishing a new version of the Core Lemma that holds for a random subset $Y$ of agents rather than edges. Equipped with this tool, we design a new algorithm for the agent-arrival model in settings where the independence system on the \emph{item side} admits an order-oblivious algorithm.
\section{Free-Order Secretary on Combinations of Growth Systems}\label{sec:lemma-extension}

In this section, we extend Lemma~\ref{lem:fsz-original-lemma} to a new and broader class of independence systems that contains $k$-matchoids and knapsack constraints with bounded size ratios. 

\subsection{Growth Systems and Combinations}

The goal of this section is to introduce a novel class of independence systems with interesting structural properties. We call this the class of {\it $k$-growth systems}, and present the definition below; a more detailed discussion of its properties and its relationships to other classes of independence systems is provided in Section~\ref{sec:higher-order-systems}.

\begin{restatable}[$k$-growth system]{dfn}{kGrowthDefn}\label{def:k-growth-system}
Let $\cF = (S,\cI)$ be an independence system and $k$ be a positive integer. We say that $\cF$ is a \emph{$k$-growth system} if it satisfies the following $k$-Basis-Growth ($k$BG) axiom.
\begin{symenum}
\itemsep0em
    \itemsymbol{$k$BG}\label{def:kbg}\linkdest{kbg}{} $\forall X \subseteq S$, $\forall I \in \cI$, there exists a partition $(Q,Z)$ of $I\setminus X$ with $|Z| \leq k\,|X \setminus I|$ such that \\[4pt]
    $(\star1)$ for every basis $P$ of $X$, $P \cup Q \in \cI$.
\end{symenum}
\end{restatable}

Many independence systems that arise in applications can be viewed as combinations of simpler component systems. For example, $k$-matchoids are formed as the intersection of $n$ matroids -- where $n$ may be much larger than $k$ -- under the condition that each element is not a co-loop in at most $k$ of them. As it turns out, $k$-growth systems enjoy a similar combination property that will be useful for our analysis: the combination of $k_i$-growth systems is itself a growth system, with a parameter $k$ determined by the $k_i$'s of the individual components. More importantly, the union of the individual Hull functions of each component is the Hull function for the combined system. We exploit this latter property to obtain an improved result for combinations of systems whose components are matroids. The proof of the following lemma can be found in Section~\ref{sec:higher-order-systems}.

\begin{restatable}{lem}{kGrowthComb}\label{lem:k-growth-combinations}
Let $\calF = (S,\cI)$ be a combination of systems $\set{(S_j,\cI_j)}_{j = 1}^M$, where each $(S_j,\cI_j)$ is a $k_j$-growth system. Then, $\calF$ is a $k$-growth system, where
\[
k = \max_{e\in S} \sum_{ j \colon e \in S_j} k_j.
\]
In particular, for each $X \subseteq S$ and $I \in \cI$, there exists a collection of sets $Z_j \subseteq (I \setminus X) \cap S_j$ such that, for every basis $P$ of $X$, we have $P \cup \bigl((I \setminus X) \setminus \bigcup_{j=1}^{M}Z_j\bigr) \in \cI$, and, for all $j \in [M]$,
\[
|Z_j| \leq \begin{cases}
r_j((X\setminus I) \cap S_j) - |(X\cap I)\cap S_j| & \text{if } k_j = 1, \\
k_j |(X\setminus I) \cap S_j| & \text{if } k_j > 1,
\end{cases}
\]
where $r_j$ denotes the rank function of the $j$-th component $(S_j, \cI_j)$. In general, $|Z_j|\le k_j|(X\setminus I)\cap S_j|$ for $j\in[M]$, since $r_j$ is a matroid rank function when $k_j=1$.
\end{restatable}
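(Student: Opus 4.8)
The plan is to apply the $k$BG axiom component by component and then glue the resulting partitions together. Fix $X \subseteq S$ and $I \in \cI$. For each $j \in [M]$, consider the restriction $X_j := X \cap S_j$ and $I_j := I \cap S_j$; since $I \in \cI$ we have $I_j \in \cI_j$, so the $k_j$BG axiom applied to the $j$-th component yields a partition $(Q_j, Z_j)$ of $I_j \setminus X_j = (I \setminus X) \cap S_j$ with $|Z_j| \le k_j\,|X_j \setminus I_j| = k_j\,|(X \setminus I) \cap S_j|$ and such that, for every basis $P_j$ of $X_j$ in $\cI_j$, we have $P_j \cup Q_j \in \cI_j$. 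The natural candidate for the ``kept'' part on the combined system is $Q := (I \setminus X) \setminus \bigcup_{j=1}^M Z_j$, i.e.\ we discard an element of $I \setminus X$ iff it is discarded in \emph{some} component that contains it. The first task is to verify $(\star 1)$ for the combination: given any basis $P$ of $X$ in $\cI$, we must show $P \cup Q \in \cI$, which by definition of combination means $(P \cup Q) \cap S_j \in \cI_j$ for every $j$. Here one observes that $(P \cup Q) \cap S_j = (P \cap S_j) \cup (Q \cap S_j)$, that $P \cap S_j$ is an independent subset of $X_j$ hence extends to a basis $P_j$ of $X_j$ in $\cI_j$, and that $Q \cap S_j \subseteq Q_j$ (since any element of $Q$ lying in $S_j$ was not discarded in component $j$, hence lies in $Q_j$); then downward closure applied to $P_j \cup Q_j \in \cI_j$ gives $(P \cap S_j) \cup (Q \cap S_j) \in \cI_j$.

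Next I would bound the total number of discarded elements to establish the $k$-growth parameter. Writing $Z := (I \setminus X) \setminus Q = \bigcup_{j=1}^M Z_j$, every element $e \in Z$ lies in $Z_j$ for at least one $j$ with $e \in S_j$; charging $e$ to one such $j$ and using that $e$ lies in at most the components containing it, we get
\[
|Z| \;\le\; \sum_{j=1}^M |Z_j| \;\le\; \sum_{j=1}^M k_j\,|(X \setminus I) \cap S_j| \;=\; \sum_{e \in X \setminus I} \ \sum_{j:\, e \in S_j} k_j \;\le\; \Bigl(\max_{e \in S}\sum_{j:\, e\in S_j} k_j\Bigr)\,|X \setminus I| \;=\; k\,|X \setminus I|,
\]
so $(Q, Z)$ is a valid $k$BG partition of $I \setminus X$ for $\cF$, proving $\cF$ is a $k$-growth system. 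The sharper per-component bound in the statement — namely $|Z_j| \le r_j((X\setminus I)\cap S_j) - |(X\cap I)\cap S_j|$ when $k_j = 1$ — requires revisiting the matroid case of the $k$BG axiom: for a matroid, $|I_j \setminus X_j| - |Q_j| = |Z_j|$ and one can take $Q_j$ so that $P_j \cup Q_j$ is actually a basis of $X_j \cup I_j$, whence $|P_j| + |Q_j| = r_j(X_j \cup I_j)$; since $|P_j| = r_j(X_j)$ and $|I_j \setminus X_j| = |I_j| - |I_j \cap X_j|$, submodularity/rank arithmetic gives $|Z_j| = |I_j| - |I_j\cap X_j| - (r_j(X_j\cup I_j) - r_j(X_j)) \le r_j(X_j \setminus I_j) - |I_j \cap X_j|$, using $r_j(X_j \cup I_j) \ge r_j(X_j) + |I_j| - r_j(I_j)$ is not quite it — rather one uses $r_j(X_j \cup I_j) - r_j(X_j) \ge |I_j| - |I_j\cap X_j| - (r_j((X_j\setminus I_j)) - r_j(\emptyset)) $ type estimates, i.e.\ that the number of $I_j$-elements not spanned is controlled by the rank of $X_j \setminus I_j$. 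I would extract this estimate directly from how the matroid case of $k$BG was established earlier rather than re-deriving it.

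The main obstacle I anticipate is \emph{not} the discarding count but the compatibility of the component-wise ``kept'' sets: the $k$BG guarantee for component $j$ only promises $P_j \cup Q_j \in \cI_j$ for bases $P_j$ \emph{of $X_j$}, whereas in the combination we start from a basis $P$ \emph{of $X$} and its trace $P \cap S_j$ on $S_j$ need not be a basis of $X_j$ — it is merely independent in $\cI_j$ and contained in $X_j$. The resolution is exactly the downward-closure trick above: extend $P \cap S_j$ to some basis $P_j \supseteq P\cap S_j$ of $X_j$, invoke $k$BG for that $P_j$, and then shrink back. This works precisely because $(\star 1)$ in the $k$BG axiom is universally quantified over all bases of $X$, so no matter which extension we pick the conclusion holds; and because $\cI_j$ is downward closed. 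A secondary subtlety is that $Q$ defined via $\bigcup_j Z_j$ might a priori be smaller than each $Q_j$, so I must double-check that $Q \cap S_j \subseteq Q_j$ (true, as argued) rather than the reverse inclusion, which is all that the downward-closure step needs. Once these set-theoretic inclusions are pinned down, the rest is the routine counting displayed above.
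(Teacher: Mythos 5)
Your proposal is correct and follows essentially the same route as the paper. You apply the $k$BG axiom in each component $(S_j,\cI_j)$ to obtain $(Q_j, Z_j)$, take $Z \coloneqq \bigcup_j Z_j$ and $Q \coloneqq (I\setminus X)\setminus Z$, verify $P\cup Q\in\cI$ component-wise via the inclusion $Q\cap S_j\subseteq Q_j$ and downward closure, and then bound $|Z|$ by the double-counting $\sum_j k_j\,|(X\setminus I)\cap S_j| = \sum_{e\in X\setminus I}\sum_{j\colon e\in S_j}k_j \le k\,|X\setminus I|$. These are exactly the paper's steps. The one cosmetic difference is how you handle the fact that $P\cap S_j$ need not be a basis of $X\cap S_j$: you extend it to a basis $P_j$, invoke $(\star 1)$, and shrink back by downward closure, whereas the paper invokes the equivalent reformulation $(\star 2)$ (stated right after the definition: the conclusion holds for every independent subset of $X$, not just bases), which absorbs the extend-and-shrink argument. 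Both are valid. Your brief struggle with re-deriving the sharper $k_j=1$ bound is harmless, since you then correctly note that the bound should just be imported from the earlier theorem characterizing $1$-growth systems as matroids — which is precisely what the paper does, in a single sentence, without re-derivation.
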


\subsection{The Core Lemma for Combinations of Growth Systems}

We are now ready to describe the main result of this section: our generalization of the Core Lemma of \cite{fsz-secretary-framework-matroid-intersection} from intersections of matroids to combinations of growth systems.

\begin{lemma}[Core Lemma for Growth Systems]\label{lem:core-k-growth-general}
Let $\calF = (S,\cI)$ be a combination of systems $\set{(S_j,\cI_j)}_{j = 1}^M$ where each $(S_j,\cI_j)$ is a $k_j$-growth system, and let $\Core$ be the Core of $\cF$. Also, let $k = \max_{e \in S} \sum_{j \colon e \in S_j} k_j$, and $Y = Y(p)$ be a random set where every element of $S$ is in $Y$ independently with probability $p$ for some $p\in[0,1]$. Then,
\[
\E\brk{w\prn{\Core\prn{\Rel(Y)}}} \geq \frac{(p - (1 - p) k) (1 - p)}{p^2} \: \; \E\brk{w\prn{\Greedy(Y)}}.
\]
Furthermore, if all components are matroids, i.e. $\calF$ is a $k$-matchoid, then 
\[
\E\brk{w\prn{\Core\prn{\Rel(Y)}}} \geq \frac{(p - (1 - p) (k-1)) (1 - p)}{p} \: \; \E\brk{w\prn{\Greedy(Y)}}.
\]
\end{lemma}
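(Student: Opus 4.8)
The plan is to follow the template of the original Core Lemma for matroid intersection (Lemma~\ref{lem:fsz-original-lemma}), but replace every matroid-specific step by its $k$-growth analogue, using the combination structure from Lemma~\ref{lem:k-growth-combinations} at the point where the original proof uses the exchange property of matroids. First I would fix a realization of the random set $Y=Y(p)$ and work with $R \coloneqq \Rel(Y)$, the greedy-relevant elements; recall $Y \cap R = \emptyset$ and, by the definition of greedy-relevance, $Y \cup R$ behaves ``greedily well'' in the sense that $\Greedy(Y \cup R)$ extends $\Greedy(Y)$. The heart of the argument is a charging scheme: I want to show that for each element $g$ that greedy picks on $Y$, a corresponding ``weight'' survives into $\Core(R)$. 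Concretely, I would run the weighted-greedy algorithm on $Y \cup R$ in decreasing weight order, track which elements of $R$ get selected, and argue that the selected ones are precisely (a superset of) $\Greedy(Y\cup R) \setminus Y$, while $\Core(R)$ contains exactly those $e \in R$ that are not in $\Hull$ of the heavier elements of $R$.

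The combinatorial core is the following. For a fixed heavy prefix, apply Lemma~\ref{lem:k-growth-combinations} with $X$ equal to the set of elements (from $Y \cup R$) of weight above some threshold and $I$ a suitable independent ``witness'' set — the relevant piece of $\OPT$ or of $\Greedy(Y)$ restricted to that weight range. The lemma produces, for each component $j$, a discard set $Z_j$ with $|Z_j| \le k_j |(X\setminus I)\cap S_j|$ (and the sharper bound $r_j((X\setminus I)\cap S_j) - |(X\cap I)\cap S_j|$ when $k_j=1$), such that $P \cup ((I\setminus X)\setminus \bigcup_j Z_j) \in \cI$ for every basis $P$ of $X$. Summing $|Z_j|$ over $j$ and using that each element lies in at most the components contributing to $k = \max_e \sum_{j: e\in S_j} k_j$, the total number of discarded witnesses is at most $k \cdot |X \setminus I|$ in the general case, and at most $(k-1)\cdot|X\setminus I|$ in the pure $k$-matchoid case — this is where the improvement by one in the matroid-component bound comes from, since the rank-quotient bound $r_j((X\setminus I)\cap S_j) - |(X\cap I)\cap S_j| \le |(X\setminus I)\cap S_j|$ lets us ``save'' one unit per participating component and then re-index. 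I would then translate this counting bound into a weight bound by the standard layer-cake / prefix-summation trick: decompose $w$ into integrals of indicators over weight thresholds, apply the counting estimate at each threshold, and integrate.

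Finally I would take expectations over $Y$. Each element $e \in S$ lies in $Y$ with probability $p$ independently, so $e$ is ``available to be relevant'' with probability $1-p$; the factors $(1-p)$ and $1/p$ (resp.\ $1/p^2$) arise exactly as in the proof of Lemma~\ref{lem:fsz-original-lemma}, from comparing $\E[w(\Greedy(Y))]$ against the conditional contribution of elements that land in $R$ rather than in $Y$, with the extra $1/p$ in the non-matroid case reflecting that the $k$-growth charging is one-sided (a basis of $X$ need not have uniform size, so one loses an additional factor that in the matroid case is absorbed by exact rank identities). I expect the main obstacle to be the second step: making the charging scheme precise for general $k$-growth components, where bases of $X$ need not be equicardinal and $\PHull$ is not idempotent, so one cannot simply invoke span-closure identities. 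The careful point is to choose the witness set $I$ and the prefix $X$ so that Lemma~\ref{lem:k-growth-combinations} applies with the \emph{same} basis $P = \Greedy(\text{prefix of }Y)$ for which we already control the weight, and to verify that the surviving witnesses $(I\setminus X)\setminus\bigcup_j Z_j$ are genuinely in $\Core(R)$ and not merely in $R$ — this requires checking, component by component, that they are not in the corresponding $\PHull_j$ of the heavier relevant elements, which follows from property~\ref{Hextension} once the discard counts are in hand.
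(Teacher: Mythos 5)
Your overall route matches the paper's: apply Lemma~\ref{lem:k-growth-combinations} at each weight threshold, translate the cardinality bounds to a weight bound by summing over prefixes, exploit the sharper matroid rank bound to save one unit per component in the matchoid case, and check that survivors land in $\Core(\Rel(Y))$. That skeleton is right, and you correctly identify where the matchoid improvement comes from.

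There are, however, two concrete gaps. First, ``run weighted-greedy on $Y\cup R$ and track which elements of $R$ get selected'' does not give you the object you need. $\Rel(Y)$ is defined pointwise, $\Rel(Y)=\{e\notin Y: e\in\Greedy(Y+e)\}$, which is not the set of elements of $R$ that a single greedy run on $Y\cup R$ would accept: once greedy has accepted one relevant element, subsequent relevant elements can be blocked by it, so $\Greedy(Y\cup R)\setminus Y$ is in general a strict subset of $\Rel(Y)$. The Lemma compares $w(\Greedy(Y))$ to $w(\Core(\Rel(Y)))$, not to $w(\Greedy(Y\cup R))$, so this framing must be replaced by one that tests each candidate $e_i$ against $\Greedy(Y\cap E_{i-1})$ alone. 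Second, deferring the expectation-over-$Y$ step to ``exactly as in Lemma~\ref{lem:fsz-original-lemma}'' hides the real mechanism. The paper constructs an explicit offline simulation: process $e_1,\dots,e_m$ by decreasing weight, and whenever $e_i$ extends the running $\Greedy(Y\cap E_{i-1})$ toss a coin (heads with probability~$p$); heads sends it into a sampled bucket ($R'$ if it is a core element of the relevant set built so far, $N'$ otherwise), tails into an unsampled bucket ($R$ or $N$ respectively). This gives, deterministically at each prefix, disjoint sets $X_i=R_i\cup N_i=\Rel(Y)\cap E_i$ and $I_i=R'_i\cup N'_i=\Greedy(Y\cap E_i)$ to feed to Lemma~\ref{lem:k-growth-combinations}, plus the coin-flip identities $(1-p)\E[|R'_i|]=p\E[|R_i|]$ and $(1-p)\E[|N'_i|]=p\E[|N_i|]$ for free. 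Without this (or an equivalent coupling) you do not get the ratios cleanly, and it is also not clear how $X$ and $I$ come out disjoint as the Lemma application requires. Relatedly, your intuition that the extra $1/p$ in the non-matroid bound comes from ``bases of $X$ not having uniform size'' misidentifies the source: it is purely an algebraic consequence of the coefficient on $\E[|N_i|]$ being $k-1$ (matroid, via telescoping ranks over the prefix) versus $k$ (general); upon rearranging $k\E[|R_i|]\ge\E[|N'_i|]-(k-1\text{ or }k)\E[|N_i|]$ the coefficient collected on the left changes from $1/(1-p)$ to $p/(1-p)$, which is exactly the missing $p$ in the denominator.
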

\begin{proof} 
Recall that, by Lemma~\ref{lem:k-growth-combinations}, $\calF$ is a $k$-growth system, where $k = \max_{e\in S} \sum_{j \colon e \in S_j} k_j$. Let $e_1, \dots, e_m$ be the elements of $S$ sorted in decreasing order of weights. The idea is to split the elements into sets $R$, $N$, $R'$, and $N'$, where $R' \cup N' = \Greedy(Y)$ and $R \cup N = \Rel(Y)$, such that $R' = \Greedy(Y) \cap \Core(\Rel(Y))$, $R = \Core(\Rel(Y))$ and $N' = \Greedy(Y) \setminus R'$, $N = \Rel(Y) \setminus \Core(\Rel(Y))$. Then, the proof of the lemma will follow from the following four inequalities:
\begin{itemize}\itemsep0em
    \item $\E\bigl[|N'|\bigr] \geq \frac{p}{1-p} \: \E[|N|]$, $\hfill \refstepcounter{equation}(\theequation) \label{eq:simulation-1}$
    \item $\E[|R|] + \E[|N|] \geq \frac{1-p}{p} (\E[|R'|] + \E[|N'|])$, $\hfill \refstepcounter{equation}(\theequation) \label{eq:simulation-2}$
    \item $\E[|N'|] \leq k \: (\E[|R|] + \E[|N|])$, $\hfill \refstepcounter{equation}(\theequation) \label{eq:simulation-3}$
    \item $\E[|N'|] \leq k \: \E[|R|] + (k-1) \: \E[|N|]$ if all components of $\calF$ are matroids. $\hfill \refstepcounter{equation}(\theequation) \label{eq:simulation-4}$
\end{itemize}
To construct $R$, $N$, $R'$ and $N'$, we use Algorithm~\ref{alg:simulation} which is an offline simulation algorithm.

\begin{algorithm}[t]
\DontPrintSemicolon
$R, N, R', N' \gets \emptyset$ \\
\For{$i = 1$ to $m$}{
    Toss a coin $c_i$ that is heads with probability $p$ \tcp*{If heads, $e_i$ is in the sampling phase}
    \If{$(R' \cup N') + e_i \in \cI$}{
        \eIf{$e_i \in \Core((R \cup N) + e_i)$}{
            \leIf{$c_i$ is heads}{
                $R' \gets R' + e_i$
            }{
                $R \gets R + e_i$
            }
        }{
            \leIf{$c_i$ is heads}{
                $N' \gets N' + e_i$
            }{
                $N \gets N + e_i$
            }
        }
    }
}
\caption{Simulation algorithm}\label{alg:simulation}
\end{algorithm}

Let $R_i$, $N_i$, $R'_i$ and $N'_i$ be the sets $R$, $N$, $R'$ and $N'$, respectively, at the end of the $i$-th iteration, where we set $R_0 = N_0 = R'_0 = N'_0 = \emptyset$. Note that $Y$ has the same distribution as the set of edges $e_i$ for which $c_i$ is heads, so we let $Y$ be that set. Let $E_i \coloneqq \set{e_1, \dots, e_i}$ and $Y_i \coloneqq Y \cap E_i$. Also note that $R'_i \cup N'_i = \Greedy(Y_i)$ and $R_i \cup N_i = \Rel(Y_i) \cap E_i = \Rel(Y) \cap E_i$, and that $N_i$, $R_i$, $N'_i$ and $R'_i$ are pairwise disjoint. Let $I_i \coloneqq R'_i \cup N'_i$ and $X_i \coloneqq R_i \cup N_i$.

First, notice that, at each iteration, every $e_i$ that satisfies both $(R' \cup N') + e_i \in \cI$ and $e_i \in \Core((R \cup N) + e_i)$ is added to $R'$ with probability $p$ and to $R$ with probability $1-p$. Similarly, every $e_i$ that satisfies both $(R' \cup N') + e_i \in \cI$ and $e_i \notin \Core((R \cup N) + e_i)$ is added to $N'$ with probability $p$ and to $N$ with probability $1-p$. From these, it follows that, for every $i$,
\[
(1-p) \: \E[|R'_i|] = p \: \E[|R_i|] \quad \text{ and } \quad (1-p) \: \E[|N'_i|] = p \: \E[|N_i|],
\]
and thus \eqref{eq:simulation-1} and \eqref{eq:simulation-2} follow directly from the above. Next, we show that \eqref{eq:simulation-3} holds as well. First, we verify that $R$, $N$, $R'$, and $N'$ are the sets as claimed.

\begin{claim}\label{clm:simulation-r-n-prime}
For all $i\in[m]$, $\: R'_i \cup N'_i = \Greedy(Y_i)$.
\end{claim}
\begin{proof}
Observe that, for $\ell \leq i$, $R'_i \cup N'_i$ corresponds to the set of elements $e_\ell$ from $E_i$ which satisfied $(R'_{\ell-1} \cup N'_{\ell-1}) + e_\ell \in \cI$ and for which the coin $c_\ell$ came up heads. In other words, $R'_i \cup N'_i$ is the subset of $Y_i$ of elements $e_\ell$ which satisfied $(R'_{\ell-1} \cup N'_{\ell-1}) + e_\ell \in \cI$, which is exactly $\Greedy(Y_i)$.
\end{proof}

\begin{claim}\label{clm:simulation-r-n}
For all $i\in[m]$, $\: R_i \cup N_i = \Rel(Y) \cap E_i$.
\end{claim}
\begin{proof}
Similarly to the previous claim, for $\ell \leq i$, $R_i \cup N_i$ corresponds to the set of elements $e_\ell$ from $E_i$ which satisfied $(R'_{\ell-1} \cup N'_{\ell-1}) + e_\ell \in \cI$ and for which the coin $c_\ell$ came up tails, and thus, by the definition of greedy-relevant elements, we have $R_i \cup N_i = \Rel(Y) \cap E_i$.
\end{proof}

\begin{claim}\label{clm:simulation-r}
For all $i\in[m]$, $\: R_i = \Core(\Rel(Y)) \cap E_i$.
\end{claim}
\begin{proof}
Observe that, for $\ell \leq i$, $R_i$ corresponds to the set of elements $e_\ell$ from $E_i$ which satisfied $(R'_{\ell-1} \cup N'_{\ell-1}) + e_\ell \in \cI$, $e_\ell \in \Core((R_{\ell-1} \cup N_{\ell-1}) + e_\ell)$ and for which the coin $c_\ell$ came up tails. By Claim~\ref{clm:simulation-r-n}, this is equivalent to $e_\ell \in \Core((\Rel(Y) \cap E_{\ell-1}) + e_\ell)$ and, by Observation~\ref{obs:core-hull-combination}, to $e_\ell \notin \Hull(\Rel(Y) \cap E_{\ell-1})$, since $E_{\ell-1}$ contains exactly the elements of $S$ with weight higher than that of $e_\ell$. This is precisely the condition that $e_\ell \in \Core(\Rel(Y))$. Since $e_\ell \in E_i$, the claim follows.
\end{proof}

\begin{claim}\label{clm:simulation-r-prime}
For all $i\in[m]$, $\: R'_i = \Greedy(Y_i) \cap \Core(\Rel(Y))$.
\end{claim}
\begin{proof}
By Claim~\ref{clm:simulation-r-n-prime}, we have that $R'_i \subseteq \Greedy(Y_i)$. In particular, for $\ell \leq i$, $R'_i$ corresponds to the set of elements $e_\ell$ from $\Greedy(Y_i)$ which also satisfied $e_\ell \in \Core((R_{\ell-1} \cup N_{\ell-1}) + e_\ell)$. By Claim~\ref{clm:simulation-r-n}, this is equivalent to $e_\ell \in \Core((\Rel(Y) \cap E_{\ell-1}) + e_\ell)$ and, by Observation~\ref{obs:core-hull-combination}, to $e_\ell \notin \Hull(\Rel(Y) \cap E_{\ell-1})$, since $E_{\ell-1}$ contains exactly the elements of $S$ with weight higher than that of $e_\ell$. This is precisely the condition that $e_\ell \in \Core(\Rel(Y))$. Thus, the claim follows.
\end{proof}

Next, we use Lemma~\ref{lem:k-growth-combinations} on the disjoint sets $X_i$ and $I_i$ to obtain sets $Z_{i,j} \subseteq I_i$ for each $j \in [M]$ with the bounds given by the lemma, and let $Z_i = \bigcup_{j = 1}^M Z_{i,j}$ and $Q_i = (I_i \setminus X_i) \setminus Z_i = I_i \setminus Z_i$. We show that $N'_i \subseteq Z_i$ for all $i$.

To see this, notice that, Claims~\ref{clm:simulation-r-n-prime} and~\ref{clm:simulation-r-prime} and the definition of $N'$, we get that every element $e_\ell$ with $\ell \leq i$ that enters $N'_i$ satisfies $e_\ell \notin \Core((N_{\ell-1} \cup R_{\ell-1})+e_\ell)$. In particular, there exists an index $j$ such that $e_\ell \in S_j$ and $e_\ell \notin \Core_j((N_{\ell-1}\cup R_{\ell-1})\cap S_j)$, and thus there exists a circuit $C$ of the $j$-th component such that $e_\ell \in C \cap S_j$. Therefore, the set $P_j \coloneqq C - e_\ell \subseteq (N_{\ell-1}\cup R_{\ell-1}) \cap S_j$ satisfies $P_j \in \cI_j$ but $P_j + e_\ell \notin \cI_j$. This implies that $e_\ell \notin Q_{i,j} \coloneqq (I_i \setminus Z_{i,j}) \cap S_j$. Therefore, $e_\ell$ has to belong to $Z_{i,j} \subseteq Z_i$, and thus $N'_i \subseteq Z_i$.

Now, we condition on the coins $c_\ell$ for $\ell \leq i$ and we apply the basis-growth axiom \ref{def:kbg} to the disjoint sets $X_i$ and $I_i\in \cI$, yielding
\[
\abs{N'_i} \leq \abs{\bigcup_{j = 1}^M Z_{i,j}} \leq \sum_{j = 1}^M k_j \abs{X_i \cap S_j} = \sum_{e \in X_i} \sum_{j \colon e \in S_j}  k_j = k \abs{X_i} = k (|R_i|+|N_i|).
\]
By unconditioning, we obtain
\begin{equation}\label{eq:n-prime-upper-bound-1}
\E[|N'_i|] \leq k \: (\E[|R_i|] + \E[|N_i|]),
\end{equation}
and thus \eqref{eq:simulation-3} follows. For the particular case when all components are matroids and so $\calF$ is a $k$-matchoid, we use the following stronger inequality.
\begin{align*}
\abs{N'_i} &\leq \abs{\bigcup_{j = 1}^M Z_{i,j}} \leq \sum_{j = 1}^M r_j(X_i \cap S_j) = \sum_{\ell = 1}^i \sum_{j = 1}^M r_j(X_\ell \cap S_j) - r_j(X_{\ell-1} \cap S_j) \\
&\leq \sum_{\ell = 1}^i \sum_{e_\ell \in R_\ell} k + \sum_{\ell = 1}^i \sum_{e_\ell \in N_\ell} (k-1)= k \: |R_i| + (k-1) \: |N_i|,
\end{align*}
where the last inequality follows from the fact that if $e_\ell \in R_\ell$, then for every matroid $j$ in which $e_\ell \in S_j$, $e_\ell$ does not close any circuits in $X_{\ell}$. Therefore, for all such $\ell$, we have $r_j(X_\ell\cap S_j) - r_j(X_{\ell-1}\cap S_j) = 1$. Since $e_\ell$ participates in at most $k$ matroids, we have $\sum_{j = 1}^M  (r_j(X_\ell \cap S_j) - r_j(X_{\ell-1} \cap S_j)) \leq k$. On the other hand, if $e_\ell \in N_\ell$, then there must be an index $j$ with $e_\ell \in S_j$ such that $e_\ell$ closes a circuit in $X_\ell$. Therefore, the sum above is at most $k-1$. Once again, by unconditioning, we obtain
\begin{equation}\label{eq:n-prime-upper-bound-2}
\E[|N'_i|] \leq k \: \E[|R_i|] + (k-1) \: \E[|N_i|],
\end{equation}
and thus \eqref{eq:simulation-4} follows. 

Putting it all together, we have
\begin{align*}
k \: \E[|R_i|] &\geq \E[|N'_i|] - k \: \E[|N_i|] \\
&\geq \prn{\frac{p}{1-p} - k} \E[|N_i|] \\
&\geq \prn{\frac{p}{1-p} - k} \prn{\frac{1-p}{p} (\E[|R'_i|] + \E[|N'_i|]) - \E[|R_i|]},
\end{align*}
yielding
\begin{align*}
\E[|R_i|] &\geq \prn{\frac{p}{1-p} - k} \frac{(1-p)^2}{p^2} (\E[|R'_i|] + \E[|N'_i|]) \\
 &= \frac{(p - (1-p) k) (1-p)}{p^2} (\E[|R'_i|] + \E[|N'_i|]).
\end{align*}
Thus, by Claims~\ref{clm:simulation-r-n-prime} and~\ref{clm:simulation-r}, we have
\[
\E\brk{\abs{\Core(\Rel(Y)) \cap E_i}} \geq \frac{(p - (1-p) k) (1-p)}{p^2} \E\brk{\abs{\Greedy(Y_i)}}.
\]
Since this is true for every $i$, applying linearity of expectation and the definition of the Core function yields
\[
\E\brk{w\prn{\Core(\Rel(Y))}} \geq \frac{(p - (1 - p) k) (1 - p)}{p^2} \E\brk{w\prn{\Greedy(Y)}}.
\]
For the particular case when all components are matroids, we use \eqref{eq:n-prime-upper-bound-2} instead of \eqref{eq:n-prime-upper-bound-1} to get
\begin{align*}
k \: \E[|R_i|] &\geq \E[|N'_i|] - (k-1) \: \E[|N_i|] \\
&\geq \prn{\frac{p}{1-p} - (k-1)} \E[|N_i|] \\
&\geq \prn{\frac{p}{1-p} - (k-1)} \prn{\frac{1-p}{p} (\E[|R'_i|] + \E[|N'_i|]) - \E[|R_i|]}.
\end{align*}
After rearranging the terms, we get
\begin{align*}
\E[|R_i|] &\geq \prn{p - (1-p)(k-1)} \frac{1-p}{p} (\E[|R'_i|] + \E[|N'_i|]).
\end{align*}
Similarly as before, by Claims~\ref{clm:simulation-r-n-prime} and~\ref{clm:simulation-r}, we have
\[
\E\brk{\abs{\Core(\Rel(Y)) \cap E_i}} \geq \frac{(p - (1-p) (k-1)) (1-p)}{p} \E\brk{\abs{\Greedy(Y_i)}}.
\]
Since this is true for every $i$, applying linearity of expectation and the definition of the Core function yields
\[
\E\brk{w\prn{\Core(\Rel(Y))}} \geq \frac{(p - (1 - p) (k-1)) (1 - p)}{p} \E\brk{w\prn{\Greedy(Y)}}.\qedhere
\]
\end{proof}

\subsection{Free-Order Secretary Algorithm for Edge Arrivals}

Next, we summarize the guarantees of our Core lemma with respect to $w(\OPT)$. Note that our guarantee for $k$-matchoids is exactly the same as the previous guarantee for $k$-matroid intersection, see Lemma \ref{lem:fsz-original-lemma}.

\begin{theorem}\label{thm:core-lemma-vs-opt}
Let $\calF = (S,\cI)$ be a combination of systems $\set{(S_j,\cI_j)}_{j = 1}^M$ where each $(S_j,\cI_j)$ is a $k_j$-growth system. Let $\Core_j(X)$ denote the set of critical elements with respect to $X$ in $(S_j, \cI_j)$. Also, let $k = \max_{e \in S} \sum_{j \colon e \in S_j} k_j$, and $Y = Y(p)$ be a random set that contains each element of $S$ independently with probability $p$ for some $p\in[0,1]$. Then,
for $p =\sqrt{1 - {1}/{(k+1)}}$, we have
\[
\E\brk{w\prn{\Core\prn{\Rel(Y)}}} 
\geq \prn{2 + \frac{1}{k} - 2\sqrt{1 + \frac{1}{k}}} w(\OPT(S)) > \prn{\frac{1}{4k^2} - \frac{1}{8k^3}} w(\OPT(S)).
\]
Furthermore, if all components are matroids, i.e., $\calF$ is a $k$-matchoid, setting $p=0$ for $k=1$ and $p=1-1/(2k)$ for $k\ge2$ yields
\[
\E\brk{w\prn{\Core\prn{\Rel(Y)}}}
\geq \begin{cases}
w(\OPT(S))                      & \text{if } k = 1, \\
\frac{1}{4k^2} \: w(\OPT(S))    & \text{if } k \geq 2. \\
\end{cases}
\]
\end{theorem}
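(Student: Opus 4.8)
The plan is to derive Theorem~\ref{thm:core-lemma-vs-opt} from the Core Lemma (Lemma~\ref{lem:core-k-growth-general}) by supplying the missing link between $\E[w(\Greedy(Y))]$ and $w(\OPT(S))$, and then optimizing the resulting bound over the sampling probability $p$. Concretely, Lemma~\ref{lem:core-k-growth-general} already gives $\E[w(\Core(\Rel(Y)))]$ as an explicit multiple of $\E[w(\Greedy(Y))]$, with a better constant in the matchoid case, so once we know $\E[w(\Greedy(Y))]\ge \tfrac{p}{k}\,w(\OPT(S))$ the theorem reduces to a one-variable calculus exercise plus a routine estimate.

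The first step is to prove that auxiliary bound. By Lemma~\ref{lem:k-growth-combinations} the combination $\calF$ is a $k$-growth system, hence (via the $k$-systems hierarchy) a $k$-system, so Proposition~\ref{prop:greedy-matroid-ksystem} yields $w(\Greedy(Y))\ge \tfrac1k\,w(\OPT(Y))$ for every realization of $Y$. Since $\OPT(S)\cap Y$ is an independent subset of $Y$ by downward-closedness, and weights are nonnegative, extending it to a basis of $Y$ shows $w(\OPT(Y))\ge w(\OPT(S)\cap Y)$; taking expectations and using $\Pr[e\in Y]=p$ for each $e\in\OPT(S)$ gives $\E[w(\Greedy(Y))]\ge\tfrac{p}{k}\,w(\OPT(S))$. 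Substituting this into Lemma~\ref{lem:core-k-growth-general} gives
\[
\E[w(\Core(\Rel(Y)))]\;\ge\;\frac{(p-(1-p)k)(1-p)}{p\,k}\,w(\OPT(S)),
\]
and, when all components are matroids, $\E[w(\Core(\Rel(Y)))]\ge\frac{(p-(1-p)(k-1))(1-p)}{k}\,w(\OPT(S))$.

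Next I would optimize over $p\in(0,1)$. Writing the general coefficient as $\tfrac1k\big(-(k+1)p+(2k+1)-k/p\big)$, differentiating gives the stationary point $(p^*)^2=\tfrac{k}{k+1}=1-\tfrac1{k+1}$; using $(k+1)p^*=k/p^*=\sqrt{k(k+1)}$ the value becomes $2+\tfrac1k-2\sqrt{1+\tfrac1k}$, as claimed. For the matchoid coefficient $h(p)=(kp-(k-1))(1-p)$ the maximizer is $p^*=1-\tfrac1{2k}$, where $kp^*-(k-1)=\tfrac12$ and $1-p^*=\tfrac1{2k}$, so $h(p^*)/k=\tfrac1{4k^2}$; this handles $k\ge2$. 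The case $k=1$ for matchoids must be treated separately, since both the Core Lemma coefficient and the greedy bound degenerate at $p=0$: there $Y=\emptyset$, so $\Rel(\emptyset)$ is $S$ with the elements that are loops in some component removed, and because $k=1$ each element lies in exactly one component matroid, making $\calF$ a direct sum of matroids with $\Core=\OPT$; deleting loops does not change the optimal basis, hence $w(\Core(\Rel(\emptyset)))=w(\OPT(S))$.

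Finally I would verify the clean closed form $2+\tfrac1k-2\sqrt{1+\tfrac1k}>\tfrac1{4k^2}-\tfrac1{8k^3}$. Setting $x=1/k\in(0,1]$, this is equivalent to $2+x-\tfrac{x^2}{4}+\tfrac{x^3}{8}>2\sqrt{1+x}$; both sides are positive on $(0,1]$, and squaring reduces the claim, after cancellation, to $\tfrac{x^4}{64}\,(x^2-4x+20)>0$, which holds since $x^2-4x+20=(x-2)^2+16>0$. I expect the only genuinely delicate point to be the $k=1$ matchoid boundary case — the $p\to0$ degeneracy of the Core Lemma and the need to argue directly via the direct-sum structure — while everything else is a substitution into Lemma~\ref{lem:core-k-growth-general} followed by elementary single-variable optimization and one squaring estimate.
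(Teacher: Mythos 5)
Your proposal matches the paper's proof step for step: the same $k$-system greedy bound $\E[w(\Greedy(Y))]\ge\frac{p}{k}w(\OPT(S))$, the same substitution into Lemma~\ref{lem:core-k-growth-general}, the same optimizers $p^*=\sqrt{k/(k+1)}$ and $p^*=1-\tfrac1{2k}$, and the same direct-sum argument for the $k=1$ matchoid case. The only addition is your explicit verification (via the substitution $x=1/k$ and squaring to $\tfrac{x^4}{64}(x^2-4x+20)>0$) of the inequality $2+\tfrac1k-2\sqrt{1+\tfrac1k}>\tfrac1{4k^2}-\tfrac1{8k^3}$, which the paper asserts without proof; that check is correct.
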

\begin{proof}
Since the system $(S,\cI)$ is a $k$-system the weighted-greedy algorithm yields a $1/k$-approximation with respect to the weight of $\OPT(S)$. Therefore,
\[
\E[w(\Greedy(Y))] \geq \frac{1}{k} \E[w(\OPT(Y))] \geq \frac{1}{k} \E[w(\OPT(S)\cap Y)] = \frac{p}{k} w(\OPT(S)).
\]
Thus, by Lemma~\ref{lem:core-k-growth-general},
\[
\E\brk{w\prn{\Core\prn{\Rel(Y)}}} \geq \frac{(p - (1 - p) k) (1 - p)}{p k} w(\OPT(S)).
\]
Optimizing $p$ yields $p = \sqrt{1 - \frac{1}{k+1}}$ and
\[
\E\brk{w\prn{\Core\prn{\Rel(Y)}}} \geq \prn{2 + \frac{1}{k} - 2\sqrt{1 + \frac{1}{k}}} w(\OPT(S)).
\]
Now consider the case where all components are matroids. If $k = 1$, then the components of $\calF$ are pairwise disjoint and thus the system is a direct sum of matroids. In this case, for $p = 0$ we have $\Rel(Y) = S$ and, for all $j \in [M]$, $\Core_j(S) = \OPT(S)$. Thus,
\[
\E\brk{w\prn{\Core\prn{\Rel(Y)}}} = w\prn{\Core(S)} = w(\OPT(S)).
\]
If $k \geq 2$, by Lemma~\ref{lem:core-k-growth-general},
\[
\E\brk{w\prn{\Core\prn{\Rel(Y)}}} \geq \frac{(p - (1 - p) (k-1)) (1 - p)}{k} w(\OPT(S)).
\]
Optimizing $p$ yields $p = 1 - \frac{1}{2k}$ and
\[
\E\brk{w\prn{\Core\prn{\Rel(Y)}}} \geq \frac{1}{4k^2} w(\OPT(S)).\qedhere
\]
\end{proof}

Finally, using Algorithm~\ref{alg:free-general} and Theorem~\ref{thm:alg-general}, we obtain our main result for the free-order secretary problem on combinations of growth systems.

\begin{theorem}\label{thm:growth-systems-alg}
Let $\calF = (S,\cI)$ be a combination of systems $\set{(S_j,\cI_j)}_{j = 1}^M$ where each $(S_j,\cI_j)$ is a $k_j$-growth system. Let $k = \max_{e \in S} \sum_{j \colon e \in S_j} k_j$. Then, for $p = \sqrt{1 - {1}/{(k+1)}}$, Algorithm~\ref{alg:free-general} has a utility-competitive ratio at least
    \[
    \frac{1}{2} + \frac{1}{4k} - \frac{1}{2}\sqrt{1 + \frac{1}{k}} > \frac{1}{16k^2} - \frac{1}{32k^3}.
    \]
Furthermore, if all components are matroids, i.e. $\calF$ is a $k$-matchoid, Algorithm~\ref{alg:free-general} has a utility-competitive ratio at least $\frac{1}{4}$ if $k = 1$, and at least $\frac{1}{16k^2}$ if $k\geq 2$.
\end{theorem}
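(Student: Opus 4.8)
The plan is to obtain the statement by composing the black-box guarantee of Theorem~\ref{thm:alg-general} with the Core Lemma estimate of Theorem~\ref{thm:core-lemma-vs-opt}; no new structural work is required. First I would fix the random sample $Y = Y(p)$ drawn inside Algorithm~\ref{alg:free-general} and recall that the proof of Theorem~\ref{thm:alg-general} in fact establishes the stronger \emph{conditional} statement that, for every realization of $Y$ and every $f \in \Core(\Rel(Y))$, one has $\Pr\brk{f \in \ALG \mid Y} \ge 1/4$. Since $\ALG$ is always independent and $\ALG \cap \Core(\Rel(Y)) \subseteq \ALG$, applying the tower rule over the randomness of $Y$ together with linearity of expectation gives
\[
\E\brk{w(\ALG)} \;\ge\; \E\brk{w\prn{\ALG \cap \Core(\Rel(Y))}} \;=\; \E_Y\Bigl[\, \sum_{f \in \Core(\Rel(Y))} w(f)\, \Pr\brk{f \in \ALG \mid Y} \,\Bigr] \;\ge\; \tfrac14\, \E_Y\brk{w\prn{\Core(\Rel(Y))}}.
\]

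Next I would substitute the bound from Theorem~\ref{thm:core-lemma-vs-opt}. Taking $p = \sqrt{1 - 1/(k+1)}$, that theorem gives $\E_Y\brk{w(\Core(\Rel(Y)))} \ge \bigl(2 + \tfrac1k - 2\sqrt{1 + \tfrac1k}\bigr)\, w(\OPT(S))$, and multiplying by $1/4$ yields the utility-competitive ratio $\tfrac12 + \tfrac{1}{4k} - \tfrac12 \sqrt{1 + \tfrac1k}$; the strict inequality $> \tfrac{1}{16k^2} - \tfrac{1}{32k^3}$ is then exactly $1/4$ times the numerical estimate $\tfrac{1}{4k^2} - \tfrac{1}{8k^3}$ already recorded there. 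For the $k$-matchoid case I would instead invoke the matroid-specific part of Theorem~\ref{thm:core-lemma-vs-opt}: with $p = 0$ for $k = 1$ we have $\Rel(Y) = S$ and $\Core(S) = \OPT(S)$, so the ratio is $1/4$; with $p = 1 - 1/(2k)$ for $k \ge 2$ we get $\E_Y\brk{w(\Core(\Rel(Y)))} \ge \tfrac{1}{4k^2}\, w(\OPT(S))$, hence ratio $\tfrac{1}{16k^2}$.

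This is a routine composition, so there is no genuinely difficult step. The only point needing care is lining up the two sources of randomness: Theorem~\ref{thm:alg-general} must be used in its conditional-on-$Y$ form (which its proof supplies) so that the outer expectation over $Y$ can be merged with the expectation over the same sample appearing in the Core Lemma. Everything else is elementary algebra together with the numerical comparisons already performed in Theorem~\ref{thm:core-lemma-vs-opt}.
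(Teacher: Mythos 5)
Your proof is correct and is exactly the route the paper takes: it states Theorem~\ref{thm:growth-systems-alg} as an immediate consequence of combining Theorem~\ref{thm:alg-general} (conditional-on-$Y$ probability $1/4$ for each core element) with Theorem~\ref{thm:core-lemma-vs-opt} (the bound on $\E[w(\Core(\Rel(Y)))]$ against $w(\OPT)$), and the paper provides no further argument. Your care in pointing out that Theorem~\ref{thm:alg-general} must be applied in its conditional form so that the two expectations over $Y$ line up is the only subtlety, and you handle it correctly.
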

\section{The Agent-Arrival Setting}\label{sec:agent-arrival}

In this section, we turn our attention to the agent-arrival setting. As discussed in the introduction, the Core lemma (Lemma~\ref{lem:core-k-growth-general}) and thus our competitive ratio guarantees (Theorem~\ref{thm:growth-systems-alg}) unfortunately do not generalize directly from edge arrivals to this setting.

First, we need to ``transfer'' the constraints from the sets of agents and items to the set of edges. Let $\cF_A = (A, \cJ_A)$ be a combination of systems $\set{(A_j, \cJ_j)}_{j = 1}^{M_A}$ such that, for all $j \in [M_A]$, $(A_j,\cJ_j)$ is a $k_j$-growth system. Furthermore, let $\cF_B = (B, \cL_B)$ be a combination of systems $\set{(B_j,\cL_j)}_{j = 1}^{M_B}$ such that, for all $j \in [M_B]$, $(B_j,\cL_j)$ is an $\ell_j$-growth system.

Before we ``transfer'' $\cF_A$ and $\cF_B$ from the set of agents and items, respectively, to the edges, we need to take care of the matching constraint inherent in our model. This is easy to do however, since $k$-growth systems are closed under parallel-extension (for a proof and more information see Lemma~\ref{lem:extendible-contractions}). Therefore, for each system $(A_j, \cJ_j)$ for $j \in [M_A]$ (resp. $(B_j, \cL_j)$ for $j \in [M_B]$) and each agent $a \in A$ (resp. item $b \in B$) we create $|\delta(a)|$ (resp. $|\delta(b)|$) many copies of $a$ (resp. $b$). Since all copies are parallel, we associate each edge in $\delta(a)$ (resp. $\delta(b)$) with a unique copy of $a$ (resp. $b$). Since $(A_j, \cJ_j)$ (resp. $(B_j, \cL_j)$) is a $k_j$-growth (resp. $\ell_j$-growth) system, by Lemma~\ref{lem:extendible-contractions}, the resulting set system is still a $k_j$-growth (resp. $\ell_j$-growth) system. In addition, for every $(A_j, \cJ_j)$ (resp. $(B_j, \cL_j)$), no feasible solution includes two edges incident to the same original agent $a$ (resp. item $b$), and is thus still a matching.

Next, for each $(A_j, \cJ_j) \in \cF_A$, we create an independence system $(E_{A,j}, \cI_{A,j})$ on the set of edges. Then, the ground set $E_{A,j} = \delta(A_j)$ are the edges adjacent to the agents in $A_j$. Next, for a set $X \subseteq E_{A,j}$ of edges, let $V_A(X) = \set{a \in A_j \midd \delta(a) \cap X \neq \emptyset}$ denote the set of agents incident to some edge in $X$. Then, for the feasible sets, we have $\cI_{A,j} = \set{X \subseteq E_{A,j} \midd V_A(X) \in \cJ_j}$. Let $\cF_{A, E} = (E, \cI_A)$ denote the combination of $\set{(E_{A,j}, \cI_{A,j})}_{j = 1}^{M_A}$. Similarly, for each $(B_j, \cL_j) \in \cF_B$, we create an independence system $(E_{B,j}, \cI_{B,j})$ on the set of edges. Its ground set $E_{B,j} = \delta(B_j)$ are the edges adjacent to the items in $B_j$. Next, for a set $X \subseteq E_{B,j}$ of edges, let $V_B(X) = \set{b \in B_j \midd \delta(b) \cap X \neq \emptyset}$ denote the set of items incident to some edge in $X$. Then, for the feasible sets, we have $\cI_{B,j} = \set{X \subseteq E_{B,j} \midd V_B(X) \in \cL_j}$. Let $\cF_{B, E} = (E, \cI_B)$ denote the combination of $\set{(E_{B,j}, \cI_{B,j})}_{j = 1}^{M_B}$. Our global constraint $\cF = (E, \cI)$ on the set of edges can now be obtained as a combination of the systems $\set{(E_{A,j}, \cI_{A,j})}_{j = 1}^{M_A}$ and $\set{(E_{B,j}, \cI_{B,j})}_{j = 1}^{M_B}$. In other words, a set $X$ of edges is feasible for our global constraint if and only if $X \in \cI_A \cap \cI_B$.

\paragraph{Top-Relevant Edges.}

Similar to the edge-arrival setting, we begin by observing the edges adjacent to a sampled fraction of \emph{agents}. We then use this sampled set of agents to define a set of \emph{greedy-relevant edges} in subsequent phases of our algorithm.

Given a set $Y \subseteq A$ of agents, we apply the weighted-greedy algorithm for the system $(E, \cI)$ to the set of edges incident to $Y$ to obtain a set $\Greedy(\delta(Y)) \in \cI$. Recall that, by Definition~\ref{def:greedy-relevant-elements}, for every set $X$ of edges, we say that an edge $e \in E\setminus X$ is greedy-relevant with respect to $X$ if $e$ belongs to the output of the weighted-greedy algorithm on $X + e$. Then, for every agent $a \in A \setminus Y$, we associate the edge of highest weight in $\delta(a)$ that is greedy-relevant with respect to $\delta(Y)$, call it the \emph{top-relevant edge of $a$ with respect to $Y$}, and denote it by $\toprel(a,Y)$. In other words, $\toprel(a,Y) \coloneqq \argmax\set{ w(e) \midd e \in \delta(a), \text{ and } e \text{ is greedy-relevant with respect to } \delta(Y)}$. If $a$ has no greedy-relevant edges, we write $\toprel(a,Y) = \bot$.

In what follows, our aim is to show that, for a sufficiently large set of sampled agents $Y$, the \emph{set of top-relevant edges with respect to $Y$} which contains every top-relevant edge for every agent $a \in A \setminus Y$, retains a decent fraction of the weight of a maximum-weight basis. To keep our notation consistent with the Core lemma for edge arrivals, we abuse notation and use $\Rel(Y)$ to denote the set of top-relevant edges with respect to $Y$, i.e.
\[
\Rel(Y) \coloneqq \set{\toprel(a,Y) \in E \midd a \in A \setminus Y}.
\]

\subsection{The Core Lemma for Agent Arrivals}

We are now ready to describe the main result of this section: our generalization of the Core Lemma for combinations of growth systems, from the edge-arrival setting to the agent-arrival setting.

\begin{lemma}\label{lem:core-agent-arrival}
Let $\cF_{A, E} = (E, \cI_A)$ be a combination of $\set{(E_{A,j}, \cI_{A,j})}_{j = 1}^{M_A}$, where each $(E_{A,j}, \cI_{A,j})$ is a $k_j$-growth system, and $\cF_{B, E} = (E, \cI_B)$ be a combination of $\set{(E_{B,j}, \cI_{B,j})}_{j = 1}^{M_B}$, where each $(E_{B,j}, \cI_{B,j})$ is an $\ell_j$-growth system. Also, let $\cF = (E, \cI)$ denote the intersection of $\cF_{A,E}$ and $\cF_{B,E}$, let $\Core(X)$ denote the set of critical elements with respect to $X$ in $\cF$, i.e.,
\[
\Core(X) = \bigcap_{j=1}^{M_A}( (X\setminus E_{A,j})\cup\PCore_{j,A}(X\cap E_{A,j})) \cap \bigcap_{j=1}^{M_B}((X\setminus E_{B,j}) \cup \PCore_{j,B}(X\cap E_{B,j})),
\]
where $\PCore_{j,A}$ is the primitive core function of $(E_{A,j},\cI_{A,j})$ and $\PCore_{j,B}$ is the primitive core function of $(E_{B,j},\cI_{B,j})$.
Let
\[
k \coloneqq \max_{e \in E} \sum_{j \colon e \in E_{A,j}} k_j + \sum_{j \colon e \in E_{B,j}} \ell_j,
\]
and $Y = Y(p)$ be a random set where every agent $a \in A$ is in $Y$ independently with probability $p$ for some $p\in[0,1]$. Then,
\[
\E\brk{w\prn{\Core\prn{\Rel(Y)}}} \geq \frac{(p - (1 - p) k) (1 - p)}{p^2} \: \; \E\brk{w\prn{\Greedy(\delta(Y))}}.
\]
Furthermore, if all components are matroids, i.e. $\calF$ is a $k$-matchoid, then 
\[
\E\brk{w\prn{\Core\prn{\Rel(Y)}}} \geq \frac{(p - (1 - p) (k-1)) (1 - p)}{p} \: \; \E\brk{w\prn{\Greedy(\delta(Y))}}.
\]
\end{lemma}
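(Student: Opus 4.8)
The plan is to follow the proof of Lemma~\ref{lem:core-k-growth-general} almost unchanged, replacing the per-element offline simulation of Algorithm~\ref{alg:simulation} by a per-agent one inside the combined edge system $\cF=(E,\cI)$ obtained from the components $\set{(E_{A,j},\cI_{A,j})}_{j=1}^{M_A}$ and $\set{(E_{B,j},\cI_{B,j})}_{j=1}^{M_B}$, which is a $k$-growth system by Lemma~\ref{lem:k-growth-combinations}. First I would sort $E$ as $e_1,\dots,e_m$ in decreasing weight, set $E_i\coloneqq\set{e_1,\dots,e_i}$, and run the following simulation: each agent $a$ gets an independent coin $c_a$ that is heads with probability $p$, flipped lazily the first time an edge of $a$ is reached; we maintain disjoint sets $R,N,R',N'$ and, when $e_i=\set{a,b}$ is processed and $(R'\cup N')+e_i\in\cI$, we put $e_i$ into $R'$ or $N'$ (according as $e_i\in\Core((R\cup N)+e_i)$ or not) if $c_a$ is heads, we put $e_i$ into $R$ or $N$ by the same test if $c_a$ is tails and no edge of $a$ already lies in $R\cup N$, and otherwise we discard $e_i$. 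With $Y\coloneqq\set{a\mid c_a\text{ heads}}$, processing in weight order makes the test $(R'\cup N')+e_i\in\cI$ equivalent to $e_i$ being selected by the weighted greedy algorithm on $\delta(Y)$ when $a\in Y$, and to $e_i$ being greedy-relevant with respect to $\delta(Y)$ when $a\notin Y$; since edges of a common agent are parallel in the agent-side components, the primed side keeps at most one edge per sampled agent automatically, while the guard that no edge of $a$ already lies in $R\cup N$ keeps exactly the top-relevant edge $\toprel(a,Y)$ of each unsampled agent. Writing $R_i,N_i,R'_i,N'_i$ for the sets after step $i$, $X_i\coloneqq R_i\cup N_i$ and $I_i\coloneqq R'_i\cup N'_i$, the analogues of Claims~\ref{clm:simulation-r-n-prime}--\ref{clm:simulation-r} together with Observation~\ref{obs:core-hull-combination} then give $I_i=\Greedy(\delta(Y))\cap E_i$, $X_i=\Rel(Y)\cap E_i$, and $R_i=\Core(\Rel(Y))\cap E_i$.

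Next I would re-establish the four inequalities \eqref{eq:simulation-1}--\eqref{eq:simulation-4}. The only substantively new point is that the per-agent coins still split symmetrically. Fix an agent $a$; the key structural fact is that the edges of $\delta(Y)$ which greedy does not select never influence greedy's decisions, so the greedy run on $\delta(Y)$ behaves exactly like the run on $\delta(Y\setminus a)$ together with the single edge $e(a)\coloneqq\toprel(a,Y\setminus a)$ (or with no edge of $a$ at all, if $a$ has no greedy-relevant edge there). Consequently whether $a$ contributes an edge, which edge $e(a)$ that is, the set $\set{g\in\Rel(Y)\mid w(g)>w(e(a))}$ — which equals $R\cup N$ at the instant $e(a)$ is processed — and hence whether $e(a)\in\Core$, are all determined by the coins of the agents other than $a$; the coin $c_a$ only decides whether $e(a)$ lands on the primed side ($R'$ or $N'$) or the unprimed side ($R$ or $N$). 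Conditioning on all coins except $c_a$ then yields $(1-p)\E[|R'_i|]=p\E[|R_i|]$ and $(1-p)\E[|N'_i|]=p\E[|N_i|]$, and hence the analogues of \eqref{eq:simulation-1}--\eqref{eq:simulation-2}. For \eqref{eq:simulation-3} and \eqref{eq:simulation-4} I would apply Lemma~\ref{lem:k-growth-combinations} to the disjoint pair $X_i\subseteq E$, $I_i\in\cI$ to obtain per-component witness sets $Z_{i,j}$, and then reproduce the argument of Lemma~\ref{lem:core-k-growth-general} word for word: if $e_\ell\in N'_i$ then $e_\ell\notin\Core(X_{\ell-1}+e_\ell)$, so $e_\ell$ is the lightest element of a circuit of some component whose remainder lies in $X_i$ within that component, which forces $e_\ell\in\bigcup_jZ_{i,j}$; summing the per-component cardinality bounds gives $|N'_i|\le k\,|X_i|=k(|R_i|+|N_i|)$, and when all components are matroids, telescoping rank increments gives the sharper $|N'_i|\le k|R_i|+(k-1)|N_i|$. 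Nothing here is affected by the switch from edges to agents, since the argument only uses that $X_i$ and $I_i$ are disjoint with $I_i\in\cI$.

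Finally, once the four inequalities are in hand, the computation is identical to that in Lemma~\ref{lem:core-k-growth-general}: for each $i$ one obtains $\E[|R_i|]\ge\tfrac{(p-(1-p)k)(1-p)}{p^2}\,\E[|I_i|]$ in general, and $\E[|R_i|]\ge\tfrac{(p-(1-p)(k-1))(1-p)}{p}\,\E[|I_i|]$ when all components are matroids; substituting $R_i=\Core(\Rel(Y))\cap E_i$ and $I_i=\Greedy(\delta(Y))\cap E_i$, multiplying by $w(e_i)-w(e_{i+1})$ with the convention $w(e_{m+1})=0$, and summing over $i$ converts these cardinality bounds into the claimed weight inequalities. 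The hard part is the symmetry claim of the second paragraph — checking that passing from per-element to per-agent coins does not break the $p$ versus $1-p$ split — and it rests entirely on the observation that a sampled agent effectively contributes only its greedily-selected edge to $\Greedy(\delta(Y))$, so that flipping $c_a$ merely relocates the \emph{same} candidate edge $e(a)$ between the two sides without changing its $\Core$-status; the circuit/witness-set argument for \eqref{eq:simulation-3}, the rank argument for \eqref{eq:simulation-4}, and the closing algebra all carry over essentially verbatim from the edge-arrival proof.
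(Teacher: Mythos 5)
Your proof is correct and takes essentially the same route as the paper: a per-agent offline simulation analogous to Algorithm~\ref{alg:simulation-agents}, the four set-equality claims, the coin-symmetry identities giving \eqref{eq:agent-simulation-1}--\eqref{eq:agent-simulation-2}, the cardinality bounds \eqref{eq:agent-simulation-3}--\eqref{eq:agent-simulation-4} via the $k$-growth axiom, and the closing algebra with Abel summation.

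One small but genuine contribution of your presentation is the symmetry argument in the second paragraph. The paper's published version simply observes that the first time an edge of $a$ is ``about to enter'' one of the sides, the coin $c_a$ is consulted and then never again. Your version goes a step further and makes explicit \emph{why} the branch the algorithm is in at that moment (Core vs.\ not Core) is measurable with respect to $(c_{a'})_{a'\neq a}$: you observe that the single candidate $e(a)=\toprel(a,Y\setminus a)$ is determined by the other coins, that only edges heavier than $e(a)$ can affect the prefix $R\cup N$ at the time $e(a)$ is processed, and that the greedy-relevance of those heavier edges is unaffected by whether $\delta(a)$ is present in the sample (since greedy encounters them before $e(a)$). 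This is the clean justification that the $p$ versus $1-p$ split is applied to a set of outcomes whose Core-status does not leak information from $c_a$, a point the paper treats more tersely. Your construction of the simulator also differs cosmetically -- you rely on the matching constraint in $(R'\cup N')+e_i\in\cI$ to automatically prevent a sampled agent from contributing two edges to the primed side, while the paper maintains an explicit set $\overline{A}$ of processed agents -- but these are equivalent. The remainder (witness-set argument for the $k$-growth bound, the sharpened matroid telescope, and the weighted summation) carries over from Lemma~\ref{lem:core-k-growth-general} exactly as you say.
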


\begin{proof}
Let $e_1, \dots, e_m$ be the elements of $E$ sorted in decreasing order of weights. We will define disjoint sets $R$, $N$, $R'$, and $N'$, where $R' \cup N' = \Greedy(\delta(Y))$ and $R \cup N = \Rel(Y)$, such that $R' = \Greedy(\delta(Y)) \cap \Core(\Rel(Y))$, $R = \Core(\Rel(Y))$ and $N' = \Greedy(\delta(Y)) \setminus R'$, $N = \Rel(Y) \setminus \Core(\Rel(Y))$. 
To construct $R$, $N$, $R'$ and $N'$, we use Algorithm~\ref{alg:simulation-agents} which is an offline simulation algorithm. Notice that Algorithm~\ref{alg:simulation-agents} is very similar to Algorithm~\ref{alg:simulation}, with the exception that, when we consider an edge $e_i$, if we have already processed the agent incident to $e_i$, this implies that $e_i$ is not the top-relative edge of its incident agent, and we ignore it. In Line~\ref{alg3:line18}, we have added the instruction ``Do nothing'' for clarity. Let us start by proving that the sets $R, N, R'$ and $N'$ satisfy the claimed conditions.

\begin{algorithm}[ht!]
\DontPrintSemicolon
$R, N, R', N' \gets \emptyset$ \\
$\overline{A}\gets \emptyset$ \tcp*{The set of \emph{processed agents}}
\For{$a \in A$} {\tcp*{If heads, $a$ is in the sampling phase}
    Toss a coin $c_a$ that is heads with probability $p$
}
\For{$i = 1$ to $m$} {
    Let $V_A(e_i)$ be the agent incident to $e_i$ \\
    \If{$(R'\cup N') + e_i \in \cI$} {
        \eIf{$V_A(e_i) \notin \overline{A}$} {
            $\overline{A} \gets \overline{A} + V_A(e_i)$ \;
            \eIf{$e_i \in \Core( (R\cup N) + e_i)$} {
                \leIf{$c_{V_A(e_i)}$ is heads} {
                    $R' \gets R' + e_i$
                }{
                    $R \gets R + e_i$
                }
            }{
                \leIf{$c_{V_A(e_i)}$ is heads} {
                    $N' \gets N' + e_i$
                }{
                    $N \gets N + e_i$
                }
            }
        }{ \tcp*{$e_i$ wasn't the first edge of $V_A(e_i)$ that could enter $\Greedy(E_i)$}
            Do nothing \label{alg3:line18}
        }
    }
}       
\caption{Simulation algorithm for agent arrivals}\label{alg:simulation-agents}
\end{algorithm}

Let $R_i, N_i, R'_i, N'_i, \overline{A}_i$ be the sets $R, N, R', N', \overline{A}$ respectively, at the end of the $i$-th iteration, where $R_0=N_0=R'_0=N'_0=\overline{A}_0=\emptyset$. Let $E_i = \set{e_1,\dots, e_i}$ and also say that an edge $e_i$ is a \emph{candidate for greedy} if, on iteration $i$, $R'_{i-1} \cup N'_{i-1} + e_i \in \cI$. Finally, let $Y$ be the set of agents $a\in A$ for which the coin $c_a$ is heads. Thus, we assume that they are the same set, and furthermore, we let $Y_i = \overline{A}_i \cap Y$. Also note $N_i$, $R_i$, $N'_i$ and $R'_i$ are pairwise disjoint. Let $I_i \coloneqq R'_i \cup N'_i$ and $X_i \coloneqq R_i \cup N_i$. 

\begin{claim}\label{clm:simulation-r-n-prime-agents}
For all $i \in [m]$, $\: R'_i \cup N'_i = \Greedy(\delta(Y_i))$.
\end{claim}
\begin{proof}
Consider the $i$-th step of our algorithm and let $a = V_A(e_i)$. If $e_i$ is not a candidate for greedy or if $c_a$ is tails, then in our algorithm $Y_{i+1} = Y_i$, $R'_i = R'_{i-1}$ and $N'_i = N'_{i-1}$ so the claim holds. Thus, assume that $e_i$ is a candidate for greedy and that $c_a$ is heads. Note that in this case, it must be that $a \notin \overline{A}_{i-1}$; if not, there exists an index $j < i$ such that $e_j$ is the first edge incident to $a$ that is a candidate for greedy. But since $c_a$ is heads, we have $a \in Y_j \subseteq Y_{i-1} \subseteq \overline{A}_{i-1}$, which is a contradiction. Thus, $a \notin \overline{A}_{i-1}$.

In that case, $Y_i = Y_i + a$ and since $e_i$ is a candidate for greedy,
\[
\Greedy(\delta(Y_i)) = \Greedy(\delta(Y_{i-1}) \cup \delta(a)) = \Greedy(\delta(Y_{i-1}) + e_i) = \Greedy(\delta(Y_{i-1})) + e_i,
\]
since $e_i$ is the first edge incident to $a$ that is a candidate for greedy. Indeed, all edges in $\delta(a)$ with higher weight than $e_i$ were discarded on previous iterations (i.e. they were not candidates for greedy) and all edges in $\delta(a)$ with lower weight than $e_i$ won't enter $\Greedy(\delta(Y_j))$ for any $j \geq i$, since $e_i$ is has entered $\Greedy(\delta(Y_i))$ and $\Greedy$ cannot contain two edges incident to the same agent.

Furthermore, by the description of the algorithm, $e_i$ will enter either $R'$ or $N'$ so, 
\[
R'_i \cup N'_i = R'_{i-1} \cup N'_{i-1} + e_i. \qedhere
\]
\end{proof}

\begin{claim}\label{clm:simulation-r-n-agents}
For all $i \in [m]$, $\: R_i \cup N_i = \Rel(Y) \cap E_i$.
\end{claim}
\begin{proof}
Consider the $i$-th step of our algorithm and let $a = V_A(e_i)$. First, note that if $e_i$ is not a candidate for greedy or if $c_a$ is heads, then in our algorithm $R_i \cup N_i = R_{i-1} \cup N_{i-1}$ so the claim holds. Thus, assume that $e_i$ is a candidate for greedy and that $c_a$ is tails.

Now, if $e_i$ is the top-relevant edge of $a$ with respect to $Y$, i.e. if $e_i = \toprel(a,Y)$, then we have that $e_i \in \Rel(Y) \cap E_i$ and also that $a \notin \overline{A}_{i-1}$, and thus, by the description of the algorithm, $e_i$ will enter either $R$ or $N$ so, 
\[
R_i \cup N_i = R_{i-1} \cup N_{i-1} + e_i.
\]
On the other hand, if $e_i \neq \toprel(a,Y)$, then $e_i \notin \Rel(Y) \cap E_i$. This, however, implies that there exists a $j < i$ such that $e_j = \toprel(a,Y)$ and, since $e_j$ has higher weight than $e_i$, this means that we have processed $a$ earlier and $a \in \overline{A}_{i-1}$. In either case, the claim holds.
\end{proof}

\begin{claim}\label{clm:simulation-r-agents}
For all $i\in[m]$, $\: R_i = \Core(\Rel(Y)) \cap E_i$.
\end{claim}
\begin{proof}
We show both inclusions. 
Let $e \in R_i$. Then, there exists $j < i$ such that $e = e_j$, $e_j \in R_j$ and 
\[
e_j \in \Core(R_{j-1} \cup N_{j-1} + e_j) = \Core(R_j \cup N_j) = \Core(\Rel(Y) \cap E_j).
\]
Since all the elements in $\Rel(Y) \setminus E_j$ have weight smaller than that of $e_j$ we also get that $e_j \in \Core(\Rel(Y))$, and therefore $e_j \in \Core(\Rel(Y)) \cap E_i$.

For the other direction, let $e \in \Core(\Rel(Y)) \cap E_i \subseteq \Rel(Y) \cap E_i$. This means that $e = e_j$ for some $j \leq i$ and $e_j \in \Core(\Rel(Y)) \cap E_j$. Since $\Core(\Rel(Y)) \subseteq \Rel(Y)$, by Claim~\ref{clm:simulation-r-n-agents}, we know that $e_j \in R_j \cup N_j$. However, since $e_j \in \Core(\Rel(Y)) \cap E_j$ we also have that $e_j \in \Core(\Rel(Y_{j-1})) \cap E_j$ and thus $e_j \in \Core((R_{j-1} \cup N_{j-1}) + e_j)$, which implies that $e_j \in R_j \subseteq R_i$.
\end{proof}

\begin{claim}\label{clm:simulation-r-prime-agents}
For all $i\in[m]$, $\: R'_i = \Greedy(\delta(Y_i)) \cap \Core(\Rel(Y))$.
\end{claim}
\begin{proof}
By Claim~\ref{clm:simulation-r-n-prime-agents}, we have that $R'_i \subseteq \Greedy(\delta(Y_i))$. In particular, for $j \leq i$, $R'_i$ corresponds to the set of edges $e_j$ from $\Greedy(\delta(Y_i))$ which also satisfied $e_j \in \Core((R_{j-1} \cup N_{j-1}) + e_j)$ and for which the agent $a$ incident to $e_j$ is not processed before step $j$. By Claim~\ref{clm:simulation-r-n-agents}, the former is equivalent to $e_j \in \Core((\Rel(Y) \cap E_{j-1}) + e_j)$, which implies that $e_j \in \Core(\Rel(Y) \cap E_j) \subseteq \Core(\Rel(Y))$. For the latter, every edge $e_j \in \Greedy(\delta(Y_i)) \cap \Core(\Rel(Y))$ is the top-relevant edge of its incident agent $a$ and thus we have that $V_A(e_j) \notin \overline{A}_{j-1}$. Thus, the claim holds.
\end{proof}

Now we will prove the following four inequalities about the sizes of the constructed sets.
\begin{itemize}\itemsep0em
    \item $\E\bigl[|N'|\bigr] \geq \frac{p}{1-p} \: \E[|N|], \hfill \refstepcounter{equation}(\theequation) \label{eq:agent-simulation-1}$
    \item $\E[|R|] + \E[|N|] \geq \frac{1-p}{p} (\E[|R'|] + \E[|N'|]), \hfill \refstepcounter{equation}(\theequation) \label{eq:agent-simulation-2}$
    \item $\E[|N'|] \leq k \: (\E[|R|] + \E[|N|]), \hfill \refstepcounter{equation}(\theequation) \label{eq:agent-simulation-3}$
    \item $\E[|N'|] \leq k \: \E[|R|] + (k-1) \: \E[|N|]$ if all components of $\calF$ are matroids. $\hfill \refstepcounter{equation}(\theequation) \label{eq:agent-simulation-4}$
\end{itemize}

The proofs of \eqref{eq:agent-simulation-1} and \eqref{eq:agent-simulation-2} are a bit more subtle in the agent arrival case than in the edge arrival case.
\begin{claim}
For all $i \in [m]$,
\[
(1-p) \E[|R'_i|] = p \: \E[|R_i|] \quad \text{ and } \quad (1-p) \E[|N'_i|] = p \: \E[|N_i|].
\]
\end{claim}
\begin{proof}
Fix an agent $a$. Observe that the first time an edge $e_i \in \delta(a)$ is about to enter $R' \cup R$ (resp. $N'\cup N$), the result of $c_a$ is used for the first time. If the coin is heads, which happens with probability $p$, then $e_i$ is added to $R'$ (resp. $N'$), otherwise, to $R$ (resp. $N$). Since $a \in \overline{A}_i$, no other edge $e_j \in \delta(a)$ with $j > i$ will ever be about to enter $R' \cup R$ (resp. $N'\cup N$) and thus $c_a$ will not be used again.
\end{proof}

With \eqref{eq:agent-simulation-1}, \eqref{eq:agent-simulation-2} and Claims~\ref{clm:simulation-r-n-prime-agents},~\ref{clm:simulation-r-n-agents},~\ref{clm:simulation-r-agents} and~\ref{clm:simulation-r-prime-agents} established, the rest of the proof follows exactly like the proof of Lemma~\ref{lem:core-k-growth-general}. We first use the fact that $\cF$ is a $k$-growth system to establish \eqref{eq:agent-simulation-3} and \eqref{eq:agent-simulation-4}, and then we manipulate the four established inequalities in the same way as in Lemma~\ref{lem:core-k-growth-general} to obtain the desired conclusion.
\end{proof}

We summarize the guarantees of our Core lemma with respect to $w(\OPT)$. Note that the guarantees we obtain are identical to those of Theorem~\ref{thm:core-lemma-vs-opt} for the edge-arrival setting.

\begin{theorem}\label{thm:core-lemma-agents-vs-opt}
Let $\cF_{A, E} = (E, \cI_A)$ be a combination of $\set{(E_{A,j}, \cI_{A,j})}_{j = 1}^{M_A}$, where each $(E_{A,j}, \cI_{A,j})$ is a $k_j$-growth system, and $\cF_{B, E} = (E, \cI_B)$ be a combination of $\set{(E_{B,j}, \cI_{B,j})}_{j = 1}^{M_B}$, where each $(E_{B,j}, \cI_{B,j})$ is an $\ell_j$-growth system. Also, let $\cF = (E, \cI)$ denote the combination of $\cF_{A,E}$ and $\cF_{B,E}$, let $\Core(X)$ denote the set of critical elements with respect to $X$ in $\cF$, i.e.,
\[
\Core(X) = \bigcap_{j=1}^{M_A}( (X\setminus E_{A,j})\cup\PCore_{j,A}(X\cap E_{A,j})) \cap \bigcap_{j=1}^{M_B}((X\setminus E_{B,j}) \cup \PCore_{j,B}(X\cap E_{B,j})),
\]
where $\PCore_{j,A}$ is the primitive core function of $(E_{A,j},\cI_{A,j})$ and $\PCore_{j,B}$ is the primitive core function of $(E_{B,j},\cI_{B,j})$,
and let $Y = Y(p)$ be a random set where every agent $a \in A$ is in $Y$ independently with probability $p$ for some $p\in[0,1]$. Then,
for $p =\sqrt{1 - \frac{1}{k+1}}$, we have
\[
\E\brk{w\prn{\Core\prn{\Rel(Y)}}} 
\geq \prn{2 + \frac{1}{k} - 2\sqrt{1 + \frac{1}{k}}} w(\OPT(E)) > \prn{\frac{1}{4k^2} - \frac{1}{8k^3}} w(\OPT(E)).
\]
Furthermore, if all components are matroids, i.e., $\calF$ is a $k$-matchoid, setting $p=0$ for $k=1$ and $p=1-1/(2k)$ for $k\ge2$ yields
\[
\E\brk{w\prn{\Core\prn{\Rel(Y)}}}
\geq \begin{cases}
w(\OPT(E))                      & \text{if } k = 1, \\
\frac{1}{4k^2} \: w(\OPT(E))    & \text{if } k \geq 2. \\
\end{cases}
\]
\end{theorem}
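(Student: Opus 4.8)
The plan is to mirror the proof of Theorem~\ref{thm:core-lemma-vs-opt} almost verbatim, now feeding Lemma~\ref{lem:core-agent-arrival} instead of Lemma~\ref{lem:core-k-growth-general}. The only genuinely new ingredient is relating $\E[w(\Greedy(\delta(Y)))]$ to $w(\OPT(E))$, where $Y$ is now a random subset of \emph{agents} rather than of edges, so $\delta(Y)$ is the (random) set of edges incident to the sampled agents.

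First I would note that, by Lemma~\ref{lem:k-growth-combinations}, the combination $\cF = (E,\cI)$ of the $k_j$-growth and $\ell_j$-growth systems is itself a $k$-growth system, hence a $k$-system, so by Proposition~\ref{prop:greedy-matroid-ksystem} the weighted greedy algorithm on any ground set returns a $1/k$-approximate maximum-weight basis. Applying this to the ground set $\delta(Y)$ gives $w(\Greedy(\delta(Y))) \ge \tfrac1k\,w(\OPT(\delta(Y))) \ge \tfrac1k\,w(\OPT(E)\cap\delta(Y))$, the last step because $\OPT(E)\cap\delta(Y)$ is a feasible subset of $\delta(Y)$ by downward closedness of $\cI$.

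Next comes the key (and only slightly delicate) step: evaluating $\E[w(\OPT(E)\cap\delta(Y))]$. Here I would invoke the fact, established at the start of Section~\ref{sec:agent-arrival}, that $\cF$ incorporates the matching constraint via the parallel-extension construction, so $\OPT(E)$ is a feasible matching; in particular, distinct edges of $\OPT(E)$ are incident to distinct agents. Consequently, for each $e\in\OPT(E)$ the event $e\in\delta(Y)$ is exactly the event that the unique agent incident to $e$ lies in $Y$, and these events are mutually independent across $e\in\OPT(E)$, each of probability $p$. Linearity of expectation then yields $\E[w(\OPT(E)\cap\delta(Y))] = p\cdot w(\OPT(E))$, hence $\E[w(\Greedy(\delta(Y)))] \ge \tfrac{p}{k}\,w(\OPT(E))$.

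Finally I would substitute this estimate into Lemma~\ref{lem:core-agent-arrival}: the general bound becomes $\E[w(\Core(\Rel(Y)))] \ge \tfrac{(p-(1-p)k)(1-p)}{pk}\,w(\OPT(E))$, which is the \emph{same} function of $p$ as in the edge-arrival case, so the identical optimization $p=\sqrt{1-1/(k+1)}$ gives $\bigl(2+\tfrac1k-2\sqrt{1+\tfrac1k}\bigr)w(\OPT(E)) > \bigl(\tfrac{1}{4k^2}-\tfrac{1}{8k^3}\bigr)w(\OPT(E))$; the matchoid case is handled analogously using the sharper factor $\tfrac{(p-(1-p)(k-1))(1-p)}{pk}$ together with $p=0$ for $k=1$ and $p=1-1/(2k)$ for $k\ge 2$. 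I expect the main obstacle to be nothing more than making the agent-sampling argument precise — that is, being careful that restricting $\delta(Y)$ to $\OPT(E)$ behaves exactly like an independent Bernoulli-$p$ sample of the edges of $\OPT(E)$, which is where the matching structure of $\OPT(E)$ is essential; the remainder is a line-by-line transcription of the edge-arrival proof of Theorem~\ref{thm:core-lemma-vs-opt}.
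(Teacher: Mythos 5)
Your proof is correct and follows essentially the same route as the paper's: invoke the $k$-system $1/k$-greedy bound, lower-bound $\E[w(\Greedy(\delta(Y)))]$ by $\tfrac{p}{k}w(\OPT(E))$, then plug into Lemma~\ref{lem:core-agent-arrival} and optimize $p$. You also correctly identify and justify the one step the paper glosses over — that $\E[w(\OPT(E)\cap\delta(Y))]=p\,w(\OPT(E))$ hinges on $\OPT(E)$ being a matching so each optimal edge is incident to a unique independently-sampled agent (though note your matchoid factor has a stray $p$ in the denominator, and the $k=1$ case needs the paper's separate direct-sum argument rather than plugging $p=0$ into the formula).
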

\begin{proof}
Since the system $(E, \cI)$ is a $k$-system the weighted-greedy algorithm yields a $1/k$-approximation with respect to the weight of $\OPT(E)$. Therefore,
\[
\E[w(\Greedy(\delta(Y)))] \geq \frac{1}{k} \E[w(\OPT(\delta(Y)))] \geq \frac{1}{k} \E[w(\OPT(E)\cap \delta(Y))] = \frac{p}{k} w(\OPT(E)).
\]
Thus, by Lemma~\ref{lem:core-agent-arrival},
\[
\E\brk{w\prn{\Core\prn{\Rel(Y)}}} \geq \frac{(p - (1 - p) k) (1 - p)}{p k} w(\OPT(E)).
\]
Optimizing $p$ yields $p = \sqrt{1 - \frac{1}{k+1}}$ and
\[
\E\brk{w\prn{\Core\prn{\Rel(Y)}}} \geq \prn{2 + \frac{1}{k} - 2\sqrt{1 + \frac{1}{k}}} w(\OPT(E)).
\]
Now, consider the case where all components are matroids. If $k = 1$, then the components of $\calF$ are pairwise disjoint and thus the system is a direct sum of matroids, so it is itself a matroid. In this case, for $p = 0$ we have $\Rel(Y) = E$ and, for all $j \in [M]$, $\Core_j(E) = \OPT(E)$. Thus,
\[
\E\brk{w\prn{\Core\prn{\Rel(Y)}}} = w\prn{\Core(E)} = w(\OPT(E)).
\]
If $k \geq 2$, by Lemma~\ref{lem:core-agent-arrival},
\[
\E\brk{w\prn{\Core\prn{\Rel(Y)}}} \geq \frac{(p - (1 - p) (k-1)) (1 - p)}{k} w(\OPT(E)).
\]
Optimizing $p$ yields $p = 1 - \frac{1}{2k}$ and
\[
\E\brk{w\prn{\Core\prn{\Rel(Y)}}} \geq \frac{1}{4k^2} w(\OPT(E)).\qedhere
\]
\end{proof}

\subsection{Order-Oblivious Core-Selecting Algorithms}

As discussed previously, for edge arrivals, our guarantees follow directly by using the Core lemma in conjunction with our black-box algorithm from Theorem~\ref{thm:alg-general}. For agent arrivals, this is no longer the case, for two main reasons. 

\paragraph{Problems with the Edge-Arrival Approach.}
First, suppose we try to repeat the approach of our black-box algorithm from Theorem~\ref{thm:alg-general} for edge-arrival: split the remaining agents into two parts $S_1$ and $S_2$, use $S_1$ to define a calling order, and then process $S_2$ in that order. Of course, we cannot ``call'' individual edges and when we ``call'' an agent, we observe all their incident edges simultaneously. The problem arises from the fact that we cannot ``call'' all edges incident to a single item. This asymmetry implies that the order in which our algorithm should ``call'' the agents should depend on the $\Hull$ of $\cF_{A,E}$, i.e. the system from the agent side, to guarantee that any agent whose top-relevant edge lies in the core of $\cF_{A,E}$ has a constant probability of being considered for selection.

However, even if we call the agents in the order specified by the $\Hull$ of $\cF_{A,E}$, this still does not resolve all our problems, as we still might not be able to select a decent fraction of the common core of $\cF_{A,E}$ and $\cF_{B,E}$. This is because the edges that our algorithm selects (that are in the core of the top-relevant edges of the sample with respect to $\cF_{A,E}$) may be dependent in the item-side system $\cF_{B,E}$, since we do not have any guarantees that these edges are also in the core of the top-relevant edges of the sample with respect to $\cF_{B,E}$. Ideally, we would want to select edges that are also in the item-side core, but the arrival order defined by the agent side does not ensure we pick them; indeed, each time we select an edge outside the item core, we may block one or many later edges in the item core.

To circumvent this issue, we focus our attention to systems $\cF_{B,E}$ on the item side that admit an \emph{order-oblivious core-selecting algorithm}; one that, after a uniformly random sample phase, ensures that every element in the core of $\cF_{B,E}$ that was not sampled is selected with constant probability, regardless of the arrival order of the elements in the sampling phase.

\begin{definition}\label{def:oocs}
Given an independence system $(S,\cI)$ with hidden weights on the elements, we say that an algorithm $\calA_r$ that uses $r$ internal random bits and selects a set of elements $\ALG \in \cI$ is $(q,\alpha)$ \emph{order-oblivious core-selecting} (OOCS) if, for any $X \subseteq S$ and any $e \in \Core(X)$, if $Y$ is a random set where every element of $X - e$ appears independently with probability $q$, then
\[
\E_{Y \sim X - e}\brk{ \min_{\sigma} \Pr_r\brk{e \in \ALG \midd e \notin Y}} \geq \alpha,
\]
where the minimum is taken over all  orderings $\sigma$ of the elements in $X \setminus Y$.
\end{definition}
To better explain the above definition, we describe how an $(q, \alpha)$-OOCS algorithm $\calA_r$ works.
\begin{enumerate}\itemsep0em
    \item Before the process begins, $\calA_r$ is presented with $(S, \cI)$.

    \item An adversary is then allowed to delete an arbitrary set of elements from $S$ and restrict the true ground set of the process to an arbitrary $X \subseteq S$ that is \emph{unknown} to $\calA_r$.
    
    \item The process begins and $\calA_r$ observes a random subset $Y \subseteq X$, which we call the sample, that includes each element of $X$ independently with probability $q$. $\calA_r$ learns the identities and weights of the elements in $Y$, and does not select any of them.
  
    \item $\calA_r$ then observes the remaining elements $\overline{Y} = X \setminus Y$ in an adversarial order that may depend on the sample $Y$, but not on $\calA_r$'s internal random bits $r$.
  
    \item Upon observing an element of $\overline{Y}$, $\calA_r$ must immediately and irrevocably decide whether to select it -- adding it to $\ALG$ -- subject to maintaining $\ALG \in \cI$.

    \item For any arbitrary restriction $X \subseteq S$, and conditioning only on the event that an element $e \in \Core(X)$ is not sampled, i.e. on $e \in \overline{Y}$, we require that the expected probability that $\calA_r$ selects $e$, regardless of the arrival order of the elements in $\overline{Y}$ in the selection phase, is at least $\alpha$. The expectation is over the random sample $Y$ and the probability is over the algorithm’s internal random bits $r$.
\end{enumerate}

Notice that Definition~\ref{def:oocs} is similar to the notion of a $c$-OPT competitive algorithm for random-order matroid intersection secretary problems, introduced by \cite{fsz-secretary-framework-matroid-intersection}, except that here we require selecting \emph{core elements} rather than \emph{elements in the optimal basis}. However, it is actually a slightly stronger requirement since an order-oblivious core-selecting algorithm has to work for any restriction of the original system, without knowing the true ground set, and give the same guarantees. This stronger requirement is necessary for our algorithm, in order to resolve our final obstacle, which is that our agent-side sample does not induce a uniformly random sample of items or even of item-adjacent edges. We describe our idea to circumvent this last problem in Section~\ref{sec:agents-algorithm}.

Given the above, one may wonder whether $(q, \alpha)$-OOCS algorithms exist for constant $q$ and $\alpha$ for any feasibility constraints. As it turns out, several matroid classes have known algorithms that are $(q, \alpha)$-OOCS for constant $q$ and $\alpha$, including uniform, partition, laminar and co-graphic matroids. In addition, even though no order-oblivious core-selecting algorithm with good guarantees is known for transversal matroids, graphic matroids and knapsack systems, we discuss in Section~\ref{sec:agent-remarks} how to transform them into systems that we can use for our algorithms with minimal losses in the competitive ratio guarantee.

Setting aside briefly the requirement for the algorithm to work for arbitrary restrictions, notice that, for matroids, every order-oblivious algorithm also selects elements in the core of any ground set, since the elements of the core are exactly the elements of the optimal basis. However, this need not be the case for general independence systems.

Next, we present an example of a simple $(1/2, 1/2)$-OOCS algorithm for rank-$1$ matroids.
\begin{example}
Let $(S,\cI)$ be a uniform rank-$1$ matroid. The algorithm takes the sample, computes the maximum-weight sampled element $e^*$ and, in the non-sampled phase, selects the first observed element $f$ with $w(f) > w(e^*)$. If the sample is empty, it selects the first non-sampled element it observes.
Notice that this algorithm does not require knowing $|S|$ a priori. To see that it is $(1/2,1/2)$-OOCS, fix any $X \subseteq S$ and let $e \in \Core(X) = \OPT(X)$ be the maximum-weight element of $X$. For $Y = S(q)$ and $q = 1/2$, we condition on $e \notin Y$. Feed $Y \cap X$ to the algorithm as the sample, and then present $X \setminus Y$
adversarially. Let $e_2$ be the element in $X$ with second-highest weight. With probability $1/2$, we have $e_2 \in Y$. In this case, $w(e^*) \ge w(e_2)$ and the only element $f$ that satisfies $w(f) > w(e^*)$ is $e$, so $e$ is selected in the second phase. Hence $\Pr\brk{e \in \ALG \midd e \in \overline{Y}} \geq 1/2$.
\end{example}

Next, we show that, for matroids, random-order constant-probability-competitive algorithms can be transformed to order-oblivious core-selecting algorithms.
\begin{observation}\label{obs:oocs-to-random-order-matroids}
For any matroid $M = (S,\cI)$, if $\calA$ is a $(q,\alpha)$-OOCS algorithm for $M$, there exists a $(1-q) \alpha$-probability competitive algorithm for $M$, when the elements in $S$ arrive in uniformly random order
\end{observation}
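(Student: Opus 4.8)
The plan is to derive a random-order algorithm $\calB$ for $M$ by running the OOCS algorithm $\calA$ with a suitably chosen \emph{prefix} of the random-order stream playing the role of the sample. The only real design choice is the length of that prefix: I want the sampled \emph{set} to be distributed exactly as an independent-$q$ subset of $S$, which is what Definition~\ref{def:oocs} requires, while also being a genuine prefix of the arrival sequence so that sample elements are never interleaved with selection elements (interleaving would take $\calA$ outside the two-phase regime where its guarantee is stated).

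Concretely, let $n=|S|$. Algorithm $\calB$ first draws $\ell\sim\Bin(n,q)$, observes the weights of the first $\ell$ arriving elements (this set is the sample $Y$) and selects none of them; it then feeds each of the remaining $n-\ell$ elements to the selection phase of $\calA$ in arrival order, adding an element to its output exactly when $\calA$ does. Conditioned on $|Y|=\ell$, the first $\ell$ elements of a uniformly random permutation form a uniformly random $\ell$-subset of $S$, and since $\ell\sim\Bin(n,q)$ this is exactly the distribution in which each element of $S$ is included in $Y$ independently with probability $q$; moreover $Y$ is a prefix, so $\calB$ is a legal online algorithm and $\calA$ is simulated within its intended regime. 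The output of $\calB$ lies in $\cI$ because $\calA$ always maintains $\ALG\in\cI$.

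For the analysis, fix $f\in\OPT=\OPT(S)$, the unique maximum-weight basis of $M$. Since $M$ is a matroid, Lemma~\ref{lem:prim-core-matroids} gives $\Core(S)=\OPT(S)$, so $f\in\Core(S)$, and I apply the OOCS guarantee of $\calA$ with ground set $X=S$ and element $e=f$. The event $f\notin Y$ has probability $1-q$; conditioned on it, $Y$ is an independent-$q$ subset of $S-f$ and the elements of $S\setminus Y$ reach $\calA$'s selection phase in the (random, possibly $Y$-dependent) arrival order $\sigma$, which is permitted by Definition~\ref{def:oocs}. Because the OOCS bound holds for the worst ordering, $\Pr[f\in\ALG\mid Y,\sigma]\ge\min_{\sigma'}\Pr[f\in\ALG\mid f\notin Y]$ for every realization of $\sigma$; averaging first over $\sigma$ and then over $Y$ conditioned on $f\notin Y$, and invoking Definition~\ref{def:oocs}, yields $\Pr[f\in\ALG\mid f\notin Y]\ge\alpha$. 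Hence $\Pr[f\in\ALG]\ge(1-q)\alpha$ for every $f\in\OPT$, which is exactly $(1-q)\alpha$-probability-competitiveness.

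I expect the only genuinely subtle point to be the distributional matching in the second paragraph: arranging that the sample is simultaneously a stream prefix and an element-wise independent-$q$ set, since the naive choice of flipping an independent $q$-coin per element interleaves sample and selection elements and breaks the simulation. Randomizing the prefix length (equivalently, viewing the uniformly random order through i.i.d.\ uniform arrival times and sampling those below $q$) resolves this. A secondary bookkeeping point is the order of quantifiers: Definition~\ref{def:oocs} bounds $\E_Y$ of $\min_\sigma$, whereas the simulation produces a random $\sigma$, so one must pass to the worst-case order before taking the expectation over $\sigma$. Everything else is immediate from $\Core(S)=\OPT(S)$ for matroids and the OOCS property.
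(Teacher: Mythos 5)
Your proof is correct and follows the same route as the paper's own terse proof: condition on $f\notin Y$ (probability $1-q$), use $\Core(S)=\OPT(S)$ for matroids, bound the average over the random arrival order by the worst-case order, and invoke the OOCS guarantee. The one detail you add explicitly --- coupling the random-order prefix with an independent-$q$ sample via $\ell\sim\Bin(n,q)$ (equivalently, i.i.d.\ arrival times below $q$), so that the two-phase OOCS algorithm is simulated without interleaving --- is left implicit in the paper but is the standard device the paper itself uses in its algorithms.
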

\begin{proof}
Let $\ALG$ denote the set of elements selected by $\calA$ in the random-order setting, $Y$ denote the sample and $\overline{Y} = S \setminus Y$. For every $e\in \OPT$,
\begin{align*}
\Pr\brk{e\in \ALG} &= \Pr\brk{e \in \overline{Y}} \: \E_{Y - e}\brk{ \Pr\brk{e \in \ALG \midd e \in \overline{Y}} } \\
&\geq (1-q) \: \E_{Y - e}\brk{ \min_\sigma \Pr\brk{e \in \ALG \midd e \in \overline{Y} , \sigma} } \geq (1-q) \alpha.\qedhere
\end{align*}
\end{proof}

With regard to our model, when the given item-side constraint is a combination of individual systems, having an OOCS algorithm for the individual components suffices to obtain an OOCS algorithm for the overall constraint, as the following lemma shows. Recall that, for a combination of $M$ systems, $\PCore_j$ denotes the primitive core of the $j$-th individual system for $j \in [M]$, and by Definition~\ref{def:core}, for any $X \subseteq S$,
\[
\Core(X) \coloneqq \bigcap_{j = 1}^M \prn{(X \setminus S_j) \cup \PCore_j(X \cap S_j)}.
\]

\begin{lemma}\label{lemma:order-oblivious}
Let $\cF = (S,\cI)$ be a combination of systems $\set{(S_j,\cI_j)}_{j = 1}^M$ where each $(S_j,\cI_j)$ admits a $(q_j,\alpha_j)$-OOCS algorithm $\calA_j$. Let $k = \max_{e \in S} \abs{\set{j \colon e \in S_j}}$ denote the maximum number of individual systems any element participates in and $q^{\star} = \max_{j = 1}^M q_j$. Then, for
\[
q \coloneqq 1 - \prn{1 - q^{\star}}^k \quad \text{ and } \quad \alpha \coloneqq \min_{e \in S} \prod_{j \colon e \in S_j} \alpha_j \: \:,
\]
there exists a $(q, \alpha)$-OOCS algorithm $\calA$ for the combination $(S, \cI)$ of $\set{(S_j, \cI_j)}_{j = 1}^M$.
\end{lemma}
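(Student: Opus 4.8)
The plan is to run the component algorithms $\calA_j$ in parallel, accepting an element globally exactly when every component containing it accepts it. The one delicate point is that each $\calA_j$ must be fed a sample with its own marginal rate $q_j$, and these component samples must be made \emph{mutually} independent; the product form of $\alpha$ comes precisely from this independence together with the use of independent internal random bits across components.

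Concretely, $\calA$ receives the global sample $Y$ (each element of the hidden restriction $X$ present independently with probability $q$) and splits it into sets $Y_1,\dots,Y_M$ with $Y_j\subseteq Y\cap S_j$ as follows. Independently for every $e\in Y$, declare $e$ ``active'' with probability $\bigl(1-\prod_{j:\,e\in S_j}(1-q_j)\bigr)/q$; this number is at most $1$ exactly because $q_j\le q^{\star}$ and $|\{j:e\in S_j\}|\le k$ force $\prod_{j:\,e\in S_j}(1-q_j)\ge(1-q^{\star})^{k}=1-q$. If $e$ is active, draw the set of components that receive $e$ from the law of independent $\mathrm{Bernoulli}(q_j)$ variables conditioned on at least one of them being $1$. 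A short computation shows that, over the randomness of $Y$ together with this re-randomization, for each fixed $e$ the events $\{e\in Y_j\}_{j:\,e\in S_j}$ are independent with $\Pr[e\in Y_j]=q_j$; since distinct elements use fresh randomness, $Y_1,\dots,Y_M$ are mutually independent, each $Y_j$ is a rate-$q_j$ random subset of $X\cap S_j$, and $e\notin Y$ forces $e\notin Y_j$ for all $j$. The algorithm then runs each $\calA_j$ autonomously with sample $Y_j$ (using that an OOCS algorithm need not know its own ground set): at the start of the selection phase it feeds $\calA_j$ the elements of $(Y\setminus Y_j)\cap S_j$ in an arbitrary fixed order, and thereafter, whenever an element $e\in\overline Y:=X\setminus Y$ is revealed, it passes $e$ to every $\calA_j$ with $e\in S_j$ and accepts $e$ globally iff all of these $\calA_j$ accept it.

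Feasibility is immediate: letting $\ALG_j\in\cI_j$ be the output of $\calA_j$, we have $\ALG\cap S_j\subseteq\ALG_j$ for each $j$, hence $\ALG\cap S_j\in\cI_j$ and $\ALG\in\cI$. For the core-selecting guarantee, fix $X\subseteq S$ and $e\in\Core(X)$; by the definition of the core of a combination, $e\in\PCore_j(X\cap S_j)$ for every $j$ with $e\in S_j$. Condition on $e\notin Y$, so that $e\notin Y_j$ for all $j$ while the conditional law of $(Y_1,\dots,Y_M)$ is still that of mutually independent sets with $Y_j$ a rate-$q_j$ subset of $(X\cap S_j)-e$. For any adversarial order, independence of the internal bits $r_j$ lets the event ``$e\in\ALG_j$ for all $j$'' factor over $j$; lower-bounding each factor by its worst-case-over-orders value $g_j(Y_j):=\min_{\sigma_j}\Pr_{r_j}[e\in\ALG_j\mid Y_j,\sigma_j]$ (the order that $\calA_j$ actually sees is one admissible $\sigma_j$), and then using the mutual independence of the $Y_j$, we get
\[
\Pr\bigl[e\in\ALG \mid e\notin Y\bigr]\ \ge\ \E\Bigl[\prod_{j:\,e\in S_j} g_j(Y_j)\Bigr]\ =\ \prod_{j:\,e\in S_j}\E\bigl[g_j(Y_j)\bigr]\ \ge\ \prod_{j:\,e\in S_j}\alpha_j\ \ge\ \alpha,
\]
where the penultimate inequality is, factor by factor, exactly the $(q_j,\alpha_j)$-OOCS guarantee of $\calA_j$ applied to the restriction $X\cap S_j$ and its core element $e$. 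As this holds for every order, it survives the inner $\min_\sigma$ and the outer $\E_Y$, which is the definition of $(q,\alpha)$-OOCS.

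The step I expect to be the main obstacle is the sampling decomposition: one must check simultaneously that the re-randomization yields component samples that are \emph{mutually} independent with exact marginals $q_j$, that they are contained in the given sample $Y$ (so a globally unsampled element is available in every $\calA_j$'s selection phase), and that conditioning on $e\notin Y$ preserves all of this, so that each $\calA_j$'s guarantee can be invoked verbatim. A secondary subtlety is that the $\calA_j$ are run autonomously and may ``accept'' elements that are globally sampled or that a different component rejects; this only makes $\calA_j$'s state more conservative and never changes the event ``$e\in\ALG_j$'', so it affects neither feasibility nor the probability bound.
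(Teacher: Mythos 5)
Your proposal is correct and follows essentially the same approach as the paper: simulate mutually independent per-component samples $Y_j\subseteq Y$ with exact marginal rates $q_j$, run each $\calA_j$ in parallel, accept an element iff every containing $\calA_j$ accepts it, and bound the acceptance probability of a core element by a product using the mutual independence of the $(Y_j)_j$ and of the components' internal bits. The only cosmetic differences are that you generate the $Y_j$'s by a direct conditional-Bernoulli re-randomization rather than the paper's rejection sampling of the latent uniforms $t'_j(e)$, and that you feed the surplus elements $(Y\setminus Y_j)\cap S_j$ to $\calA_j$ at the \emph{start} rather than the end of its selection phase -- both are equivalent because the OOCS guarantee quantifies over all presentation orders.
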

\begin{proof}
For every element $e$ and every $j \in [M]$, we independently draw $t'_j(e) \sim U[0,1]$ from the uniform distribution in the interval $[0,1]$. Then, for every $e \in S$ we denote by $j_{\ell}(e)$ the index of the $\ell$-th individual system in the ordering $(S_1, \cI_1), (S_2, \cI_2), \dots, (S_M, \cI_M)$ that $e$ participates in. For $\ell = 1, 2, \dots, k$, let $t_\ell(e)$ be equal to $t'_{j_{\ell}(e)}$. If $e$ participates in $h < k$ systems, we let $t_{h+1}(e), \dots, t_k(e) \sim U[0,1]$ be drawn independently from all other random variables.

Consider the following sets:
\[
\forall j \in [M], \: \: Y_j \coloneqq \set{e \in S_j \midd t_j(e) \leq q_j}, \quad \text{ and } \quad Y \coloneqq \set{e \in S \midd \exists \ell \in [k] \: \text{ s.t. } \: t_\ell(e) \leq q^{\star}}.
\]
Notice that for every $j \in [M]$ we have $Y \supseteq Y_j$, since $q_j \leq q^{\star}$. Furthermore, for every element $e \in S$, we have $\Pr[e \in Y] = 1 - \prn{1 - q^{\star}}^k = q$, since even if $e$ participates in less than $k$ systems, we still toss $k$ coins for $e$ and observe whether or not one of them is less than $q^{\star}$ to decide whether $e \in Y$ or not. Furthermore, the events $(e\in Y)_{e\in S}$ are mutually independent by construction. Therefore, $Y$ is a truly random sample of $S$ with inclusion probability $q$. Similarly, the random sets $(Y_j)_{j\in [M]}$ are also mutually independent random variables since they depend on the result of different random coins.

Let $X \subseteq S$ denote the unknown arbitrary set that the adversary restricts our ground set to before the process begins. Our algorithm will receive a sample in which each element of $X$ is independently selected with probability $q$. Notice that $Y \cap X$ is a random set distributed exactly according to the sample of our distribution, so in what follows we assume that this set is really $Y \cap X$. Observe also that $Y_j$ is a random subset of $S_j$ such that for every element $e \in S_j$, we have $e \in Y_j$ independently with probability $q_j$. 

By the same logic as before, we have that $Y_j \cap X$ is a random set distributed exactly according to the sample expected by $\calA_j$, restricted to $X$. Since $Y_j \subseteq Y$ for all $j$, our algorithm $\calA$ proceeds as follows:
\begin{enumerate}
    \item $\calA$ receives the set of sampled elements which, as discussed, is identically distributed to $Y \cap X$ so we assume it is really that set.
    \item Next, $\calA$ will create simulated realizations of the random variables $t'_\ell(e)$ for $\ell = 1, 2, \dots, k$ and $e\in X\cap Y$ that are compatible with the definition of $Y$, so they are identically distributed. To do that each element $e \in Y \cap X$ and $j \in [M]$, $\calA$ will draw $t'_j(e) \sim U[0,1]$, as described above. It will repeat this process until one of the $t_\ell(e)$'s is less than $q$, for $\ell = 1, 2, \dots, k$. When this happens, the conditioning is correct and we assume that these are the correct draws of the $t_\ell(e)$'s. 
    \item  Using the previous step, for every $j \in [M]$, $\calA$ creates a simulated set by including every element $e\in Y\cap X\cap S_j$ for which the random variable $t_\ell(e)$, for the value of $\ell$ that corresponds to system $(S_j, \cI_j)$, is less than $q_j$. The resulting set is now distributed identically to $Y_j \cap X$, so in what follows we assume that this set is really $Y_j \cap X$.
    \item For every individual system $(S_j, \cI_j)$, $\calA$ feeds $Y_j \cap X$ to $\calA_j$ as its sample. Since $Y_j \cap X$ is identically distributed to the sample expected by $\calA_j$ restricted to $X$, we now start the second, selection phase for every system $j \in [M]$.
    \item Let $\overline{Y} = X \setminus Y$ denote the set of elements in the selection phase. Assume that $\calA$ observes them in an adversarial arrival order $\pi$.
    \item At every step $i$, $\calA$ observes an element $e_i$. For every $j$ such that $e_i \in S_j$, $\calA$ ``feeds'' $e_i$ to $\calA_j$ in the selection phase. If all such $\calA_j$ accept $e_i$, $\calA$ accepts $e_i$ as well; otherwise it rejects $e_i$.
    \item After all elements of $\overline{Y}$ end, $\calA$ feeds all the elements of $(Y\setminus Y_j) \cap S_j$ to $\calA_j$, for each $j$ in arbitrary order. Nothing is really selected here, but the process is done anyway so that each $\calA_j$  receives all the elements from $\overline{Y}\cap S_j$ on its selection phase. Then, the process ends.
\end{enumerate}
We now show that $\calA$ is $(q,\alpha)$-OOCS. Previously, we showed that the sample of $\calA$ is distributed identically to $Y \cap X$, which is a random subset of $X$ where each element appears independently with probability $q$.
Fix an element $e \in \Core(X)$ and let $\ALG$ denote the set selected by $\calA$. Next, we show that
\[
\E_{Y \sim X - e}\brk{ \min_{\sigma} \Pr_r\brk{e \in \ALG \midd e \notin Y}} \geq \alpha.
\]
We have
\begin{align*}
\E_{Y \sim X - e}&\brk{ \min_{\sigma} \Pr_r\brk{e \in \ALG \midd e \notin Y}} \\
&= \E_{Y \sim X - e}\brk{\min_{\sigma} \Pr_r\brk{\bigwedge_{j \colon e \in S_j} \calA_j \text{ accepts } e \midd e \notin Y}} \\
&= \E_{Y \sim X - e}\brk{\min_{\sigma} \Pr_r\brk{\bigwedge_{j \colon e \in S_j} \calA_j \text{ accepts } e \midd \forall j \text{ s.t. } e \in S_j, \: \: e \notin Y_j}} \\
&= \E_{Y_1 \overset{q_1}{\sim} (X - e) \cap S_1, \dots, Y_M \overset{q_M}{\sim} (X - e) \cap S_M}\brk{\min_{\sigma} \prod_{j \colon e \in S_j} \Pr_r\brk{\calA_j \text{ accepts } e \midd e \notin Y_j}} \\
&= \prod_{j \colon e \in S_j} \E_{Y_1 \overset{q_1}{\sim} (X - e) \cap S_1, \dots, Y_M \overset{q_M}{\sim} (X - e) \cap S_M}\brk{\min_{\sigma} \Pr_r\brk{\calA_j \text{ accepts } e \midd e \notin Y_j}} \\
&\geq \prod_{j \colon e \in S_j} \alpha_j \geq \min_{e' \in S} \prod_{j \colon e' \in S_j} \alpha_j = \alpha,
\end{align*}
where the second equality follows from the fact that, for any two systems $(S_j, \cI_j)$ and $(S_{j'}, \cI_{j'})$ for which $e \in S_j$ and $e \notin S_{j'}$, it is irrelevant to $\calA_j$ whether or not $e \in Y_{j'}$, given that $e \notin Y_j$, the third equality follows from the fact that the $Y_j$'s are independent subsets of $(X - e) \cap S_j$, conditioned on $e \notin Y$, the fourth equality follows from the independence of the decisions of all $\calA_j$'s and the fifth inequality follows from the fact that every $\calA_j$ is a $(q_j, \alpha_j)$-OOCS algorithm.
\end{proof}

\subsection{Free-Order Secretary Algorithm for Agent Arrivals}\label{sec:agents-algorithm}

As discussed earlier, we are now faced with our final obstacle. Even if we have order-oblivious core-selecting algorithms for the constraints on the item side, they still require a uniformly random sample to work. However, when we sample agents, we simultaneously observe all edges incident to those agents. From the perspective of the item side, this set is neither a uniformly random sample of item-adjacent edges, nor does it induce a uniformly random sample of the corresponding items. This is easy to see: consider an extreme case where, originally, there are only two items, one connected to all agents and the other connected only to a single agent. After our parallel extension construction and our transfer of $\cF_B$ from the items to the edges, every edge is connected to a unique agent and a unique item, but we have many copies of the first item and only a single copy of the second one. So a uniformly random sample of edges does not correspond to a uniformly random sample of items, and thus no such sample can be used an an input to an order-oblivious core-selecting algorithm.

However, we are able to get around this issue by observing that, after our very first sampling phase, each agent contributes at most one greedy-relevant edge -- their top greedy-relevant edge, if one exists --  to the set of greedy-relevant edges. Now the idea is to do the following: after the first sample, we restrict our ground set for the item side to the (unknown) set of top-relevant edges. Let's call this set $X$. Essentially, $X$ forms the true ground set of our order-oblivious core-selecting algorithms for the item side. Now, even though we cannot ``call'' items, we can perform another sampling phase, independently from our first sampling phase, to ensure that the sampled top-relevant edges in this second sampling phase forms a uniformly random subset of the new ground set $X$.

This insight motivates the stronger requirement we impose on order-oblivious core-selecting algorithms: they must maintain their guarantees even if, at the beginning of the process, an arbitrary and unknown subset of the ground set is removed (corresponding here to the edges that are not top-relevant from the agent side). In other words, the algorithm must perform equally well under any adversarial restriction of the ground set applied before the process begins.

We finally have all the ingredients we need to design an algorithm for the agent-arrival setting. Note that, for the $\Hull$ function, we use the $\Hull$ of $\cF_A$, the system on the agent side.
Algorithm \ref{alg:free-agent} receives as input the following; see Figure~\ref{fig:sampling-order-oblivious} for an illustration:
\begin{enumerate}\itemsep0em
    \item $\cF_{A,E} = (E, \cI_A)$ is the (edge-induced) agent-system.
    \item $\cF_{B,E} = (E, \cI_B)$ is the (edge-induced) item-system.
    \item $\calA$ is a $(q, \alpha)$-OOCS algorithm for $\cF_{B,E}$.
    \item A parameter $p$.
\end{enumerate}

\begin{algorithm}[th!]
\DontPrintSemicolon
\KwResult{An independent set $\ALG$}
    $\ALG \gets \emptyset$ \\
    $p_1 \gets p$ \\
    $p_2 \gets p_1 + (1 - p_1)q$ \\
    $p_3 \gets (1+p_2)/2$ \;
    \lFor{each $a \in A$} {
        Choose $t_a$ independently and uniformly from $(0,1)$
    }
    $Y \gets \set{a \midd t_a \in (0, p_1)}$ \;
    $S_0 \gets \set{a \midd t_a \in [p_1,p_2)}$ \;
    $S_1 \gets \set{a \midd t_a \in [p_2, p_3)}$ \;
    $S_2 \gets \set{a \midd t_a \in [p_3, 1]}$ \;
    Observe (without accepting) all agents in $Y \cup S_0\cup S_1$

    $\textbf{unseen} \gets S_2$ \tcp*{The set of agents not yet revealed}
    Compute the edge-sets $E_0 = \Rel(Y) \cap \delta(S_0)$ and $E_1 = \Rel(Y)\cap \delta(S_1)$\;
    Run $\calA$ on $\cF_B|_{\Rel(Y)} = (E, \cI_B)|_{\Rel(Y)}$ using $E_0$ as the sample \;
    
    Sort the elements of $E_1$ in decreasing order of weights as $e_1,\dots, e_m$. Let $a_1, \dots, a_m$ be the respective agents such that $e_i$ is incident to $a_i$ \;
    \For{$j = 1$ to $m$}{
        Let $Q = \textbf{unseen} \cap \Hull_A(\set{a_1,\dots, a_j})$ \;
        \For{each $a \in Q$ in uniformly random order}{
            $\textbf{unseen} \gets \textbf{unseen} - a$ \tcp*{Reveal $a$}
            \lIf{$\toprel(a,Y) = \bot$}{skip to the next $a\in Q$}
            $f \gets \toprel(a,Y)$
            
            \If{$w(f) > w(e_j)$ and $\calA$ selects $f$ and $\ALG + f \in \cI_A\cap \cI_B$}{
                $\ALG \gets \ALG + f$
            }
        }
      }

    $Q \gets \textbf{unseen}$ \\
    \For{each $a \in Q$ in uniformly random order}{
            \lIf{$\toprel(a,Y) = \bot$}{skip to the next $a\in Q$}
            $f \gets \toprel(a,Y)$ 
            
            \If{$\calA$ selects $f$ and $\ALG + f \in \cI_A\cap \cI_B$}{
                $\ALG \gets \ALG + f$
            }
        }
    \Return $\ALG$
\caption{Agent-arrival Free-Order Secretary}\label{alg:free-agent}
\end{algorithm}

\begin{theorem}\label{thm:alg-general-agent}
Let $\cF_{A, E} = (E, \cI_A)$ be a combination of $\set{(E_{A,j}, \cI_{A,j})}_{j = 1}^{M_A}$, where each $(E_{A,j}, \cI_{A,j})$ is a $k_j$-growth system, and $\cF_{B, E} = (E, \cI_B)$ be a combination of $\set{(E_{B,j}, \cI_{B,j})}_{j = 1}^{M_B}$, where each $(E_{B,j}, \cI_{B,j})$ is an $\ell_j$-growth system. Also, let $\cF = (E, \cI)$ denote the intersection of $\cF_{A,E}$ and $\cF_{B,E}$ and $\calA$ be a $(q, \alpha)$-OOCS algorithm for $\cF_{B,E}$. Then, for $p = \sqrt{1 - {1}/{(k+1)}}$, Algorithm~\ref{alg:free-agent} has a utility-competitive ratio at least
\[
\frac{\alpha(1-q)}{4}\prn{2 - 2\sqrt{1 + \frac{1}{k}} + \frac{1}{k}} > \alpha(1-q) \prn{\frac{1}{16k^2} - \frac{1}{32k^3}}.
\]
Furthermore, if all components are matroids, i.e. if $(E,\cI_A \cap \cI_B)$ is a $k$-matchoid with $k_{j,A} = 1$ and $\ell_{j,B} = 1$ for every $j$ and $\abs{\set{j \colon e\in E_{A,j}}} + \abs{\set{j \colon e \in E_{B,j}}} \leq k$ for every $e \in E$, for $p = 1 - \frac{1}{2k}$, Algorithm \ref{alg:free-agent} has a utility-competitive ratio at least
\[
\frac{\alpha (1-q)}{16k^2}.
\]
\end{theorem}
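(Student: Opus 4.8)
The plan is to mirror the way Corollary~\ref{cor:alg-guarantee-matroidintersection} combined Lemma~\ref{lem:fsz-original-lemma} with Theorem~\ref{thm:alg-general}: I will prove that, conditioned on the first sample $Y$, every edge $f\in\Core(\Rel(Y))$ lies in the output $\ALG$ of Algorithm~\ref{alg:free-agent} with probability at least $\alpha(1-q)/4$. Unconditioning and summing over $f$ gives $\E[w(\ALG)]\ge\frac{\alpha(1-q)}{4}\,\E[w(\Core(\Rel(Y)))]$, and feeding the stated value of $p$ into Theorem~\ref{thm:core-lemma-agents-vs-opt} — $p=\sqrt{1-1/(k+1)}$ in general and $p=1-1/(2k)$ in the all-matroids case — lower-bounds $\E[w(\Core(\Rel(Y)))]$ by $(2+1/k-2\sqrt{1+1/k})\,w(\OPT(E))$ and by $\frac{1}{4k^2}w(\OPT(E))$ respectively. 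Multiplying by $\frac{\alpha(1-q)}{4}$ yields exactly the two displayed ratios, and the cruder bounds $\alpha(1-q)\bigl(\frac{1}{16k^2}-\frac{1}{32k^3}\bigr)$ and $\frac{\alpha(1-q)}{16k^2}$ follow from the same elementary estimate already used for Theorem~\ref{thm:core-lemma-vs-opt}. The edge-side bookkeeping (transferring $\cF_A,\cF_B$ to $\cF_{A,E},\cF_{B,E}$ by parallel extension so that feasible sets are automatically matchings, and noting $\Core(X)=\Core_A(X)\cap\Core_B(X)$ from Definition~\ref{def:core}) has been done at the start of the section, so everything reduces to the per-element bound.

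\textbf{The per-element bound.} Fix $Y$, set $X=\Rel(Y)$ and $f\in\Core(X)$ with incident agent $a_f$. Since $p_2=p_1+(1-p_1)q$ and $p_3=(1+p_2)/2$, conditioned on $Y$ each agent outside $Y$ falls independently into $S_0,S_1,S_2$ with conditional probabilities $q,(1-q)/2,(1-q)/2$; hence $E_0=\Rel(Y)\cap S_0$ is a $q$-random subset of $X$ — exactly the sample distribution required by the OOCS algorithm $\calA$ — and the remaining top-relevant edges are presented to $\calA$ in its selection phase in an order that is a deterministic function of $S_1$, the weights, $\Hull_A$, and the internal uniform orderings, but not of $\calA$'s internal bits $r$. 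I will work on the intersection of the events $\mathcal{E}_0=\{a_f\notin S_0\}$ (probability $1-q$), $\{a_f\in S_2\}$ (probability $1/2$ given $\mathcal{E}_0$), and $\{w(h_1)\ge w(h_2)\}$, where, copying the proofs of Theorems~\ref{thm:jsz-free-order-matroid} and~\ref{thm:alg-general} but read on the agent side, $h_1$ (resp.\ $h_2$) is the heaviest top-relevant edge with agent in $S_1$ (resp.\ $S_2$) such that $a_f\in\Hull_A$ of the agents of the heavier such edges. On this intersection the calling loop reveals $a_f$ exactly when $h_1$ becomes the threshold element (or in the final cleanup loop if $h_1=\bot$), and one verifies, just as in Theorem~\ref{thm:alg-general}, that every guard of Algorithm~\ref{alg:free-agent} then holds: $w(f)>w(e_j)$ because $f\in\Core_A(X)$ forces $w(f)>w(h_1)$; $\ALG+f\in\cI_A$ because $f\in\Core_A(X)$ together with $w(h_1)\ge w(h_2)$ and the monotonicity and basis-extension properties of $\Hull_A$ give $a_f\notin\Hull_A(V_A(\ALG))$ (using that each agent owns at most one top-relevant edge, so the edge-level hull collapses to the agent-level one); and $\ALG+f\in\cI_B$ because $\ALG$ only ever contains edges that $\calA$ accepted, so $\ALG\subseteq\ALG_\calA\in\cI_B$ and $\calA$'s acceptance of $f$ gives $\ALG+f\subseteq\ALG_\calA+f\in\cI_B$. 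Thus on this intersection $f\in\ALG$ precisely when $\calA$ accepts $f$, and since $f\in\Core_B(X)$, Definition~\ref{def:oocs} says $\E_{E_0}\bigl[\min_\sigma\Pr_r[f\in\ALG_\calA\mid f\notin E_0]\bigr]\ge\alpha$.

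\textbf{Main obstacle and conclusion.} The delicate point — and what I expect to be the crux — is that $\{w(h_1)\ge w(h_2)\}$ is not independent of $\{\calA\text{ accepts }f\}$: the $S_1$-versus-$S_2$ split of $X-f$ determines $h_1,h_2$ and simultaneously shapes the arrival order seen by $\calA$. I decouple them by writing this randomness as $(\pi_0,\pi_{12})$, with $\pi_0$ recording which edges of $X-f$ land in $S_0$ (equivalently, $\calA$'s sample $E_0$) and $\pi_{12}$ the $S_1/S_2$ split of the rest; conditioned on $\mathcal{E}_0$ these are independent. Given $\pi_0$ the sample is fixed and the split is symmetric, so $\Pr_{\pi_{12}}[w(h_1)\ge w(h_2)\mid\pi_0]\ge 1/2$; and for every realization of $\pi_{12}$ the induced arrival order $\sigma$ is $r$-independent, so $\Pr_r[\calA\text{ accepts }f\mid E_0,\sigma]\ge\min_{\sigma'}\Pr_r[\calA\text{ accepts }f\mid E_0,\sigma']$. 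Averaging over $\pi_{12}$ (contributing a factor $1/2$ from the $h$-symmetry and, independently, $1/2$ from $\{a_f\in S_2\}$), then over $\pi_0$ (contributing $\alpha$ by the OOCS guarantee), and multiplying by $\Pr[\mathcal{E}_0]=1-q$, gives $\Pr[f\in\ALG\mid Y]\ge\alpha(1-q)/4$. Combined with the Core Lemma this completes the proof; what remains is purely the arithmetic turning the Core Lemma bounds into the two stated constants.
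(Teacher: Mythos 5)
Your proposal is correct and follows the same overall route as the paper: condition on $Y$, establish a per-element bound $\Pr[f\in\ALG\mid Y]\ge\alpha(1-q)/4$ for each $f\in\Core(\Rel(Y))$ by combining the OOCS guarantee with the two-sided symmetry of the $S_1/S_2$ split, then plug into the agent-arrival Core Lemma (Theorem~\ref{thm:core-lemma-agents-vs-opt}) and do the same arithmetic. The one place where you go further than the paper, and correctly so, is the decoupling of $\{w(h_1)\ge w(h_2)\}\wedge\{a_f\in S_2\}$ from $\{\calA$ accepts $f\}$. The paper's displayed chain writes $\Pr[\cE_0\wedge\cE_1\wedge\cE_2]=\Pr[\cE_0]\cdot\Pr[\cE_1\wedge\cE_2]$ as an equality, which is not literally true: $\cE_0$ depends on the arrival order $\sigma$ fed to $\calA$, and $\sigma$ is shaped by the same $S_1/S_2$ coins that determine $\cE_1,\cE_2$. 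The paper gets away with it because the OOCS definition takes a $\min$ over $\sigma$, so the true statement is an inequality $\Pr[\cE_0\wedge\cE_1\wedge\cE_2\mid S_0-e]\ge\min_\sigma\Pr_r[\cE_0\mid\sigma]\cdot\Pr[\cE_1\wedge\cE_2\mid S_0-e]$; your $\pi_0/\pi_{12}$ factorization makes exactly this precise. Your parenthetical about the agent-level hull agreeing with the edge-level hull on top-relevant edges is also a detail the paper uses implicitly via the parallel-extension construction. (One small slip in the paper itself, not your proposal: the final display of the paper's proof has $\sqrt{1-1/k}$ where the theorem statement and Theorem~\ref{thm:core-lemma-agents-vs-opt} correctly have $\sqrt{1+1/k}$; you used the correct expression.)
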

\begin{proof}
First, it is clear that the algorithm returns an independent set since, before adding any edge $e$ to $\ALG$, the algorithm checks whether $\ALG + e \in \cI$.
We first condition on a fixed realization of the sampled set $Y$ of agents. Let
\[
\Core_{\text{Common}}(\Rel(Y)) = \Core_A(\Rel(Y)) \cap \Core_B(\Rel(Y)),
\]
where
\[
\Core_A(\Rel(Y)) = \bigcap_{j = 1}^{M_A} \Core_{A,j}(\Rel(Y)) \text{ and } \Core_B(\Rel(Y)) = \bigcap_{j = 1}^{M_B} \Core_{B,j}(\Rel(Y)),
\]
where $\Core_{A,j}$ ($\Core_{B,j}$ respectively) corresponds to the individual core function of the system component $(E_{A,j},\cI_{A,j})$ (resp. $(E_{B,j}, \cI_{B,j})$).
We show that each element in $\Core_{\text{Common}}(\Rel(Y))$ is selected with probability at least $\alpha(1-q)/4$. Let $e \in \Core_{\text{Common}}(\Rel(Y)) \subseteq \Rel(Y)$. Condition now on the event that $e \notin \delta(S_0)$ (i.e. $e \in \delta(S_1) \cup \delta(S_2)$).

Recall that, $e \in \Core_B(\Rel(Y))$, and observe that $E_0 = \Rel(Y) \cap \delta(S_0)$ is a random subset of $\Rel(Y)$ obtained in such a way that  every element of $\Rel(Y) - e$ is in $E_0$ independently with probability
\[
\frac{p_2 - p_1}{1 - p_1} = q.
\]
Since $\calA$ is $(q,\alpha)$-OOCS for $\cF_{B,E}$, if we ``feed'' $E_0$ as the sample of $\calA$, we conclude that, for every possible ordering $\sigma$ of $\Rel(Y) \setminus E_0$, the event
\[
\cE_0: e \text{ is accepted by } \calA \text{  when revealed},
\]
which depends only on the internal random coins of $\calA$ (recall that the random coins $r$ of $\calA$ are independent from the internal arrival order of the elements), satisfies 
\[
\E_{S_0 - e}\brk{\Pr_r\brk{\cE_0} \midd e \notin \delta(S_0)} \geq \alpha.
\]
Here we are crucially using the fact that $\calA$ is order-oblivious, since the order in which $\Rel(Y) \setminus E_0$ is presented to $\calA$ is chosen by the instructions of our algorithm.
Next, let $a_e$ be the agent in $S_0$ to which $e$ is incident. Condition on the realization of $S_0 - a_e$ -- in other words, $S_1 - a_e$ and $S_2-a_e$ are still undecided -- and let $E_2 = \Rel(Y) \cap \delta(S_2)$.

Let $h_1$ be the heaviest element in $E_1 - e$ such that $e$ is in the agent-hull of the elements with a weight greater than $h_1$ in $E_1 - e$. In other words,
\[
h_1 \coloneqq \argmax \set{w(h) \midd h \in E_1 - e, \text{ and } e \in \Hull_A(\set{g \in E_1 - e \midd w(g) \geq w(h)})}.
\]
Similarly for $h_2$ in $E_2-e$:
\[
h_2 \coloneqq \argmax \set{w(h) \midd h \in E_2 - e, \text{ and } e \in \Hull_A(\set{g \in E_2 - e \midd w(g) \geq w(h)})}.
\]
If there are no such elements, we set $h_1 = \bot$ and $h_2 = \bot$, respectively, with $w(\bot) = 0$.

Since, conditioned on $Y$ and $S_0-a_e$, the sets $(E_1 - e)$ and $(E_2 - e)$ are identically distributed, we have that $\Pr[w(h_1) \geq w(h_2)] = 1/2$, regardless of whether one or both of $h_1$ and $h_2$ are $\bot$. Next, we condition on the independent event that $a_e \in S_2$, and conclude that both $\mathcal{E}_1: w(h_1) \geq w(h_2)$ and $\mathcal{E}_2: a_e \in S_2$ occur together with probability at least $1/4$ given the previous conditions on $Y$ and $S_0-a_e$. So, in what follows, we also suppose that the two events $\mathcal{E}_1$ and $\mathcal{E}_2$ occur.

Next, recall that, $E_1 = \Rel(Y) \cap \delta(S_1)$ is ordered in decreasing order of weights as $e_1, \dots, e_m$. Let $j^*$ be such that $h_1=e_{j^*}$ (where $j^* = m+1$ if $h_1 = \bot$). By construction, $e$ is revealed by the algorithm precisely in iteration $j^*$, when it first enters the set $Q$. Furthermore, since $e \in \Core_A(\Rel(Y)) = \bigcap_{j = 1}^{M_A} \Core_{A,j}(\Rel(Y))$ we conclude, by the relations between $\Hull_A$ and $\Core_A$ and the monotonicity of $\Hull_A$ that
\[
e \notin \Hull_A(\set{g \in \Rel(Y) \midd w(g) > w(e)}) \supseteq \Hull_A(\set{g\in \Rel(Y)\cap \delta(S_1) \midd w(g) > w(e)}).
\]
Thus, $w(e) > w(h_1) \geq w(h_2)$, so upon the arrival of $e$, we satisfy that $w(e) > w(e_{j^*})$.

Let $\ALG'$ denote the solution immediately before $e$ is revealed and note that all elements in $\ALG'$ are in $\Rel(Y) \cap \delta(S_2) = E_2$ and have weights larger than $w(e_{j^*})$. There are two possibilities.
First, if $h_2 \neq \bot$, since $w(e_{j^*}) = w(h_1) \geq w(h_2)$, by the definition of $h_1$ and $h_2$, we have
\[
e \notin \Hull_A(\set{ g \in E_2 \midd w(g) \geq w(e_{j^*}) }) \supseteq \Hull_A(\ALG'),
\]
where the last inequality follows by the monotonicity of $\Hull$, and since all elements considered by $\ALG'$ for addition had weights larger than $w(e_{j^*})$.

Second, if $h_2 = \bot$ then, again by the monotonicity of $\Hull_A$, we have
\[
e \notin \Hull_A(E_2) \supseteq \Hull_A(\ALG').
\]
It follows that at the moment in which $e$ is revealed, $e \notin \Hull_A(\ALG')$, and thus $\ALG'+e\in \cI_A$ by the properties of $\Hull_A$.

Finally, let $\calA_0$ be the set of elements that $\calA$ selects before $e$'s arrival. We know that $\ALG'\subseteq \calA_0$ by the description of our algorithm. So, if $\calA$ selects $e$ upon arrival we would immediately have that $\ALG'+e \in \cI_A \cap \cI_B$ since the set of elements that $\calA$ selects are independent in $\cI_B$.
Therefore, so long as $\calA$ selects $e$ (i.e., as long as $\cE_0$ occurs), $e$ will be added to $\ALG$.
Putting it all together:
\begin{align*}
\Pr\brk{e \in \ALG \midd Y} &=\Pr[a_e \notin S_0] \cdot \E_{S_0 - a_e} \Pr\brk{e \in \ALG \midd Y, a_e \notin S_0} \\
&\geq \frac{1 - p_2}{1 - p_1} \E_{S_0-a_e} \Pr\brk{\cE_0 \wedge \cE_1 \wedge \cE_2 \midd Y, a_e \notin S_0} \\
&= (1-q) \E_{S_0-a_e}\brk{\Pr\brk{\cE_0 \midd Y, a_e \notin S_0} \cdot \Pr\brk{\cE_1 \wedge \cE_2 \midd Y, a_e \notin S_0}} \\
&\geq (1-q) \E_{S_0-a_e}\brk{\Pr\brk{\cE_0 \midd Y, a_e \notin S_0}} \cdot  \frac{1}{4} \\
&\geq \frac{\alpha (1-q)}{4}.
\end{align*}
Since this holds for every realization of $Y$ and every $e \in \Core_{\text{Common}}(\Rel(Y))$, for the case of general combinations of growth-systems and for $p = \sqrt{1-\frac{1}{k+1}}$, we conclude that
\begin{align*}
\E[w(\ALG)] &\geq \frac{\alpha (1-q)}{4} \E\brk{w(\Core_{\text{Common}}(\Rel(Y)))} \\
&\geq \frac{\alpha(1-q)}{4} \prn{2 - 2\sqrt{1 + \frac{1}{k}} +\frac{1}{k}} w(\OPT(E)),
\end{align*}
where the last inequality follows from Lemma~\ref{lem:core-agent-arrival}.

Similarly, for the case of $k$-matchoids and $p = 1 - \frac{1}{2k}$, we conclude that
\begin{align*}
\E[w(\ALG)] &\geq \frac{\alpha (1-q)}{4} \E\brk{w(\Core_{\text{Common}}(\Rel(Y)))} \\
&\geq \frac{\alpha(1-q)}{16k^2} \: w(\OPT(E)),
\end{align*}
where again the last inequality follows from Lemma~\ref{lem:core-agent-arrival}.
\end{proof}

\subsection{Remarks}\label{sec:agent-remarks}

It is worth noting that our free-order agent-arrival technique also applies, even when some item-side components do not admit an order-oblivious core-selecting (OOCS) algorithm. In particular, by using techniques similar to the reduce-and-solve construction of Feldman, Svensson, and Zenklusen~\cite{fsz-secretary-framework-matroid-intersection} for random-order matroid intersection, we can reduce those components to instances that do admit an OOCS and then apply our method unchanged, incurring only the loss dictated by the reduction. Without going into full generality, we briefly explain how to deal with some specific examples.

\paragraph{Graphic Matroids.}\;
We can reduce graphic matroids to a (random) partition matroid using an idea of Korula and P\'al \cite{korula-pal-graphic}. Given a graph $G = (V, E)$, before the process begins, let $\pi$ denote a uniformly random permutation of the vertices $V$. We create a partition matroid where every vertex $u$ is associated with a partition class. For any edge $e = \set{u,v}$, we place $e$ in the partition class of $u$ if and only if $\pi(u) < \pi(v)$. Now we can use an $(1/2, 1/2)$-OOCS for the resulting partition matroid, incurring only an extra loss of $1/2$ by the reduction. What is actually done in the agent-arrival model is to replace the graphic matroid component by the partition matroid just defined, even before starting the agent-arrival algorithm (in particular, greedy, and the greedy-relevant edges are computed with respect to this new combination of systems). Note that in the transformed instance we are not enforcing selection from the
core of the original graphic; instead, we select from the core of this new partition matroid.  Nevertheless, the expected weight guarantee remains sufficiently high because the analysis applies to the greedy-relevant edges of this new combined system. 

Interestingly, reductions like the one described above can be designed for any matroid class that has the $\alpha$-partition property \cite{matroid-partition}, so long as the randomized partition can be created a priori, before observing any elements. For example, this is the case for $k$-column sparse matroids (see \cite{soto-secretary}).

\paragraph{Transversal Matroids}
The idea is a little different here. Assume the item-side constraint is a combination in which one component is a transversal matroid~$T$. Represent $T$ by its standard bipartite presentation $H=(L,R;E_H)$: the left vertices $L$ are the original
ground elements, the right vertices $R$ are auxiliaries, and a set $F\subseteq L$ is independent in $T$
iff there exists a matching in $H$ that saturates $F$.

\smallskip
\noindent\emph{Duplication and partition intersection.}
For each $e\in L$ and each neighbor $r\in N_H(e)$, create a copy $e_r$ (think of it as the edge $(e,r)$),
and let $U=\{e_r : e\in L,\ r\in N_H(e)\}$ be the new ground set. Define two partition matroids on $U$:
\begin{align*}
\mathcal{P}_{\text{left}}&:\ \text{at most one element from each block } \{e_r : r\in N_H(e)\}\ \text{for every } e\in L,\\
\mathcal{P}_{\text{right}}&:\ \text{at most one element from each block } \{e_r : e\in L\}\ \text{for every } r\in R.\end{align*}
Then $\mathcal{P}_{\text{left}} \cap \mathcal{P}_{\text{right}}$ has independent sets exactly the matchings of $H$.
Projecting an independent set $I\subseteq U$ to $\{e\in L : \exists r\ \text{with}\ e_r\in I\}$ yields precisely the
$L$-vertices that can be simultaneously matched in $H$, i.e., an independent set of $T$.

\smallskip
\noindent\emph{Incorporation to the agent-arrival algorithm.}
Similar to what we did for graphic matroids, we will replace the component $T$ on the original combination of systems by the two components $\mathcal{P}_{\text{left}}$ and $\mathcal{P}_{\text{right}}$.
This is done at the very beginning, so both greedy and the greedy-relevant elements are computed with respect to the new combinations. Whenever the agent-arrival  algorithm decides to reveal an agent that owns a \emph{copy} of an original element $e\in L$ of the transversal in its neighborhood, actually all of its copies 
$\{e_r : r\in N_H(e)\}$ are simultaneously revealed (the weight that is revealed is actually $w(e)$ on all of them; break ties arbitrarily to maintain
injectivity if needed.) Accepting any copy $e_r$ is interpreted as accepting $e$. In the transformed system we run an
order-oblivious core-selecting (OOCS) procedure, since $\mathcal{P}$ is an intersection of partition matroids.

\smallskip
\noindent\emph{Matchoid parameter.}
This reduction is not preserving for the growth parameter: one original matroidal component is replaced by two. In particular,
if the original item-side constraint was a $k$-matchoid, the transformed constraint is a $(k+1)$-matchoid, so the competitive ratio  by this reduction we obtain is slightly decreased..

\smallskip
\noindent\emph{Core and guarantee.}
Similar to the graphic case in the transformed instance, we are \emph{not} enforcing selection from the core of the original
transversal matroid~$T$, but from the core of the new components 
$\mathcal{P}_{\text{left}}\cap\mathcal{P}_{\text{right}}$. This is still fine for the analysis to work.

\paragraph{Knapsack with integer sizes.}
We give a small construction that transforms knapsack constraints into a hypergraph matching, for the particular case in which all item sizes are integer numbers between $1$ and $k$.

Essentially, for a knapsack of total capacity $C$ and $n$ elements, we create a hypergraph $H=(V,E)$ with $C+n$ vertices. The first $C$ vertices correspond to unit slots in the knapsack, while the remaining $n$ vertices each correspond to an original element of the knapsack constraint. Order the elements of the original knapsack constraint as $e_1,\dots,e_n$. For every $e_i$ with size $s_i\in\mathbb{N}$, we create $C-s_i+1$ hyperedges. Each “copy” hyperedge $e_{i,j}$ (for $j=1,\dots,C-s_i+1$) contains the vertices $\set{j,\dots,j+s_i-1}\cup\set{C+i}$. The last vertex ensures we pick at most one copy of $e_i$; the chosen copy encodes the position of $e_i$ in the knapsack, ensuring that the final set of selected hyperedges is feasible. Furthermore, this hypergraph matching constraint is represented by a $(k+1)$-matchoid in which each component is a rank-1 uniform matroid, and each \emph{copy} hyperedge participates in at most $k+1$ of these matroids.

We replace the knapsack constraint with this $(k+1)$-matchoid (each piece of this matchoid admits an OOCS algorithm since it is uniform), and by using the same idea described for transversals, we can effectively replace the original elements with their copies in the agent-arrival model.
If instead of having sizes $1$ to $k$ we have a $k$-ratio-bounded knapsack (so the sizes range between some base size $m$ and $k\cdot m$), and if the capacity of the knapsack is sufficiently large compared to $m$, then we can still carry out a similar construction with only a slight loss in the approximation factor by choosing a sufficiently small $\varepsilon$ and rounding up each item size to the nearest integer multiple of $m\varepsilon$. By standard techniques, every independent set feasible for the original constraint contains a subset that is feasible in the new, rounded knapsack with almost the same total weight, so we use this rounded system instead and apply the previous construction. This incurs a constant blow-up (on the order of $1/\varepsilon$) in the number of matroids used to describe the system.
\section{Multiple Item Selection}\label{sec:multiple-items}

Our techniques for the free-order agent-arrival model on bipartite graphs require that each agent selects at most one incident edge. 
In this section, we consider a different extension of the bipartite-graph model, which we call the \emph{unrelated-agent} model.

\begin{definition}[Unrelated-agent model]
Let $G=(A\cup B, E)$ be a bipartite graph, and for $a\in A$ let $\delta(a)\subseteq E$ denote the set of edges incident to $a$. Each agent $a\in A$ has an independence system $\mathcal{F}_a=(\delta(a), \mathcal{I}_a)$ over its incident edges. There is also an independence system $\mathcal{F}_B=(B,\mathcal{I}_B)$ over the set of items $B$. 
A subset $F\subseteq E$ of edges is \emph{feasible} if:
\begin{enumerate}[label=(\roman*)]\itemsep0em
    \item for every agent $a$, the set of edges of $F$ incident to $a$ is independent, i.e., $\delta(a)\cap F \in \mathcal{I}_a$, \label{def:uama}
    \item the set of assigned items is independent, i.e., 
    $\{\, b\in B \colon \delta(b)\cap F \neq \emptyset \,\} \in \mathcal{I}_B$. \label{def:uamb}
\end{enumerate}
\end{definition}
We say that agents are \emph{unrelated} in this model because each has its own independence system, and there is no global system that constrains whether one agent can select elements based on the actions of another.
Our goal is to study the agent-arrival model in this setting, namely, upon arrival, each agent $a$ reveals the weights of all edges incident to it, and we must irrevocably select a subset $F_a\subseteq \delta(a)$ such that the union $\bigcup_{a\in A} F_a$ is feasible.

Although, in general, the entire problem can be encoded within the item system by treating each edge as an item -- so that conditions \ref{def:uama} and \ref{def:uamb} are captured by a single system $(B,\mathcal{I}_B)$ -- our purpose is to understand what can be achieved in the agent-arrival setting when the item system is simple. 
A canonical example is the case where the only constraint is that each item is assigned to at most one agent (i.e., the item system is a partition matroid).

This setting can be used to model problems such as the \emph{secretary problem with groups} proposed by Korula and P\'al \cite{korula-pal-graphic}. 
In this problem, there is a unique independence system $(B,\cI_B)$ whose ground set is partitioned (arbitrarily) into a known number $m$ of \emph{groups} $B_1,\dots, B_m$ (but the partition is unknown).
Once the groups have been formed, the groups arrive in random order, and upon arrival, the identity and weight of the elements are revealed and the algorithm must select a subset from each arriving group in such a way that the total set of selected elements is feasible for the system, and it has maximum weight.
We note that this model falls within our framework, as it can be represented by a bipartite graph between a set $A=\{a_1,\dots,a_m\}$ of $m$ agents and the set $B$ of items. The adversary will assign each group $B_i$ to agent $a_i$ by assigning $w(\{a_i, x\})=w(x)$ for all $x\in B_i$ and $w(\{a_i,x\})=0$ (or an infinitesimal small number) if $x\not\in B_i$. 

\subsection{The Case of Hypergraph Matching}

Consider the special case of the unrelated-agent model in which the item constraint is a hypergraph matching constraint.
That is, the items are hyperedges of an auxiliary hypergraph $H$, and the set of items (i.e., hyperedges) that can be used simultaneously must form a matching in $H$. 
In the {\it combinatorial assignment problem with agent arrivals}, we are given a bipartite hypergraph $H=(A\cup R,E)$ where each hyperedge $e$ contains exactly one node $A(e)$ in $A$ and a subset $R(e)\subseteq R$ of nodes in $R$, so we can always denote them as $e=(a,R(e))$, with $a=A(e)\in A$. The set $A$ is regarded as the agents. And the nodes in $R$ are the \emph{goods}. Each agent $a\in A$ owns a private independence system $\mathcal{F}_a=(\delta(a),\mathcal{I}_a)$ over the set $\delta(a)$ of hyperedges that contain $a$. Also, there is a global function $w\colon E\to \R_+$ where $w(e)=w(a,R(e))$ is interpreted as the weight that agent $a$ assigns to $R(e)$. 
We say that a set of hyperedges $F\subseteq E$ is feasible if:
\begin{enumerate}[label=(\roman*)]\itemsep0em
    \item for all $a\in A$, $\delta(a)\cap F \in \mathcal{I}_a$;
    \item every individual node of $b$ is in at most one hyperedge of $F$, i.e. for all $b\in R$, $|\delta(b)\cap F|\le 1$.
\end{enumerate}
An interpretation of this model comes from the combinatorial auction setting, where each hyperedge can be understood as a {\it bundle} of goods.
Each agent $a$ has a hidden valuation $w(e)=w(a,R(e))$ over bundles $e$ of goods,
and each agent $a$ can receive not just one bundle, but a collection of them, under the conditions that the assignment to $a$ must be feasible for the private independence system of agent $a$, i.e., $(\delta(a),\mathcal{I}_a)$, and each node in $R$ must be in at most one of the assigned hyperedges $F$.
The simplest example is when the hyperedges are actually edges: i.e., they consist of a single agent and a single item.
In this example, upon arrival, agent $a$ would reveal the value of all edges in $\delta(a)$, and then an algorithm must immediately assign a subset of it to $a$ in such a way that this subset is feasible for $a$'s own independence system, and each item can be assigned to at most one edge.
When the bundles have size at most $k$, we call this setting the $k$-combinatorial assignment with agent arrivals.

\paragraph{Relation to Previous Bipartite Models.}
We remark that the combinatorial assignment problem with agent arrivals is a direct extension of the bipartite vertex-at-a-time hypergraph matching problem studied by Korula and P\'{a}l \cite{korula-pal-graphic} and Kesselheim, Radke, T\"{o}nnis, and V\"{o}cking\cite{kesselheim-secretary-matching}, and the combinatorial assignment model by Marinkovic, Soto and Verdugo~\cite{marinkovic-soto-verdugo}.
The main difference is that, rather than limiting each agent to a single hyperedge, agent $a$ may receive any subset of $\delta(a)$ that is feasible in $\cF_a$. In the prior works, this corresponds to the special case where each $\cF_a$ is the rank-1 uniform matroid on $\delta(a)$.

\paragraph{The Algorithm.} Each agent selects a random arrival time uniformly and independently on the interval $(0,1)$ and uses it to represent the random-order arrival of the agents. We also assume that, for any set of agents $A'$, we can compute the optimum allocation $\OPT(A')$, i.e., the maximum weight feasible set of hyperedges in the hypergraph induced by $A'$ and $R$.
We need some notation. For any set $S\subseteq A$ of agents, any agent $a\in S$, let $F(S,a)=\OPT(S)\cap \delta(a)$ be the set of hyperedges of $\OPT(S)$ that are incident to $a$, and let $N(S,a)=\bigcup_{e\in F(S,a)}R(e)$ be the goods in $R$ that are assigned to $a$. 
For every $t>p$, let $X(S,a,t)$ be the subset of goods in $N(S,a)$ that are still unassigned by the algorithm by time $t$ (or more precisely, immediately before time $t$).

\begin{algorithm}[H]
\DontPrintSemicolon
    \( \ALG \gets \emptyset \);
     Ignore all agents arriving before time \( p \in (0,1) \)\;
    \For{each agent \( a \) arriving at time \( t_a \in [p,1] \), in arrival order}{
    Let \( A_a \) be the set of agents that have arrived up to time \( t_a \)\;
    Compute \( \OPT(A_a) \), and let \( F(a) = \OPT(A_a)\cap \delta(a)\) denote the edges of $\OPT(A_a)$ incident to the new agent $a$.\;
    Let $F'(a) = \{e\in F(a)\colon \forall f\in \ALG, R(e)\cap R(f)=\emptyset\}$, i.e., the subset of edges $e$ of $F(a)$ such that $R(e)$ is composed of goods that have not been assigned by the algorithm yet.\;    
    Assign to $a$ all edges in $F'(a)$, that is:
        \( \ALG \gets \ALG \cup F'(a) \)
    }
    \Return $\ALG$.
\caption{\textsc{Basic Online Assignment Algorithm}}
\label{alg:standard}
\end{algorithm}

\begin{lemma}\label{lem:XOS-condition} Let $S\subseteq A$ be a fixed set of at least $k+1$ agents. Let $i=|S|$ and suppose that $S$ is the set of the first $i$ arriving agents on the algorithm. 
Let $a$ be the last arriving agent in $S$, and $t=t(a)>p$ be its time of arrival. 
Let also  $D\in F(S,a)$ be a subset of goods among the ones assigned to $a$ in $\OPT(S)$, with $|D|\leq k$. 
Then the probability that $D\subseteq X(S,a,t)$, i.e., that all goods in $D$ are unassigned at time $t$, is at least $(p/t)^k$. 
\end{lemma}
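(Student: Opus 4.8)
The plan is to run a Kesselheim-style random-order analysis in continuous time. First I would reduce to a clean sub-instance: condition on the event that $S$ is exactly the set of the first $i$ arriving agents and on $t(a)=t$. Since the arrival times are i.i.d.\ uniform on $(0,1)$, conditioned on this event the arrival times of the remaining $i-1$ agents of $S':=S\setminus\{a\}$ are i.i.d.\ uniform on $(0,t)$ with a uniformly random labelling, and every agent of $A\setminus S$ arrives after time $t$. In particular, every agent active during $[p,t)$ lies in $S'$, so a good of $D$ can be assigned by the algorithm strictly before time $t$ only by some agent of $S'$; hence it suffices to lower bound, in the sub-process driven by $S'$ (agents arriving before $p$ being ignored), the probability $\Pr[D\subseteq X(S,a,t)]$ that no good of $D$ is assigned during $[p,t)$. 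I also record the structural fact used repeatedly: if an agent $a'$ assigns a good $d$ upon arrival then necessarily $d\in N(A_{a'},a')$, i.e.\ $d$ is given to $a'$ in $\OPT(A_{a'})$; since $\OPT(A_{a'})$ is a valid allocation, each good lies in at most one of its hyperedges, so at most $|D|\le k$ agents of $A_{a'}$ own a good of $D$ in $\OPT(A_{a'})$.

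Next I would set up the telescoping. For $\tau\in[p,t]$ let $r(\tau)$ be the probability that all goods of $D$ are still unassigned immediately before time $\tau$; nothing is assigned before time $p$, so $r(p)=1$, and the target is $r(t)\ge (p/t)^{|D|}\ge (p/t)^k$. Fix $\tau$ and condition on the set $A_\tau$ of agents arrived by time $\tau$ (write $n=i-1=|S'|$). Given $A_\tau$, the most recently arrived agent is uniform in $A_\tau$ and $\OPT(A_\tau)$ is determined, so by the structural fact the arriving agent owns a good of $D$ in its $\OPT(A_\tau)$-bundle with conditional probability at most $|D|/|A_\tau|$ (for $|A_\tau|\le|D|$ one uses the trivial bound $1\le|D|/|A_\tau|$). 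Averaging over the size and identity of $A_\tau$, using that at time $\tau$ each of the $n$ agents of $S'$ has arrival density $1/t$, a short binomial computation resting on the identity $\tfrac{n}{\ell}\binom{n-1}{\ell-1}=\binom{n}{\ell}$ collapses $\sum_{\ell}\tfrac{|D|}{\ell}\Pr[|A_\tau|=\ell]\cdot\tfrac{n}{t}\,\mathrm{d}\tau$ to at most $\tfrac{|D|}{\tau}\,\mathrm{d}\tau$; in words, the unconditional probability that the agent arriving in $[\tau,\tau+\mathrm{d}\tau)$ wants a good of $D$ is at most $\tfrac{|D|}{\tau}\,\mathrm{d}\tau$.

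The delicate point — the step I expect to be the main obstacle — is that $r(\tau)-r(\tau+\mathrm{d}\tau)$ equals the probability of "the newly arriving agent wants a good of $D$ in $\OPT(A_\tau)$" intersected with "all of $D$ is still free just before $\tau$", and I need that conditioning on the latter does not increase the probability of the former. Intuitively this holds because "$D$ is still free at $\tau^-$" is determined by the agents that arrived before $\tau$, whereas "the newcomer wants a $D$-good" concerns only where the newcomer sits relative to $A_\tau$; to make it rigorous I would condition on the realized pair $(A_\tau,\OPT(A_\tau))$, under which the newcomer's identity is uniform in $A_\tau$ and independent of the history that fixed which goods remain free, so the conditional probability of "wants a $D$-good" is still at most $|D|/|A_\tau|$, and then average exactly as in the previous paragraph. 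Granting this, $r(\tau+\mathrm{d}\tau)\ge r(\tau)-\tfrac{|D|}{\tau}\,r(\tau)\,\mathrm{d}\tau$, i.e.\ $\tfrac{\mathrm{d}}{\mathrm{d}\tau}\ln r(\tau)\ge -|D|/\tau$; integrating from $p$ to $t$ with $r(p)=1$ gives $r(t)\ge (p/t)^{|D|}\ge(p/t)^{k}$ since $|D|\le k$. As the bound was obtained for every realization of the conditioning events ($S$ being the first $i$ arrivals and $t(a)=t$), it holds unconditionally, which yields the claim; the hypothesis $i=|S|\ge k+1$ enters only through these crude counting bounds.
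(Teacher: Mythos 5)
Your setup, the structural observation that at most $|D|$ agents in any realized $A_\tau$ can own a $D$-good in $\OPT(A_\tau)$, and the binomial computation collapsing the per-step arrival term to $|D|\,\mathrm{d}\tau/\tau$ are all fine, and you have correctly flagged the delicate point. Unfortunately the conditional independence you invoke there is false, and so is the resulting differential inequality $\frac{d}{d\tau}\ln r(\tau)\ge -|D|/\tau$. Given $(A_\tau,\OPT(A_\tau))$ and the newcomer's identity $a'$, the ``history'' is the arrival order of $A_\tau\setminus\{a'\}$, whose support changes with $a'$, so ``newcomer $=a'$'' and ``$D$ still free at $\tau^-$'' are correlated. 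A concrete counterexample: take $S=\{a_1,a_2,a_3\}$ with $a=a_3$ arriving last at time $t$, where $a_1$'s only bundle is $\{g_1\}$, $a_2$'s only bundle is $\{g_2\}$, and $a_3$'s bundle is $\{g_1\}$ with larger weight, so $D=\{g_1\}\subseteq N(S,a_3)$ and $k=1$. In the sub-process on $S'=\{a_1,a_2\}$, good $g_1$ is taken iff $a_1$ arrives in $(p,\tau)$, hence $r(\tau)=1-(\tau-p)/t$ and $\frac{d}{d\tau}\ln r(\tau)=-\frac{1}{\,t+p-\tau\,}<-\frac{1}{\tau}$ for every $\tau>\frac{t+p}{2}$, violating your pointwise bound (the endpoint $r(t)=p/t=(p/t)^k$ still matches, but the path is linear, not reciprocal). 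Replacing the exact event by the weaker sufficient event does not rescue the differential inequality either, since in this example the two coincide.

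The paper avoids the issue by combining two devices, and I think both are necessary. First, it bounds the target event from below by $\bigcap_q E_q^c$, where $E_q$ is the event that the $q$-th arriving agent owns a $D$-good in $\OPT(A_q)$: unlike ``$a_q$ actually assigns a $D$-good'' (which also depends on which goods are already taken), $E_q$ is a function only of the unordered set $A_q$ and the identity $a_q$, i.e.\ of the \emph{suffix} $(a_q,a_{q+1},\dots,a_i)$ of the arrival order. Second, it peels \emph{backward} over the arrival indices $q=i-1,\dots,\ell+1$ (agents landing in $(p,t)$): given $(a_{q+1},\dots,a_i)$, the identity $a_q$ is uniform on $A_q$ and independent of $E_{q+1}^c,\dots,E_{i-1}^c$, giving $\Pr[E_q^c\mid E_{q+1}^c,\dots,E_{i-1}^c]\ge (1-k/q)_+$; the telescoping product is $\binom{\ell}{k}/\binom{i-1}{k}$ where $\ell$ counts agents of $S\setminus\{a\}$ landing before $p$, and then averaging over $\ell\sim\Bin(i-1,p/t)$ via $\E\bigl[\binom{\ell}{k}\bigr]=\binom{i-1}{k}(p/t)^k$ yields exactly $(p/t)^k$. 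The backward order is essential: a forward chain rule over the $E_q^c$ would again condition the identity $a_q$ on events determined by $(a_{\ell+1},\dots,a_{q-1})$, destroying uniformity. So the fix is to abandon the continuous-time forward estimate and adopt the discrete backward peeling.
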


We defer the proof of this lemma and show how to get the following theorem.

\begin{theorem}\label{thm:multiple-items}
$\ALG$ is $\int_{p}^1 (p/t)^k dt$-competitive for the $k$-combinatorial assignment with agent arrivals. By choosing appropriate values of $p$ for each $k$, the competitive ratio is $1/e$ for $k=1$ and 
$$k^{-k/(k-1)}=\frac{1}{k}-\frac{\ln(k)}{k^2}+\bigO{\frac{\ln^2(k)}{k^3}}\text{ for $k\geq 2$}$$
\end{theorem}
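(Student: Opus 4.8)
The plan is to run the now-standard random-order (``Kesselheim-style'') analysis, decomposing the weight collected by $\ALG$ according to the arrival rank of each agent and feeding in two ingredients: Lemma~\ref{lem:XOS-condition} for the probability that the (at most $k$) goods of a hyperedge survive until its owner arrives, and a sampling bound on the expected optimum over a random subset of agents.

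First I would set up the decomposition. Let $a_{(i)}$ be the $i$-th agent to arrive and $T_{(i)}$ its arrival time (the $i$-th order statistic of $|A|$ i.i.d.\ uniforms on $(0,1)$); then $A_{a_{(i)}}=\{a_{(1)},\dots,a_{(i)}\}$ and, since $F'(a)=\emptyset$ whenever $t_a<p$, we have $\E[w(\ALG)]=\sum_{i}\E\big[w(F'(a_{(i)}))\,\1[T_{(i)}\ge p]\big]$. To bound a single term I would condition on the first-$i$ set $S=A_{a_{(i)}}$, on the identity $a=a_{(i)}$, and on $T_{(i)}=t\ge p$; by exchangeability $S$ is a uniformly random $i$-subset, $a$ is uniform in $S$, and $T_{(i)}$ is independent of both. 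With these fixed, $F(a)=\OPT(S)\cap\delta(a)=F(S,a)$ is determined, and for every $e\in F(S,a)$ we have $|R(e)|\le k$, so Lemma~\ref{lem:XOS-condition} (with $D=R(e)$) gives $\Pr[R(e)\subseteq X(S,a,t)]\ge (p/t)^k$; whenever all of $R(e)$ survives, $e$ enters $F'(a)$, hence $\E[w(F'(a))\mid S,a,T_{(i)}=t]\ge (p/t)^k\,w(F(S,a))$. Now I would average: over the uniform $a\in S$, using that $\{F(S,a)\}_{a\in S}$ partitions $\OPT(S)$ (each hyperedge has a unique $A$-endpoint), $\E_a[w(F(S,a))]=w(\OPT(S))/i$; and over the uniform $i$-subset $S$, since $\{e\in\OPT(A):A(e)\in S\}$ is feasible for the instance induced by $S$ and contains each $e\in\OPT(A)$ with probability $i/n$, one gets $\E[w(\OPT(S))\mid|S|=i]\ge (i/n)\,w(\OPT)$. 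Chaining these,
\[
\E\big[w(F'(a_{(i)}))\,\1[T_{(i)}\ge p]\big]\ \ge\ \frac{w(\OPT)}{n}\,\E\big[(p/T_{(i)})^k\,\1[T_{(i)}\ge p]\big].
\]
(The case $i\le k$, not covered by the statement of Lemma~\ref{lem:XOS-condition}, follows from the same counting since then even fewer agents precede $a$; in any event those terms only help the bound.)

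Next I would sum over $i$. The key observation is that the multiset $\{T_{(i)}\}_i$ is precisely the multiset $\{U_j\}_j$ of the underlying i.i.d.\ uniforms, so $\sum_i\E[(p/T_{(i)})^k\,\1[T_{(i)}\ge p]]=n\int_0^1 (p/t)^k\,\1[t\ge p]\,dt=n\int_p^1 (p/t)^k\,dt$; this is exact for every $n$, so no Riemann-sum slack appears. Therefore $\E[w(\ALG)]\ge w(\OPT)\int_p^1 (p/t)^k\,dt$, which is the claimed competitiveness. Finally I would optimize $p$: for $k=1$, $\int_p^1 (p/t)\,dt=-p\ln p$, maximized at $p=1/e$ with value $1/e$; for $k\ge 2$, $\int_p^1 (p/t)^k\,dt=\tfrac{p-p^k}{k-1}$, whose derivative $\tfrac{1-kp^{k-1}}{k-1}$ vanishes at $p=k^{-1/(k-1)}$, giving value $\tfrac{p-p/k}{k-1}=\tfrac pk=k^{-k/(k-1)}$, and $k^{-k/(k-1)}=\tfrac1k e^{-\ln k/(k-1)}=\tfrac1k-\tfrac{\ln k}{k^2}+O\!\big(\tfrac{\ln^2 k}{k^3}\big)$.

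The genuine difficulty is entirely concentrated in Lemma~\ref{lem:XOS-condition} — showing that a prescribed set of at most $k$ goods stays unassigned until time $t$ with probability at least $(p/t)^k$ — which must track how the online optima $\OPT(A_{a_{(j)}})$ evolve as agents are added and is where the ``$k$-combinatorial'' structure genuinely enters; everything above is bookkeeping. Within that bookkeeping the only points that deserve care are the exact summation identity $\sum_i\E[(p/T_{(i)})^k\,\1[T_{(i)}\ge p]]=n\int_p^1(p/t)^k\,dt$ and the clean verification that restrictions of $\OPT(A)$ remain feasible for sub-instances, which is what makes the sampling bound $\E[w(\OPT(S))\mid|S|=i]\ge (i/n)\,w(\OPT)$ go through.
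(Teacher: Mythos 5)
Your proposal is correct and follows essentially the same route as the paper: condition on the first-$i$ arrival set $S$, the last arrival $a$, and its time $t$; apply Lemma~\ref{lem:XOS-condition} hyperedge by hyperedge; use $\E[w(\OPT(S))\mid |S|=i]\ge (i/n)\,w(\OPT)$; and sum over $i$, where the paper's identity $\tfrac1n\sum_i q_i(t)=1$ is the same fact as your observation that the order statistics form the same multiset as the underlying uniforms. The only cosmetic difference is the handling of ranks $i\le k$, which the paper resolves by adding dummy agents so Lemma~\ref{lem:XOS-condition} applies verbatim, whereas you simply note those terms can be dropped (or handled directly) without harm.
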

\begin{proof}
Let $n$ be the number of agents. 
Let \( \ALG_i \) be the contribution to the solution at the \( i \)-th arrival, that is, if $F\subseteq E$ is the final assignment and $a$ is the $i$-th arriving agent, then  
\( \ALG_i \coloneqq \sum_{e\in \delta(a)\cap \ALG} w(e) \).
For any set $S$ of agents, let \( \ell(S) \) be the last one to arrive inside $S$ in the random order. 
Let \( A_i \) be the set of the first \( i \) agents to arrive and let \( q_{i}(t) \) be the probability density function for the arrival time of the \( i \)-th agent among \( n \) uniformly arriving agents. Then:

\begin{align}
\mathbb{E}[\ALG_i] 
&= \frac{1}{\binom{n}{i}} \sum_{\substack{S \subseteq A \\ |S| = i}} \frac{1}{i}\sum_{a \in S} \int_0^1 \mathbb{E}[\ALG_i | A_i = S,\, a = \ell(S), t(a)=t] q_{i}(t)\, dt \notag  \\
&=\frac{1}{\binom{n}{i}} \sum_{\substack{S \subseteq A \\ |S| = i}} \frac{1}{i}\sum_{a \in S} \int_p^1 \mathbb{E}[\ALG_i | A_i = S,\, a = \ell(S), t(a)=t] q_{i}(t)\, dt, \label{eq:ALGi-contribution}
\end{align}
where the last equality holds since, before time $p$, we do not select any hyperedge.
Note that for fixed $S$, $a=\ell(S)$ and $t=t(a)$, Lemma \ref{lem:XOS-condition}
implies that
\begin{align*}
\mathbb{E}[\ALG_i | A_i = S,\, a = \ell(S), t(a)=t]&=  \sum_{e\in F(S, a)} w(e)\Pr(R(e)\subseteq X(S,a,t))\\
&\geq \sum_{e\in F(S,a)}w(e) (p/t)^k,
\end{align*}
since for all $e$, we assumed that $|R(e)|\leq k$. The assumption that $S$ has at least $k+1$ agents can be easily satisfied by adding to our original instance enough dummy agents without incident edges.
Therefore, summing \eqref{eq:ALGi-contribution} over all $i$, and changing the summation order, we have
\begin{align*}
\sum_{i=1}^n \mathbb{E}[\ALG_i] 
&\geq \int_p^1 (p/t)^k \sum_{i=1}^n\frac{q_i(t)}{i}  \frac{1}{\binom{n}{i}}  \sum_{\substack{S \subseteq A \\ |S| = i}} \sum_{a \in S}\sum_{e\in F(S,a)}w(e) \, dt \\
&= \int_p^1 (p/t)^k  \sum_{i=1}^n\frac{q_i(t)}{i}  \frac{1}{\binom{n}{i}} \sum_{\substack{S \subseteq A \\ |S| = i}} w(\OPT(S))\, dt\\
&= \int_p^1 (p/t)^k \sum_{i=1}^n\frac{q_i(t)}{i}  \EE_{\{S\colon |S|=i\}}[w(\OPT(S))]\, dt
\end{align*}
where the expectation is taken over all sets $S$ of exactly $i$ agents (uniformly). Now observe that:

\[
 \mathbb{E}_{\{S\colon |S|=i\}}[w(\OPT(S))]\geq \mathbb{E}_{\{S\colon |S|=i\}}[w(\OPT(A)\cap \delta(S))] \geq \frac{i}{n} \cdot w(\OPT(A)),
\]
since each agent in $A$ is included in \( S \) with probability \( i/n \), and the restriction \( \OPT(A) \cap \delta(S) \) is a feasible solution over \( S \). Putting it all together:
\begin{align*}
\mathbb{E}[w(\ALG)] &= \sum_{i=1}^n \mathbb{E}[\ALG_i] \\
&\geq \int_p^1 (p/t)^k \sum_{i=1}^n \frac{q_{i}(t)}{i}  \frac{i}{n} w(\OPT(A))\, dt \\
&= w(\OPT(A)) \int_p^1 (p/t)^k \cdot \frac{1}{n} \sum_{i=1}^n q_{i}(t)\, dt \\
&= w(\OPT(A)) \int_p^1 (p/t)^k\, dt = \begin{cases}
    w(\OPT(A)) \cdot p \ln(1/p) &\text{ if $k=1$},\\
    w(\OPT(A)) \cdot \frac{p-p^k}{k-1}&\text{ if $k\geq 2$}.
\end{cases}
\end{align*}

For $k=1$, we set  \( p = 1/e \), to obtain a competitive ratio of 
\( p \ln(1/p) = 1/e \).  
For $k\geq 2$, we set \(p =k^{-1/(k-1)}\) to obtain a competitive ratio of \(k^{-k/(k-1)}\)\end{proof}

We now show the proof of the missing lemma above.

\begin{proof}[Proof of Lemma \ref{lem:XOS-condition}]

Let $\ell$ be the number of agents of $S$ arriving before time $p$. Label all the arriving times of the agents increasingly as 
$\tau_1<\dots <\tau_\ell <p\;<\tau_{\ell+1}<\tau_{\ell+2}<\cdots<\tau_{i-1}< t = \tau_{i}.$
For convenience, denote $a^{q}$ to the $q$-th arriving agent so that $a^i=a$ and let $A_q=\{a^1,\dots, a^q\}$ be the set of the $q$ first agents. By the statement of the lemma, $A_{i-1}=S\setminus \{a\}$. 
Recall that $D\subseteq F(S,a)$ is a fixed set of items.

Note that the optimum assignment $\OPT(A_{i-1})$ at time $\tau_{i-1}$ assigns each good $b\in D$ to at most one agent in $A_{i-1}$; it follows that the number of agents that could get a good from $D$ at time $i-1$ is at most $|D|\leq k$. 
Since the identity of the agent arriving at time $\tau_{i-1}$ is chosen uniformly at random from the set $A_{i-1}$, with probability at least $(1-{k}/{(i-1)})_+ = {(i-1-k)_+}/{(i-1)}$, agent $a^{i-1}$ does not receive any good from $D$ at time $\tau_{i-1}$. 
Here $x_+=\max(x,0)$ is the positive part of $x$.
Now, suppose that we have already revealed the identities of $a_{q+1}, a_{q+2}, \dots, a_{i-1}, a_i=a$ and that none of them received goods from $D$. Then $A_q=S\setminus \{a_{q+1},\dots, a_{i-1},a_i\}$. 
Once again, we note that the optimum assignment at time $\tau_{q}$ assigns each good $b\in D$ to at most one agent in $A_q$ and that the identity of agent $a_{q}$ is uniformly chosen at random from $A_q$. 
Then, with probability at least $(1-k/q)_+ ={(q-k)_+}/{q}$, agent $a_q$ does not receive any good from $D$.  
By repeating this until $q=\ell+1$, we conclude that
\begin{align*}
\Pr[D\subseteq X(S,a,t)\, |\, \ell, \tau_{\ell+1},\dots, \tau_{i-1}, t=\tau_i ] &\geq \frac{(i-1-k)_+}{i-1} \cdot \frac{(i-2-k)_+}{i-2} \cdot \ldots \cdot \frac{(\ell+1-k)_+}{\ell+1}\\
&=\frac{\binom{\ell}{k}}{\binom{i - 1}{ k}}.
\end{align*}
Therefore, 

\[\Pr[D\in X(S,a,t) | t=\tau_i] = \frac{\E[\binom{\ell}{k}]}{\binom{i - 1}{ k}},\]
where the expectation is over the random variable $\ell$ that counts the number of the  agents in $S\setminus \{a\}$ that arrive in the interval $[0,p]$. 
Since all agents in $S\setminus \{a\}$ chose their arrival time in $[0,t]$ uniformly at random, we get $\ell\sim \Bin(i-1,p/t)$, and therefore, using known properties of the factorial moments of binomial variables we get $\Pr[D\in X(S,a,t) | t=\tau_i] = (p/t)^k.$
\end{proof}
\paragraph{Extensions.}
We note that the same analysis used for the hypergraph matching case can be applied to more general independence systems on the item side. 
In fact, following Marinkovic, Soto and Verdugo\cite{marinkovic-soto-verdugo}, we can replace the hypergraph matching system (which is the intersection of $k$ unitary partition matroids) by the combination of matroids admitting a so-called {\it directed certifier}~\cite[Definition 3]{marinkovic-soto-verdugo}.
Our approach can also be extended to the case of more general weight functions for the agents, e.g., {\it fractionally subadditive}.\footnote{$v$ is fractionally subadditive (a.k.a., XOS) if 
there is a fixed family $w_1,\dots, w_k$ of additive weight functions over $R$ such that for all $S$ we have $v(S)=\max_{i\in [k]}w_i(S).$
}
Namely, for each agent $a$, consider the function $v_a(F)=\max_{I\subseteq F,I \in \mathcal{I}_a}\sum_{e\in I} w(e),$
that is, if $F$ is a set of edges incident to $a$, then $v_a(F)$ is the maximum weight of a subset of $F$ that can be assigned to $a$.
For each $a$, the function $v_a$ is fractionally subadditive. 
In this variant, the combinatorial assignment problem with agent arrival becomes an online combinatorial allocation problem with XOS valuations over \emph{hyperedges}. 
Kesselheim, Radke, T\"{o}nnis, and V\"{o}cking\cite{kesselheim-secretary-matching} gave an algorithm that works for the specific problem of XOS valuations when all $R(e)$ are singletons, achieving a $1/e$ competitive algorithm.\footnote{In \cite{kesselheim-secretary-matching} the algorithm is stated for submodular valuations, but the proof also works for XOS valuations.}
Our approach applies to the case of fractionally subadditive weights over hyperedges as well, generalizing the result by Kesselheim et al~\cite{kesselheim-secretary-matching}.
\section{Properties of the \texorpdfstring{$k$}{k}-systems Hierarchy}\label{sec:higher-order-systems}

The study of $k$-systems and their subclasses plays a key role in understanding the structural and algorithmic limits of independence systems beyond matroids. These classes capture different levels of combinatorial flexibility and have been useful for understanding the trade-offs between structural generality and algorithmic tractability in submodular optimization, secretary problems, and related areas. The introduction of $k$-growth systems offers a promising new class that likely refines this hierarchy, although its exact relation to previously studied systems remains to be fully understood. We believe that the proposed class is of combinatorial interest on its own, and one of the implicit goals of this paper is to initiate its study.

In this section, we give an overview of $k$-systems, present examples of $k$-growth systems, and explain how $k$-growth systems fit in the hierarchy. We also study the closure properties of several subclasses of $k$-systems under fundamental operations such as restriction and contraction. In particular, we give a characterization of $k$-extendible systems in terms of contractions of $k$-systems and a characterization of $k$-circuit bounded systems as the systems that satisfy a natural generalization of the classical circuit axiom for matroids. 

\subsection{Overview of Existing Classes}

Matroids are characterized by their defining exchange properties. Since their introduction, several related classes have been proposed, with each subsequent class assuming a weaker form of these properties. We give a brief overview of the most important subclasses of $k$-systems.

\begin{definition}[$k$-matroid intersection]\label{def:k-matroid-intersection}
An independence system $(S,\cI)$ is a \emph{$k$-matroid intersection} system if there exist matroids $\mathcal{M}_i = (S, \mathcal{I}_i)$ for $i\in[k]$, such that
\[
\cI = \{ F \subseteq S \colon F \in \mathcal{I}_i~ \text{for all $i\in[k]$} \}.
\]
\end{definition}

\begin{definition}[$k$-matchoid]\label{def:k-matchoid}
An independence system $(S,\cI)$ is a \emph{$k$-matchoid} if there exist matroids $\mathcal{M}_i = (S_i, \mathcal{I}_i)$ for $i\in[N]$, where the ground sets $S_i$ may intersect, such that $S=\bigcup_{i=1}^N S_i$, each $e\in S$ belongs to at most $k$ of the ground sets $(S_i)_{i=1}^N$,  and
\[
\cI = \{ F \subseteq S \colon F \cap S_i \in \mathcal{I}_i\ \text{for all $i\in[N]$} \}.
\]
\end{definition}

\begin{definition}[$k$-circuit bounded]\label{def:k-circuit-bounded}
An independence system $(S,\cI)$ is \emph{$k$-circuit bounded} if for every independent $I\in \cI$ and every $e \not \in I$, $I+e$ contains at most $k$ distinct circuits.
\end{definition}

\begin{definition}[$k$-extendible]\label{def:k-extendible}
An independence system $(S,\cI)$ is \emph{$k$-extendible} if, for every $A\subseteq B\in \cI$ and every $e\in S\setminus B$, we have the following  $k$-extension property:
\[
A+e\in \cI \implies \exists Z\subseteq B\setminus A, \text{such that } |Z|\leq k \text{ and } (B+e)\setminus Z \in \cI.
\]
\end{definition}

\begin{definition}[$k$-system]\label{def:k-system}
An independence system $\cF=(S,\cI)$ is a $k$-system if, for every subset $X\subseteq S$ and every pair $P, Q$ of bases of $X$, we have $|P| \leq k \: |Q|$.
\end{definition}

The following hierarchy of independence systems for $k \in \N$ is well-known, see e.g.~\cite{feldman2011improved}.
\begin{align*}
\text{$k$-matroid intersection} \subseteq \text{$k$-matchoid} \subseteq \text{$k$-circuit bounded} \subseteq \text{$k$-extendible} \subseteq \text{$k$-system}.
\end{align*}
It is easy to see that $1$-systems coincide with matroids, and therefore, for $k = 1$, all classes are the same. For $k \geq 2$, however, all containments in the hierarchy are known to be strict. Furthermore, as Proposition \ref{prop:greedy-matroid-ksystem} states, an independence system is a $k$-system if and only if, for every nonnegative weight function, the weighted greedy algorithm returns a $1/k$-approximation to the maximum-weight basis.

\subsection{Growth Systems: A New Class of Independence Systems}

Recall the definition of $k$-growth systems, presented here again for convenience.
\kGrowthDefn*
It is easy to see that condition ($\star1$) can be replaced by any of the following equivalent ones.
\begin{enumerate}[label=($\star$\arabic*)]\itemsep0em
\setcounter{enumi}{1}
\item For every independent set $P\in \cI$ with $P\subseteq X$, $P\cup Q \in \cI$.
\item $Q$ is a subset of every basis of $X\cup Q$.
\item For every circuit $C$, if $C\subseteq X\cup Q$ then $C\subseteq X$.
\end{enumerate}
We remark that to check whether an independence system satisfies the \ref{def:kbg} axiom, it is enough to check it for all sets $X$ and bases $B$ of the independence system.

\paragraph{Relations to the $k$-systems hierarchy.}
Next, we describe how $k$-growth systems are related to other classes in the hierarchy of independence systems.

\begin{theorem}
The 1-growth systems are precisely matroids. Moreover, in this case the set $Z$ guaranteed by the axiom can be chosen to satisfy the stronger bound $|Z|\leq r(X)-|X\cap I|$.
\end{theorem}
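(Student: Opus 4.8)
The plan is to prove the two set-inclusions separately, and then note that the claimed bound is strictly stronger than the one in the axiom.

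\emph{Matroids are $1$-growth systems, with the sharp bound.} Let $M=(S,\cI)$ be a matroid with rank function $r$, fix $X\subseteq S$ and $I\in\cI$. First I would observe that $X\cap I\in\cI$ (downward closure), so it extends to a basis $B_X$ of $X$ with $X\cap I\subseteq B_X$; note $|B_X|=r(X)$. I would then let $Q\subseteq I\setminus X$ be an inclusion-wise maximal subset with $B_X\cup Q\in\cI$, and set $Z=(I\setminus X)\setminus Q$, so that $(Q,Z)$ is a partition of $I\setminus X$. The point that needs care is that $(\star1)$ must hold \emph{simultaneously for every} basis $P$ of $X$, so the argument has to go through a property of $Q$ alone. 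Since $\mathrm{span}(P)=\mathrm{span}(X)=\mathrm{span}(B_X)$ for every basis $P$ of $X$, adding such a $P$ (or $X$, or $B_X$) to $Q$ produces the same span, hence $r(P\cup Q)=r(X\cup Q)=r(B_X\cup Q)=|B_X|+|Q|=r(X)+|Q|=|P|+|Q|$, which forces $P\cup Q\in\cI$; this verifies $(\star1)$ uniformly. For the bound, a standard matroid computation (maximal extensions of $B_X$ inside $B_X\cup(I\setminus X)$ all have size $r(B_X\cup(I\setminus X))$, and $\mathrm{span}(B_X\cup(I\setminus X))=\mathrm{span}(X\cup I)$) gives $|Q|=r(X\cup I)-r(X)$, so $|Z|=|I\setminus X|-r(X\cup I)+r(X)$. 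Since $I$ is independent and $I\subseteq X\cup I$, we have $r(X\cup I)\ge r(I)=|I|=|I\setminus X|+|X\cap I|$, and substituting yields $|Z|\le r(X)-|X\cap I|$.

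\emph{$1$-growth systems are matroids.} Here I would verify the augmentation property directly. Given $I,J\in\cI$ with $|I|<|J|$, apply the \ref{def:kbg} axiom (with $k=1$) to the set $I$ in the role of ``$X$'' and the independent set $J$ in the role of ``$I$''. Since $I$ is itself independent, its unique basis is $I$, so $(\star1)$ forces $I\cup Q\in\cI$ for the resulting partition $(Q,Z)$ of $J\setminus I$. Counting gives $|Q|=|J\setminus I|-|Z|\ge|J\setminus I|-|I\setminus J|=|J|-|I|>0$, so $Q\neq\emptyset$; choosing any $e\in Q\subseteq J\setminus I$, downward closure yields $I+e\subseteq I\cup Q\in\cI$, hence $I+e\in\cI$. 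This is exactly the matroid augmentation axiom, so $(S,\cI)$ is a matroid.

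I expect the first direction to be the only delicate point, for the reason highlighted above: one is tempted to fix a single convenient basis of $X$, but $(\star1)$ demands the conclusion for all of them at once, so the proof must be phrased in terms of the rank identity $r(X\cup Q)=r(X)+|Q|$, which is basis-independent. The remaining steps are routine rank bookkeeping together with the inequality $r(X\cup I)\ge|I|$. Finally, I would close by remarking that $r(X)-|X\cap I|\le|X|-|X\cap I|=|X\setminus I|$, so the displayed bound indeed refines the requirement $|Z|\le|X\setminus I|$ of the $1$BG axiom.
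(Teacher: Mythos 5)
Your proof is correct, and the ``$1$-growth implies matroid'' direction matches the paper's argument essentially line for line (apply \ref{def:kbg} with $X$ the smaller independent set, observe $Q\neq\emptyset$ by counting, conclude augmentation). For the ``matroid implies $1$-growth'' direction you use the same underlying construction of $Q$ --- a basis of $I\setminus X$ in the contraction $\cM/X$, which is exactly what it means for $Q$ to be an inclusion-wise maximal subset of $I\setminus X$ with $B_X\cup Q$ independent --- but the accounting differs. The paper verifies $(\star1)$ by the contraction fact that $P\cup Q$ is a basis of $X\cup I$ for every basis $P$ of $X$, and then estimates $|Z|$ by introducing an auxiliary set $R$, a basis of $X\setminus I$ in $\cM/(X\cap I)$, and comparing $|R\cup(X\cap I)\cup Q|$ with $|I|$. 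You instead run the bookkeeping entirely through the rank function: the span identity $\mathrm{span}(P)=\mathrm{span}(X)=\mathrm{span}(B_X)$ gives $(\star1)$, and the direct computation $|Q|=r(X\cup I)-r(X)$ together with $r(X\cup I)\ge|I|$ gives $|Z|\le r(X)-|X\cap I|$ in one line, which is a touch more streamlined than the paper's detour through $R$. (Incidentally, the paper's ``$|Z|=|R|$'' in that spot is a typo for $|Z|\le|R|$; the inequality you derive is exactly what their argument establishes as well.) Your closing remark that $r(X)-|X\cap I|\le|X\setminus I|$, confirming the sharp bound really is a refinement, is a worthwhile sanity check that the paper leaves implicit.
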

\begin{proof}
Let $\cF=(S,\cI)$ be a $1$-growth system, and let $J$ and $I$ be two independent sets with $|J|<|I|$. Apply the $1$-growth axiom \hyperlink{kbg}{($1\mathrm{BG}$)} with $X=J$ to obtain a partition $(Q,Z)$ of $I\setminus J$ such that $|Z|\le |J\setminus I|$. Since $|I|>|J|$, we have $|I\setminus J|>|J\setminus I|$, hence $|Z|<|I\setminus J|$ and so $Q=(I\setminus J)\setminus Z\neq\emptyset$. For any $e\in Q$, the axiom guarantees $J+e\in\cI$. Thus the augmentation property holds, and $\cF$ is a matroid.  

Now let $\cM=(S,\cI)$ be a matroid, and let $X \subseteq S$ and $I\in \cI$. Take $Q$ to be any basis of $I\setminus X$ in the contracted matroid $\cM/X$. Then, by the properties of contraction, we know that for every basis $P$ of $X$ in $\cM$, the set $P\cup Q$ forms a basis of $X\cup I$, and therefore is independent. Define $Z=(I\setminus (X\cup Q))$. Then $(Q,Z)$ is a partition of $I\setminus X$, and it only remains to show that $|Z|\leq |X\setminus I|$. To see this, let $R$ be a basis of $X\setminus I$ in $\cM/(X\cap I)$. Since $X\cap I\in \cI$, this means that $R\cup (X\cap I)$ is a basis of $X$ in $\cM$. By the properties of $Q$ discussed above, $R \cup (X \cap I) \cup Q$ is a basis of $X \cup I$ in $\cM$.  Moreover, since $I \in \mathcal{I}$, it follows that $|R\cup (X\cap I) \cup Q| \geq |I|=|X\cap I|+|Q|+|Z|$. In particular, $|Z| = |R| = r(X)-|X \cap I|\leq |X|-|X\cap I|=|X\setminus I|$, as needed.
\end{proof}

As it turns out, $k$-growth systems fit nicely in the hierarchy between $k$-matchoids and $k$-extendible systems, for any $k \in \N$. In fact, for the former containment, we show a stronger result.

\kGrowthComb*
\begin{proof}
Let $X \subseteq S$ and $I \in \cI$. Fix a $j \in [M]$. Since $(S_j,\cI_j)$ is a $k_j$-growth system, there exist sets $Z_j \subseteq (I\setminus X) \cap S_j$ and $Q_j = ((I\setminus X) \cap S_j) \setminus Z_j$ with $|Z_j| \leq k_j |(X\setminus I) \cap S_j|$ such that for every independent set $P_j$ of $X \cap S_j$ in $(S_j,\cI_j)$, we have $P_j \cup Q_j \in \cI_j$. If $k_j = 1$, the $j$-th component is a matroid and the same holds with the stronger bound $|Z_j| \leq r_j((X\setminus I) \cap S_j) - |(X \cap S_j) \cap I|$.

Now let
\[
Z \coloneqq \bigcup_{j = 1}^M Z_j, \quad  Q \coloneqq (I \setminus X) \setminus Z.
\]
For every $P \in \mathcal{I}$ and $j\in [M]$, we have $(P \cap S_j) \cup Q_j  \in \mathcal{I}_j$. Since $Q\cap S_j \subseteq Q_j$, it follows that $(P\cup Q)\cap S_j = (P\cap S_j) \cup (Q\cap S_j) \subseteq (P\cap S_j) \cup Q_j$. Therefore, $(P \cup Q) \cap S_j \in \mathcal{I}_j$, and thus we have $P\cup Q\in \cI$. 

Finally, note that
\[
\abs{Z} = \abs{\bigcup_{j = 1}^M Z_j} \leq \sum_{j=1}^M \abs{Z_j} \leq \sum_{j=1}^M k_j |(X \setminus I) \cap S_j| = \sum_{e \in X \setminus I} \sum_{j \colon e \in S_j} k_j \leq k |X\setminus I|,
\]
so the system is a $k$-growth system.
\end{proof}

\begin{theorem}\label{thm:relations}
The following statements hold for any $k \in \N$.
\begin{enumerate}[label=(\alph*)]\itemsep0em
    \item Every $k$-growth system is $k$-extendible. \label{kGiskE}
    \item Every $k$-matchoid is a $k$-growth system. \label{kMiskG}
\end{enumerate}
\end{theorem}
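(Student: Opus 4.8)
The two parts are essentially the two containments that place $k$-growth systems between $k$-matchoids and $k$-extendible systems in the hierarchy. Part~\ref{kMiskG} is the easier one and in fact follows immediately from Lemma~\ref{lem:k-growth-combinations}: a $k$-matchoid is, by definition, a combination of matroids $\set{(S_i,\cI_i)}_{i=1}^N$ in which every element $e\in S$ belongs to at most $k$ of the ground sets. Each matroid $(S_i,\cI_i)$ is a $1$-growth system by the theorem characterizing $1$-growth systems as matroids, so Lemma~\ref{lem:k-growth-combinations} applies with every $k_i=1$, yielding that the combination is a $k'$-growth system for $k'=\max_{e\in S}\sum_{i\colon e\in S_i}1=\max_{e\in S}\abs{\set{i\colon e\in S_i}}\le k$. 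Since a $k'$-growth system is trivially a $k$-growth system for $k'\le k$ (the bound $\abs{Z}\le k'\abs{X\setminus I}\le k\abs{X\setminus I}$ still holds), part~\ref{kMiskG} is done.

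For part~\ref{kGiskE}, I would argue directly from the definitions. Let $\cF=(S,\cI)$ be a $k$-growth system, and take $A\subseteq B\in\cI$ and $e\in S\setminus B$ with $A+e\in\cI$; the goal is to produce $Z\subseteq B\setminus A$ with $\abs{Z}\le k$ and $(B+e)\setminus Z\in\cI$. The natural move is to apply the \ref{def:kbg} axiom with the choice $X\coloneqq A+e$ and $I\coloneqq B$. This gives a partition $(Q,Z_0)$ of $B\setminus(A+e)=B\setminus A$ (using $e\notin B$) with $\abs{Z_0}\le k\abs{(A+e)\setminus B}$, such that for every basis $P$ of $A+e$ we have $P\cup Q\in\cI$. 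Now $A+e$ is itself independent (by hypothesis), so $A+e$ is its own basis, and hence $(A+e)\cup Q\in\cI$. Note $(A+e)\cup Q=(A+e)\cup(B\setminus A\setminus Z_0)=(B+e)\setminus Z_0$. Finally I must bound $\abs{Z_0}$: since $A\subseteq B$, the set $(A+e)\setminus B=\set{e}$ has size exactly $1$, so $\abs{Z_0}\le k\cdot 1=k$. Taking $Z=Z_0$ finishes the argument.

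The main thing to be careful about — and the only genuine subtlety — is making sure the set difference $(A+e)\setminus B$ really has size $1$, which is exactly why applying the axiom with $X=A+e$ rather than $X=A$ is the right choice: it is the presence of $e$ in $X$ but not in $B=I$ that forces $\abs{X\setminus I}=1$ and hence the desired bound $\abs{Z}\le k$. One should also double-check the identity $(A+e)\cup Q=(B+e)\setminus Z_0$, which uses that $(Q,Z_0)$ partitions $B\setminus A$ and that $A\cup(B\setminus A)=B$; since $Q$ and $Z_0$ are disjoint and $e\notin B$, this is a routine set-theoretic verification. Beyond that, no real obstacle arises: both parts are short consequences of results already established (Lemma~\ref{lem:k-growth-combinations} and the $1$-growth$=$matroid characterization) together with a direct unwinding of the \ref{def:kbg} and $k$-extendibility definitions.
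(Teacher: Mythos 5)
Your proof is correct and matches the paper's argument essentially line for line: part (b) invokes Lemma~\ref{lem:k-growth-combinations} with each $k_j=1$, and part (a) applies the \ref{def:kbg} axiom with $X=A+e$ and $I=B$, observes $|X\setminus I|=1$, and identifies $(A+e)\cup Q=(B+e)\setminus Z$. The only cosmetic difference is that you spell out the routine set identity and the monotonicity of the growth parameter, which the paper leaves implicit.
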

\begin{proof}
For \ref{kGiskE}, let $(S,\cI)$ be a $k$-growth system, and let $A\subseteq B\in \cI$, $e\in S\setminus B$ such that $A+e\in \cI$. Applying \ref{def:kbg} to $X=A+e$ and $I=B$ we deduce that there exists a set $Z\subseteq B\setminus (A+e)=B\setminus A$ with $|Z|\leq k  |(A+e)\setminus B|=k$ such that $(A + e) \cup (B\setminus (A\cup Z)) = (B \setminus Z) + e \in \cI$. Thus, $(S, \cI)$ satisfies the $k$-extension property.

Finally, notice that \ref{kMiskG} follows directly from Lemma \ref{lem:k-growth-combinations} since, for matroids, we have $k_j = 1$ for all $j \in [M]$.
\end{proof}

Next, we show that the first containment is strict.

\begin{theorem}\label{thm:extendible-growth-separation}
There exist a $2$-extendible system that is not $2$-growth.
\end{theorem}
\begin{proof}
    Let $\cF = (S, \cI)$ be an independence system on the ground set $S = \{x,1,2,3,4,5,6\}$ whose circuits are
    \[
    C_1 = \{x, 1, 2, 3\},  \quad  C_2 = \{x, 2, 3, 4\},  \quad  C_3 = \{x, 1, 3, 5\},  \quad  C_4 = \{x, 1, 2, 6\}.
    \]
We first check $2$-extendibility using a hitting-set argument, and then exhibit a counterexample showing that the $2$-growth property does not hold.

\paragraph{\normalfont\itshape Proof of the 2-extendible property.}
Let $A \subseteq B \in \cI$ and $e \in S \setminus B$ such that $A+e \in \cI$. We must find $Z \subseteq B \setminus A$ with $|Z| \leq 2$ such that $(B+e) \setminus Z$ contains no circuit.
\medskip

\noindent \textbf{Case 1: $e \in \{1, 2, 3, 4, 5, 6\}$.}
Every circuit in $\mathcal{C}$ contains $x$. If $x \notin B$, then $B+e$ contains no circuit, so $Z = \emptyset$ suffices. Suppose $x \in B$. Since $B \in \cI$, $B$ does not contain the remaining elements $C_i \setminus \{x\}$ of any circuit. Adding $e$ can only complete circuits $C_i$ for which $e \in C_i$.
\begin{itemize}\itemsep0em
    \item If $e=1$, the potential circuits are $C_1, C_3, C_4$. We must choose $Z \subseteq B \setminus A$ to hit the sets $\{x, 2, 3\}, \{x, 3, 5\}, \{x, 2, 6\}$. By the assumption $A+e \in \cI$, $A$ does not contain any of these sets. Furthermore, since $x \in B$, if $x \notin A$, then $x \in B \setminus A$, and removing $Z=\{x\}$ breaks all circuits. If $x \in A$, then $Z$ must be chosen from $\{2, 3, 5, 6\}$. A set $Z$ of size 2 (e.g., $Z = \{2, 3\}$ or $Z = \{3, 6\}$) is sufficient to hit all three configurations. 
    \item By symmetry, adding $e=2$ or $e=3$ creates at most three circuits, which can similarly be broken by a hitting set of size at most 2. Adding $e \in \{4, 5, 6\}$ completes at most one circuit, requiring $|Z| \leq 1$.
\end{itemize}

\noindent \textbf{Case 2: $e = x$.}
If $e=x$, then $B \subseteq \{1, 2, 3, 4, 5, 6\}$. The neighborhoods of $x$ inside the circuits are
\[ N_1 = \{1, 2, 3\}, \quad N_2 = \{2, 3, 4\}, \quad N_3 = \{1, 3, 5\}, \quad N_4 = \{1, 2, 6\}. \]
The set $B+x$ is dependent if $N_i \subseteq B$ for some $i$. Let $\mathcal{N} = \{ N_i \colon N_i \subseteq B \}$. Because $A+x \in \cI$, no $N_i$ is fully contained in $A$. Thus, for every $N_i \in \mathcal{N}$, the intersection $N_i \cap (B \setminus A)$ is nonempty. We must find a hitting set $Z \subseteq B \setminus A$ for $\mathcal{N}$ such that $|Z| \leq 2$.
\begin{itemize}
    \item If $N_1 \in \cN$, then $(B \setminus A) \cap \{1, 2, 3\} \neq \emptyset$. Let $v$ be such an element. Since $v$ hits three of the four possible sets in $\cN$, at most one set $N$ remains in $\cN \setminus \{N_i \colon v \in N_i\}$. We can pick one element from $N\cap (B \setminus A)$ to hit $N$, yielding $|Z| \leq 2$.
    \item If $N_1 \notin \mathcal{N}$, then at least one element $v \in \{1,2,3\}$ is not in $B$. Because each such $v$ is contained in three of the $N_i$ sets, the absence of $v$ from $B$ implies that those three sets are not in $\mathcal{N}$. Thus, $\mathcal{N}$ contains at most one set $N$, which can be hit by any element in $N\cap (B\setminus A)$.
\end{itemize}

\paragraph{\normalfont\itshape Failure of the 2-growth property.} 
Let $X \coloneqq  \{x,1,2,3\}$ and $I \coloneqq  \{1,2,3,4,5,6\}$. The set $I$ is independent, since it contains no circuit, as all circuits contain $x$. The bases of $X$ are
\[ P_1 = \{1, 2, 3\}, \quad P_2 = \{x, 1, 2\}, \quad P_3 = \{x, 2, 3\}, \quad P_4 = \{x, 1, 3\}. \]
To satisfy \hyperlink{kbg}{($2\mathrm{BG}$)}, there must exist $Q \subseteq I \setminus X = \{4,5,6\}$ such that $|Z| = |\{4,5,6\} \setminus Q| \leq 2$, and $P_j \cup Q \in \cI$ for all $j=1,2,3,4$. We now show that no choice of $Q$ satisfies this requirement.
\begin{enumerate}\itemsep0em
    \item If $4 \in Q$, then $P_3 \cup \{4\} = \{x, 2, 3, 4\} = C_2 \notin \cI$. Thus, $4 \notin Q$.
    \item If $5 \in Q$, then $P_4 \cup \{5\} = \{x, 1, 3, 5\} = C_3 \notin \cI$. Thus, $5 \notin Q$.
    \item If $6 \in Q$, then $P_2 \cup \{6\} = \{x, 1, 2, 6\} = C_4 \notin \cI$. Thus, $6 \notin Q$.
\end{enumerate}
Hence $Q = \emptyset$, and therefore $Z = \{4,5,6\}$, which violates $|Z| \leq 2$. Thus the axiom \hyperlink{kbg}{($2\mathrm{BG}$)} fails.
\end{proof}

We also show that the second containment is strict; in fact, we separate $k$-growth systems not only from $k$-matchoids but also from the larger class of $k$-circuit bounded systems.

\begin{theorem}\label{thm:growth-circuit-separation}
For any $k \geq 2$, there exist $k$-growth systems that are not $k$-circuit bounded.
\end{theorem}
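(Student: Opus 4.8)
The plan is to exhibit, for each $k \geq 2$, an explicit independence system that satisfies the \ref{def:kbg} axiom but in which some $I + e$ contains more than $k$ circuits. The natural candidate is a ``threshold'' or ``cardinality-with-forbidden-pairs'' type construction on a small ground set, since $k$-growth is a global basis-growth condition that can be easy to verify directly on a structured system, whereas $k$-circuit-boundedness fails as soon as one element completes many small circuits. Concretely, I would try a ground set $S = \{e\} \cup \{f_1, g_1, \dots, f_m, g_m\}$ for suitable $m > k$, where the independent sets are all subsets containing at most one element of each pair $\{f_i, g_i\}$, together with the requirement that $e$ cannot be added once any pair has been ``touched'' beyond a certain count --- or, more cleanly, a system built as a combination of rank-$1$ uniform matroids on the pairs $\{f_i, g_i\}$ (which by itself is a matchoid, hence too tame) modified so that $e$ participates in a single large circuit with a whole independent transversal. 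The key tension to engineer: $e$ should lie in many distinct circuits of $I + e$ for some independent $I$ (to kill $k$-circuit-boundedness), while every set $X$ still admits the balanced partition required by \ref{def:kbg}.

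The cleanest route is probably this. Take $S = \{e, a_1, b_1, a_2, b_2, \dots, a_{k+1}, b_{k+1}\}$. Declare a set $I \subseteq S$ independent iff (i) $I$ contains at most one of $\{a_i, b_i\}$ for each $i$, and (ii) if $e \in I$ then $I$ contains \emph{at most $k$} of the $a_i$'s and none of the $b_i$'s --- equivalently, $\{e, a_1, \dots, a_{k+1}\}$ and each $\{e, a_i, b_j\}$-type set is forbidden in a controlled way. One then checks: the set $I = \{a_1, \dots, a_{k+1}\}$ is independent, and $I + e$ contains the $k+1$ circuits $\{e\} \cup (I \setminus \{a_j\})$ for $j \in [k+1]$ --- strictly more than $k$ --- so the system is not $k$-circuit-bounded. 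It remains to verify the \ref{def:kbg} axiom: for every $X \subseteq S$ and every independent $I$, one must produce a partition $(Q, Z)$ of $I \setminus X$ with $|Z| \leq k|X \setminus I|$ such that $P \cup Q \in \cI$ for every basis $P$ of $X$. The worst case is $X = \{e\}$ (so $|X \setminus I| \leq 1$, forcing $|Z| \leq k$) with $I$ a large independent set containing $e$ or a large transversal; here one sets $Z$ to be exactly the set of elements of $I \setminus X$ that would create a forbidden configuration with $e$, and the construction must be tuned so that $|Z| \leq k$ always. I would also need to double-check the case $X \in \cI$ reduces to the $k$-extension property on independent sets, which is easier.

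The main obstacle I anticipate is the verification of \ref{def:kbg} for \emph{all} sets $X$, not just independent ones --- in particular the case where $X$ itself is dependent and contains $e$, so that a basis $P$ of $X$ is some maximal independent subset, and one must guarantee $P \cup Q \in \cI$ simultaneously for \emph{every} such basis $P$ while keeping $|Z| \leq k|X\setminus I|$ tight. If the first candidate system fails this, the fallback is to use the equivalent reformulation ($\star 4$) --- ``for every circuit $C$, if $C \subseteq X \cup Q$ then $C \subseteq X$'' --- and design the circuit structure directly: let the circuits be exactly the $k+1$ sets $C_j = \{e\} \cup \{a_i : i \neq j\}$ together with the $k+1$ pair-circuits $\{a_i, b_i\}$, and possibly a few more forced by downward closure, then argue that for any $X$ one can remove at most $k$ elements per element of $X \setminus I$ to destroy all circuits of $X \cup Q$ not already inside $X$. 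Once the small example works for a given $k$, no scaling argument is needed since the statement is ``for any $k \geq 2$ there exist'' --- a single instance per $k$ suffices --- so the entire difficulty is concentrated in the case analysis for \ref{def:kbg}, which I would organize by the value of $|X \setminus I|$ and whether $e \in X$, $e \in I$, or neither.
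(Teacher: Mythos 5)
There is a concrete error in your main candidate construction. You declare a set $I$ independent iff (i) it contains at most one of each pair $\{a_i,b_i\}$ and (ii) if $e\in I$ then $I$ has at most $k$ of the $a_i$'s and no $b_i$'s, and you then claim that for $I=\{a_1,\dots,a_{k+1}\}$ the sets $\{e\}\cup(I\setminus\{a_j\})$ are the $k+1$ circuits of $I+e$. But each such set consists of $e$ together with exactly $k$ of the $a_i$'s and no $b_i$'s, which by your own clause (ii) is \emph{independent}, not a circuit. In fact $I+e=\{e,a_1,\dots,a_{k+1}\}$ contains exactly one circuit (itself), so this system is $1$-circuit-bounded and the intended separation fails outright.

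Your fallback --- prescribing the circuit family directly as $C_j=\{e\}\cup\{a_i:i\neq j\}$ together with the pairs $\{a_i,b_i\}$ --- is in the right spirit and appears salvageable: $I+e$ then does contain $k+1>k$ circuits, and a case analysis of \ref{def:kbg} by the value of $|X\setminus I|$ and the membership of $e$ seems to go through. But you leave this verification entirely unperformed, and that case analysis is the whole content of the theorem; you yourself flag it as the main obstacle. For comparison, the paper uses a smaller, cleaner construction: ground set $\{e_0,e_1,\dots,e_{k+1}\}$ with bases $\{e_0,e_i\}$ for $i\in[k+1]$ and $\{e_1,\dots,e_{k+1}\}$ (the $b_i$'s in your construction add nothing to the counterexample). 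Every circuit then has the form $\{e_0,e_i,e_j\}$, so $I+e_0$ contains $\binom{k+1}{2}>k$ circuits, and the $k$-growth check reduces to two short observations: if $e_0\in B$ then $|B|=2$ and \ref{def:kbg} is trivial, and if $B=\{e_1,\dots,e_{k+1}\}$ the only nontrivial case is $X\setminus B=\{e_0\}$, which is immediate. So: right idea, but your first example is wrong and your second is unverified, leaving a genuine gap.
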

\begin{proof}
Let $S=\set{e_0,e_1,\dots,e_{k+1}}$ and consider the family of bases $\cB=\set{\set{e_0,e_i}\colon i\in[k+1]}\cup\set{\set{e_1,\dots,e_{k+1}}}$. Let $\cF = (S, \cI)$ where $\cI = \set{F \subseteq B \midd B \in \cB}$.

\begin{claim}\label{clm:growth}
$\cF$ is a $k$-growth system.
\end{claim}
\begin{proof}
By our previous observation, it suffices to check condition \ref{def:kbg} for all sets $X$ and bases $B$ of the independence system. If $e_0 \in B$, then $|B| = 2$, and the $k$-growth condition \ref{def:kbg} is trivially satisfied for any $X$. For $B = \set{e_1,\dots,e_{k+1}}$, $X \setminus B \neq \emptyset$ only holds if $X \setminus B = \set{e_0}$. In this case, regardless of $B \setminus X$, we can always find a $Z$ with $|Z| \leq k$ such that $|B \setminus (X \cup Z)| \leq 1$, in which case adding it to any basis of $X$ will form an independent set.
\end{proof}

\begin{claim}\label{clm:not-circuit-bounded}
$\cF$ is not $\bigl(\binom{k+1}{2}-1\bigr)$-circuit bounded.
\end{claim}
\begin{proof}
For $I = \set{e_1,\dots,e_{k+1}}$, the set $I+e_0$ contains $\binom{k+1}{2}$ circuits of the form $\set{e_0,e_i,e_j}$ for distinct $i,j\in[k+1]$.
\end{proof}

The theorem follows by Claims~\ref{clm:growth} and~\ref{clm:not-circuit-bounded}.
\end{proof}

Finally, we show that $k$-circuit bounded systems are not $k$-growth. In fact, the circuit boundedness can be made to be as small as $2$.

\begin{theorem}\label{thm:circuit-not-under-growth}
For any $k \geq 2$, there exist $2$-circuit bounded systems that are not $k$-growth.
\end{theorem}
\begin{proof}
Let $n = 2k+1, A = \set{a_1, \dots, a_n}, B = \set{b_1, \dots, b_n}$ and $Y = \set{y_{i,j} \midd 1 \leq i,j \leq n}$. We define a set system $\cF$ on the ground set $S = A \cup B \cup Y$ via its circuits, which are
\begin{itemize}
    \item All pairs $\set{a_i, a_j}$ for $i,j \in [n]$,
    \item All pairs $\set{b_i, b_j}$ for $i,j \in [n]$,
    \item All triples $\set{a_i, b_j, y_{i,j}}$ for $i,j \in [n]$.
\end{itemize}

It is easy to see that $\cF$ is $2$-circuit bounded. There are two cases for any independent set:
\begin{itemize}
    \item It contains some $a_i$, some $b_j$ and a subset of $Y \setminus \set{y_{i,j}}$, or
    \item It does not contain both some $a_i$ and some $b_j$, and contains a subset of $Y$.
\end{itemize}
In the first case, adding an element $a_{i'}$ creates at most two circuits: $\set{a_i, a_{i'}}$ and $\set{a_{i'}, b_j, y_{i', j}}$. Adding an element $b_{j'}$ is similar by symmetry. Finally, adding $y_{i,j}$ creates one circuit, namely $\set{a_i, b_j, y_{i,j}}$ and no other elements create a circuit. In the second case, if the independent set is a subset of $Y$, then no added element can create a circuit. So suppose, without loss of generality, that it contains some $a_i$ and no elements from $B$; the case for $b_j$ and no elements of $A$ is similar. Then, adding some $a_{i'}$ creates at most one circuit, namely $\set{a_i, a_{i'}}$. Adding some $b_j$ creates at most one circuit, namely $\set{a_i, b_j, y_{i,j}}$, and finally adding any element of $Y$ does not create a circuit.

We next show that $\cF$ is not $k$-growth. Let $X = A \cup B$ and $I = Y$. The bases of $X$ are the sets $\set{a_i, b_j}$ for $1 \leq i,j \leq n$. If $\cF$ is $k$-growth, there exists a partition of $I = Z \cup Q$ such that $|Z| \leq k |X| = 2kn$ and, for every basis $\set{a+i, b_j}$ of $X$, the set $Q \cup \set{a_i, b_j}$ is independent. To see why this cannot happen, notice that $Q \neq \emptyset$. Indeed, $|I| = n^2 = (2k+1)^2 = 4k^2 + 4k +1$, whereas $|Z| \leq 2kn = 2k(2k+1) = 4k^2 + 2k$. Therefore, $|Q| = |I| - |Z| \geq 2k + 1$. Let $y_{i,j} \in Q$, and take the basis $\set{a_i, b_i}$ of $X$. The set $\set{a_i, b_j, y_{i,j}}$ forms a circuit, and thus $Q \cup \set{a_i, b_j}$ contains a circuit, contradicting the $k$-growth axiom. Therefore, $\cF$ is not $k$-growth.
\end{proof}

\subsection{Examples of \texorpdfstring{$k$}{k}-growth Systems}

Theorems~\ref{thm:relations}, \ref{thm:extendible-growth-separation}, and~\ref{thm:growth-circuit-separation} show that $k$-growth systems lie properly between $k$-matchoids and $k$-extendible systems in the $k$-system hierarchy, while they do not form a subclass of $k$-circuit-bounded systems. 
To gain a better understanding of the class, we next present some fundamental examples.

\paragraph{Knapsack systems.} We also show that the class of $k$-growth systems includes specific classes of knapsack systems. Let $S$ be a finite set and $s\colon S\to \RR_{\geq 0}$ be a nonnegative function. The associated \textit{knapsack system} is $(S,\cI)$ where $\cI=\{I\subseteq S \colon \sum_{e\in I} s(e) \leq 1\}$. We say that the knapsack system has \textit{ratio bound} $\rho$ if $\max_{e\in S} s(e) = \rho \cdot \min_{e\in S} s(e)$. 

\begin{theorem}
Every knapsack system $(S,\cI)$ with ratio bound $\rho$ is a $\lceil \rho\rceil$-growth system.
\end{theorem}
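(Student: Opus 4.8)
The plan is to verify the \hyperlink{kbg}{($k\mathrm{BG}$)} axiom in the equivalent form $(\star2)$: for arbitrary $X\subseteq S$ and $I\in\cI$, exhibit a partition $(Q,Z)$ of $I\setminus X$ with $|Z|\le\ceil{\rho}\,|X\setminus I|$ such that $P\cup Q\in\cI$ for every independent $P\subseteq X$. Write $s(Y)=\sum_{e\in Y}s(e)$, $s_{\min}=\min_{e\in S}s(e)$ and $s_{\max}=\rho\,s_{\min}=\max_{e\in S}s(e)$; if $s_{\min}=0$ then all sizes vanish, $\cI=2^S$ is the free matroid, and the claim is trivial, so I assume $s_{\min}>0$. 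Set $\beta:=\max\set{s(P)\colon P\subseteq X,\ s(P)\le 1}$, noting $\beta\le\min(1,s(X))$. Since $P\subseteq X$ and $Q\subseteq I\setminus X$ are disjoint, the requirement ``$P\cup Q\in\cI$ for every independent $P\subseteq X$'' is, for a fixed $Q$, equivalent to the single numerical inequality $s(Q)\le 1-\beta$. Thus the whole task reduces to choosing $Q\subseteq I\setminus X$ of total size at most $1-\beta$ while discarding few elements.

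The key step is the weight estimate $s(I\setminus X)\le(1-\beta)+s(X\setminus I)$, which I would obtain from the two elementary observations $s(I)\le 1$ (as $I\in\cI$) and $\beta\le s(X)=s(I\cap X)+s(X\setminus I)$: indeed $s(I\setminus X)=s(I)-s(I\cap X)\le 1-s(I\cap X)\le 1-\beta+s(X\setminus I)$. Now put $t:=\max\set{0,\ s(I\setminus X)-(1-\beta)}$, so $0\le t\le s(X\setminus I)$. If $t=0$ take $Q:=I\setminus X$ and $Z:=\emptyset$. If $t>0$, then $X\setminus I\neq\emptyset$, and I would let $Z$ be the $m:=\min\set{\ceil{t/s_{\min}},\ |I\setminus X|}$ heaviest elements of $I\setminus X$ and $Q:=(I\setminus X)\setminus Z$. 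Because each element of $Z$ has size at least $s_{\min}$, removing the heaviest $\ceil{t/s_{\min}}$ of them (when that many exist) removes total size at least $t$, giving $s(Q)\le s(I\setminus X)-t=1-\beta$; and if instead $m=|I\setminus X|$, then $Q=\emptyset$ and $s(Q)=0\le 1-\beta$ since $\beta\le 1$. Either way $(Q,Z)$ meets the size requirement on $Q$.

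It then remains to bound $|Z|$: we have $|Z|\le\ceil{t/s_{\min}}\le\ceil{s(X\setminus I)/s_{\min}}$, and since every element of $X\setminus I$ has size at most $s_{\max}=\rho\,s_{\min}$ this is at most $\ceil{\rho\,|X\setminus I|}$; the rounding inequality $\ceil{n x}\le n\ceil{x}$ for a nonnegative integer $n$ and $x\ge 0$ (valid because $nx\le n\ceil{x}\in\ZZ$) then gives $|Z|\le\ceil{\rho}\,|X\setminus I|$, as required, so $(S,\cI)$ is a $\ceil{\rho}$-growth system. I do not expect a genuine obstacle: the only mildly delicate points are the weight estimate, which rests entirely on $s(I)\le1$ and $\beta\le s(X)$, and the ceiling manipulation, both elementary; one should just keep an eye on the degenerate case $X\subseteq I$ (where $s(I)\le1$ forces $t=0$ automatically) and the all-zero-size case already dealt with above.
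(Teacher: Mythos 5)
Your proof is correct. It takes a somewhat different route than the paper's: the paper's proof splits into two cases according to whether $s(X\setminus I)\ge s(I\setminus X)$, in the first case (and the first subcase of the second) taking $Z=I\setminus X$, $Q=\emptyset$, and in the remaining subcase choosing $Z$ to be the $\lceil\rho\rceil\,|X\setminus I|$ heaviest elements of $I\setminus X$ and verifying $X\cup Q\in\cI$ via a term-by-term comparison $s(X\setminus I)\le s(Z)$ (after observing that $X\in\cI$ in that case). You instead introduce the quantity $\beta=\max\{s(P)\colon P\subseteq X,\ s(P)\le 1\}$ and observe that $(\star2)$ is exactly equivalent to the single inequality $s(Q)\le 1-\beta$; you then compute the deficit $t=\max(0,\,s(I\setminus X)-(1-\beta))$, show $t\le s(X\setminus I)$, and remove only the $\lceil t/s_{\min}\rceil$ heaviest elements, bounding that count via $s(X\setminus I)\le\rho\,s_{\min}\,|X\setminus I|$ and the rounding inequality $\lceil nx\rceil\le n\lceil x\rceil$. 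The engine is the same on both sides (each removal frees at least $s_{\min}$, and $s_{\max}=\rho\,s_{\min}$ bounds $s(X\setminus I)$ against $|X\setminus I|$), but your version is unified rather than case-based, produces a minimal discard set $Z$, and explicitly covers the degenerate case $s_{\min}=0$, which the paper's proof leaves implicit (there $k=\lceil\beta/\alpha\rceil$ would be undefined). Both approaches are sound; yours buys a tighter, more mechanical characterization of feasibility for $Q$, while the paper's two-case split avoids introducing $\beta$ at all.
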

\begin{proof}
Let $X\subseteq S$ and $I\in \cI$. Let $\alpha=\min_{e\in S}s(e)$, $\beta=\max_{e\in S}s(e)$ and set $k\coloneqq\lceil \beta/\alpha\rceil$. We distinguish two cases. 

\paragraph{Case 1.} If $s(X\setminus I)\ge s(I\setminus X)$, then we have \[\alpha|I\setminus X|\leq s(I\setminus X)\leq s(X\setminus I)\leq \beta |X\setminus I|, \]
yielding $|I\setminus X|\leq k|X\setminus I|$. Then we can set $Z\coloneqq I\setminus X$ and $Q\coloneqq\emptyset$.
\medskip

\paragraph{Case 2.} If $s(X\setminus I)<s(I\setminus X)$, then $s(X)<s(I)\leq 1$ and therefore $X\in \cI$.  If $|I\setminus X|\leq k|X\setminus I|$, then we can simply set $Z\coloneqq I\setminus X$ and $Q\coloneqq\emptyset$. Otherwise $|I\setminus X|>k|X\setminus I|$, and set $Z$ to be any set of the largest  $k|X\setminus I|$ elements of $I\setminus X$, and $Q\coloneqq I\setminus (X\cup Z)$. Note that the size bound for $Z$ is attained and
\[
s(Z)\geq \alpha |Z| = \alpha k |X\setminus I| \geq \frac{\alpha}{\beta}ks(X\setminus I) \geq s(X\setminus I).
\]
Therefore
\[s(X\cup Q)=s(X\setminus I)+s(X\cap I)+s(Q) \leq s(Z)+s(X\cap I)+s(Q) = s(I)\leq 1,\]
thus $X\cup Q\in \cI$.
\end{proof}

Notice that the above can be extended to $d$-sparse packing constraints. For a packing constraint $A x \leq B$ where each element of $x$ appears in at most $d$ rows of $A$, let $\rho = \max_j \rho_j$ be the maximum ratio bound across all knapsack constraints corresponding to rows of $A$. Since the packing constraint can be seen as a combination of the knapsack constraints corresponding to rows of $A$, Lemma~\ref{lem:k-growth-combinations} shows that the packing constraint is $(d \cdot \rho)$-growth.

\paragraph{Stable Sets of Intersection Graphs.}
Let $D$ be a ground set and $V\subseteq 2^D$ be a family of subsets; we refer to elements of $D$ and $V$ as \emph{points} and \emph{figures}, respectively. The \textit{intersection graph} of the family is an undirected graph $G=(V,E)$ with
\[
E=\bigl\{\,uv\in\tbinom{V}{2}\colon u\cap v\neq\emptyset\,\bigr\}.
\]
A set $W\subseteq V$ is considered as \emph{stable} or \textit{independent} in $G$ if and only if its figures are pairwise disjoint. Let $(V,\cI)$ be the independence system of stable sets of $G$.

When $D$ is finite, computing a stable set is equivalent to choosing, for each point $p\in D$, \emph{at most one} figure that contains $p$. Thus $(V,\cI)$ is the intersection of $|D|$ partition matroids, one per point. If every figure contains at most $k$ points, then each figure participates in at most $k$ of those matroids and the system is a $k$-matchoid. Moreover, the circuits all have size $2$: if a set is not stable, it contains two figures meeting at some point, and any proper subset is stable. 

We focus on a special family of figures defined by intervals on a line. Let $D=[N]$, and for integers $1\le a\le b\leq N$, let $V_{a,b}$ denote the collection of all intervals whose sizes lie between $a$ and $b$, where the size of an interval is the number of its points. That is,
\[
V_{a,b}=\{[i,j]\colon 1\le i\le j\le N,\ a\le j-i+1\le b\}.
\]
By the above, $(V_{a,b},\cI)$ is a $b$-matchoid. We show it is also $c$-circuit-bounded and a $c$-growth system with $c=\bigl\lfloor\tfrac{b-2}{a}\bigr\rfloor+2$.

This construction can be further generalized: suppose that we have a capacity $\ell$ such that at most $\ell$ intervals can be selected at each point in time. Then, the corresponding system is $(\ell \cdot c)$-growth, where $c = \bigl\lfloor\tfrac{b-2}{a}\bigr\rfloor+2$. This capacitated interval constraint captures \emph{unsplittable flow on a path (UFP)}, by viewing the path edges as the ``points'' of the interval system. Specifically, in UFP, each request/task $i$ corresponds to a subpath $P_i$ of a given path and a demand $d_i$, and each path edge $e$ has capacity $u_e$. A set of accepted requests is feasible if and only if, for every path edge $e$, $\sum_{i: e\in P_i} d_i \leq u_e$. This is exactly a capacitated interval constraint, since selecting a request consumes $d_i$ units of capacity at every point/edge in the corresponding interval.
\medskip

\noindent \emph{Circuit boundedness.} Let $R=\{v_1,\dots,v_r\}\subseteq V_{a,b}$ be stable and take any interval $v'=[\ell,r]\in V_{a,b}$ such that $R\cup\{v'\}\notin\cI$. Let
\[
Z(v')=\{\,v\in R\colon v\cap v'\neq\emptyset\,\}
\]
be the intervals in $R$ intersecting $v'$. The number of circuits created by adding $v'$ is exactly $|Z(v')|$; our goal is to bound $|Z(v')|$.

Let $Z_0(v')\subseteq Z(v')$ be those intervals that do not contain either endpoint $\ell$ or $r$ of $v'$. Then every $v\in Z_0(v')$ is a subinterval of the interior of $v'$, so the total number of points they cover is at most $|v'|-2\le b-2$. Since each interval has length at least $a$, we get
\[
a\,|Z_0(v')|\ \le\ |v'|-2\ \le\ b-2,
\]
yielding $|Z_0(v')|\ \le\ \bigl\lfloor\tfrac{b-2}{a}\bigr\rfloor$. Since at most two additional intervals in $Z(v')$ can contain the endpoints $\ell$ and $r$, we obtain
\[
|Z(v')|\ \le\ |Z_0(v')|+2\ \le\ \bigl\lfloor\tfrac{b-2}{a}\bigr\rfloor+2\ =\ c,
\]
proving $c$-circuit-boundedness.
\medskip

\noindent \emph{Being $c$-growth.} Let $X\subseteq V_{a,b}$ be arbitrary, $I\in\cI$ be stable, and set $R\coloneqq I\setminus X$. For each $v'\in X\setminus I$, define $Z(v')\subseteq R$ as above; by the previous bound, we have $|Z(v')|\le c$. Let
\[
Z\coloneqq \bigcup_{v'\in X\setminus I} Z(v'),\qquad Q\coloneqq R\setminus Z.
\]
Every interval in $Q$ is disjoint from every interval in $X\setminus I$ by construction, and also from every interval in $X\cap I$ by the independence of $I$. Therefore, for any stable set $P\subseteq X$, we have $P\cup Q\in\cI$. Moreover,
\[
|Z|\ \le\ \sum_{v'\in X\setminus I} |Z(v')|\ \le\ c\,|X\setminus I|\ \le\ c\,|X|,
\]
showing the $c$-growth property \hyperlink{kbg}{($c\mathrm{BG}$)}.
\medskip

\noindent \emph{Not a $c$-matchoid.} Finally, let us show that for certain values of $a$ and $b$,  the system above is not a $c$-matchoid. We give an example below with $N=8$, $a=b=3$ and $c=\lfloor (b-2)/a\rfloor + 2 = 2$.

Denote $v_j=\{j,j+1,j+2\}$ for $j\in[6]$ the six intervals of size 3. Suppose that the system of stable sets $(S,\cI)$ is a $2$-matchoid given by a family of matroids. Since every singleton is independent and every circuit of $(S,\cI)$ has size $2$, we conclude that each circuit $\{x,y\}$ of $(S,\cI)$ is a matroid-circuit in at least one of the matroids.

The four pairs $\{v_3,v_1\}$, $\{v_3,v_2\}$, $\{v_3,v_4\}$, and $\{v_3,v_5\}$ are circuits. Since $v_3$ belongs to exactly two matroids, these circuits must be assigned into two groups, one for each of the matroids, say $M_A$ and $M_B$, not necessarily disjoint. Not every assignment is feasible: by circuit elimination, the pairs $\{v_3,v_1\}$ and $\{v_3,v_4\}$ cannot lie in the same matroid, as this would force $\{v_1,v_4\}$ to contain a circuit, contradicting its independence. The same argument excludes placing $\{v_3,v_1\}$ together with $\{v_3,v_5\}$, and $\{v_3,v_2\}$ together with $\{v_3,v_5\}$. Consequently, the only possibility is that $\{v_3,v_1\}$ and $\{v_3,v_2\}$ are circuits in $M_A$, while $\{v_3,v_4\}$ and $\{v_3,v_5\}$ are circuits in $M_B$.

Applying the same reasoning to the four circuits $\{v_4,v_2\}$, $\{v_4,v_3\}$, $\{v_4,v_5\}$, and $\{v_4,v_6\}$, we deduce that $v_4$ is contained in two distinct matroids, say $M_C$ and $M_D$. Moreover, $\{v_4,v_2\}$ and $\{v_4,v_3\}$ must be circuits in $M_C$, while $\{v_4,v_5\}$ and $\{v_4,v_6\}$ are circuits in $M_D$.

Furthermore, by circuit elimination, $\{v_1,v_2\}$ is also a circuit in $M_A$, and $\{v_4,v_5\}$ is a circuit in $M_B$. Observe that $v_3$ belongs to the ground sets of $M_A$, $M_B$, and $M_C$, and since $M_A\neq M_B$, it follows that $M_C\in\{M_A,M_B\}$. If $M_C=M_A$, then $M_A$ contains the circuits $\{v_1,v_2\}$ and $\{v_2,v_4\}$. By circuit elimination, $\{v_1,v_4\}$ would also be a circuit, contradicting that $\{v_1,v_4\}\in\cI$ is independent. Similarly, if $M_C=M_B$, then $M_B$ contains the circuits $\{v_4,v_5\}$ and $\{v_2,v_4\}$, and circuit elimination would imply that $\{v_2,v_5\}$ is a circuit, again contradicting independence. Hence $M_A$, $M_B$, and $M_C$ must be pairwise distinct matroids containing $v_3$, which contradicts the assumption that $(S,\cI)$ is a $2$-matchoid.

\subsection{Parallel Extension, Restriction, and Contraction in Independence Systems}

In matroid theory, the most important classes are those closed under basic operations, such as restriction, contraction, and parallel extension. Determining whether analogous closure properties hold for more general independence systems is also of particular interest. In this section, we investigate this problem and related questions. We first define the counterpart of these two operations for independence systems.

\begin{definition}[Parallel extension]
Let $\cF=(S,\cI)$ be an independence system and $s\in S$ be a non-loop element. Set $S_{\circ s}\coloneqq (S-s)\cup\{s_1,s_2\}$, and define the \emph{projection} of a set $X\subseteq S_{\circ s}$ as $\pi_s(X)=X$ if $X\cap\{s_1,s_2\}=\emptyset$ and $\pi_s(X)=(X\setminus\{s_1,s_2\})+s$ otherwise. Then, the \textit{parallel extension} of $\cF$ along $s$ is the system $\cF_{\circ s} = (S_{\circ s}, \cI_{\circ s})$ with \[\cI_{\circ s}=\{X\subseteq S_{\circ s}\colon |X\cap\{s_1,s_2\}|\leq 1,\pi_s(X)\in\cI\}.\]   

\end{definition}

\begin{definition}[Restriction]
Let $\cF=(S,\cI)$ be an independence system and $Y\subseteq S$. We define the \textit{restriction} of $\cF$ to $Y$ as the system $\cF|_Y=(Y, \cI|_Y)$ with
\[
\cI|_Y = \set{X\subseteq Y\colon X\in \cI}.
\]    
\end{definition}

The two operations above agree with the corresponding ones on matroids.

\begin{definition}[Contraction]
Let $\cF=(S,\cI)$ be an independence system and $I\in \cI$ be an independent set. We define the \textit{contraction} of $I$ in $\cF$ as the system $\cF/I = (S\setminus I, \cI/I)$ with
\[\cI/I=\{X\subseteq S\setminus I\colon X\cup I \in \cI\}.\]    
\end{definition}

The contraction of an independent set coincides with the notion of contraction in matroids when we contract an independent set, but it is undefined when we try to contract a set that is not independent. 

The next result shows that most of the classes behave well with respect to these operations.

\begin{lemma}\label{lem:extendible-contractions}
If $\cF=(S,\cI)$ is $k$-matroid intersection/$k$-matchoid/$k$-circuit bounded/$k$-growth/$k$-extendible, then so are its parallel extensions, restrictions and contractions.
\end{lemma}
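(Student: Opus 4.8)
The plan is to treat the three operations in turn, noting that for two of the five classes most of the work is automatic. For $k$-matroid intersection and for $k$-matchoid, each of restriction, parallel extension, and contraction commutes with intersection (resp.\ combination) and preserves the bound that every element lies in at most $k$ of the component ground sets; hence these two cases reduce to the standard facts that matroids are closed under restriction, parallel extension, and contraction of an independent set. (For contraction one uses that $I\in\cI$ forces $I\cap S_j\in\cI_j$ for every component $j$, so $\mathcal{M}_j/(I\cap S_j)$ is well defined.) Restriction is likewise immediate for the other three classes: for $X\subseteq Y$ the bases of $X$ in $\cF|_Y$ are those of $X$ in $\cF$, the circuits of $\cF|_Y$ are exactly the circuits of $\cF$ lying inside $Y$, and any witnessing set $Z$ supplied by the defining property of $\cF$ already lies in $Y$, so the $k$-circuit-bounded, $k$-extendible, and $k$BG properties transfer verbatim. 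This leaves parallel extension and contraction for the $k$-circuit-bounded, $k$-growth, and $k$-extendible classes.

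For contraction, fix $I\in\cI$ and write $\cF/I=(S\setminus I,\cI/I)$ with $\cI/I=\{X\subseteq S\setminus I:\ X\cup I\in\cI\}$. For $k$-extendible systems the verification is direct: given $A\subseteq B\in\cI/I$ and $e\in(S\setminus I)\setminus B$ with $A+e\in\cI/I$, we have $A\cup I\subseteq B\cup I\in\cI$, $e\notin B\cup I$, and $(A\cup I)+e\in\cI$, so the $k$-extension property of $\cF$ produces $Z\subseteq(B\cup I)\setminus(A\cup I)=B\setminus A$ with $|Z|\le k$ and $((B\cup I)+e)\setminus Z\in\cI$; as $Z\cap I=\emptyset$ this set equals $((B+e)\setminus Z)\cup I$, so $(B+e)\setminus Z\in\cI/I$. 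For $k$-growth systems, given $X\subseteq S\setminus I$ and $J\in\cI/I$, I would apply the axiom \ref{def:kbg} inside $\cF$ to the set $X\cup I$ and the independent set $J\cup I$, obtaining a partition $(Q,Z)$ of $(J\cup I)\setminus(X\cup I)=J\setminus X$ with $|Z|\le k\,|(X\cup I)\setminus(J\cup I)|=k\,|X\setminus J|$ and with $P'\cup Q\in\cI$ for every basis $P'$ of $X\cup I$ in $\cF$. The point that makes this work is that $P\mapsto P\cup I$ is a bijection from the bases of $X$ in $\cF/I$ onto the bases of $X\cup I$ in $\cF$: any independent set containing $P\cup I$ and contained in $X\cup I$ must contain all of $I$, hence restricts to an independent superset of $P$ inside $X$ for $\cF/I$, forcing equality. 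Thus $P\cup Q\cup I\in\cI$, i.e.\ $P\cup Q\in\cI/I$, for every basis $P$ of $X$ in $\cF/I$, and $(Q,Z)$ is as required. For $k$-circuit-bounded systems, the first step is a small structural fact: every circuit $C$ of $\cF/I$ equals $D\setminus I$ for some circuit $D$ of $\cF$ with $C\subseteq D\subseteq C\cup I$. Indeed $C\cup I\notin\cI$, so it contains a circuit $D$ of $\cF$; then $D\setminus I$ is dependent in $\cF/I$ and contained in $C$, hence equals $C$ by minimality of $C$. Now, for $J\in\cI/I$ and $e\in(S\setminus I)\setminus J$, every circuit of $\cF/I$ inside $J+e$ must contain $e$ and, by the above, arises from a circuit of $\cF$ inside $(J\cup I)+e$; distinct circuits of $\cF/I$ yield distinct traces $D\setminus I$, so this assignment is injective, and since $J\cup I\in\cI$ the number of circuits of $\cF$ inside $(J\cup I)+e$ is at most $k$.

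For parallel extension along a non-loop element $s$, I would argue through the projection $\pi_s$ that merges $s_1,s_2$ back to $s$, treating $s_1,s_2$ as parallel copies of $s$. The argument is a finite case analysis according to whether the sets occurring in the relevant axiom meet $\{s_1,s_2\}$: when none of them does, the required statement is literally the corresponding property of $\cF$ applied to the $\pi_s$-images; when a copy is involved, one transfers the witnessing structure through $\pi_s$ (replacing $s$ by whichever of $s_1,s_2$ actually occurs), and if both copies would be forced together one simply adds the redundant copy to the deleted set $Z$, at a cost of $1\le k$. The same split handles the $k$-circuit-bounded property (introducing the second copy creates exactly the single new circuit $\{s_1,s_2\}$), the $k$-extension property, and the \ref{def:kbg} axiom; and, as already noted, for $k$-matroid intersection and $k$-matchoid it suffices to combine matroid parallel extension over the components.

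I expect the contraction cases to be the substantive part, specifically the structural description of the circuits of $\cF/I$ (for $k$-circuit-boundedness) and the basis bijection $P\mapsto P\cup I$ together with the observation that padding $X$ and $J$ by $I$ is the right way to invoke the axiom \ref{def:kbg} (for $k$-growth). The parallel-extension step is routine but needs care to keep the case analysis exhaustive, while restriction and the matroid-based classes are immediate.
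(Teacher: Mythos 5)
Your proof is correct and follows essentially the same route as the paper's: restriction is immediate; $k$-matroid intersection and $k$-matchoids reduce to the component matroids; parallel extension is a case analysis through the projection $\pi_s$; and for contraction you identify circuits of $\cF/I$ with traces $D\setminus I$ of circuits $D$ of $\cF$ (for circuit-boundedness) and apply the \ref{def:kbg} axiom to $X\cup I$ and $J\cup I$ (for $k$-growth), which is precisely the paper's device. One small overclaim: $P\mapsto P\cup I$ need not be a \emph{bijection} onto the bases of $X\cup I$ in $\cF$ (a basis of $X\cup I$ need not contain $I$), but you only prove and use the forward direction — that $P\cup I$ is a basis whenever $P$ is a basis of $X$ in $\cF/I$ — which is all the argument requires, so the proof stands.
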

\begin{proof}
The claim for restrictions is direct. For parallel extensions and contractions, we prove the statement for each class separately. We first consider parallel extensions. Let $s\in S$ be an arbitrary element. 
\medskip

\noindent \textit{$k$-matchoids.} If $\cF$ is defined by matroids $(S_i,\cI_i)$ for $i\in[N]$, then its parallel extension $\cF_{\circ s}$ is the $k$-matchoid defined by the matroids obtained by performing the same parallel extension on each of them.
\medskip

\noindent\textit{$k$-circuit-bounded.} Take an independent set $I\in\cI_{\circ s}$ and let $e\in S_{\circ s}\setminus I$. If $I\cap\{s_1,s_2\}=\emptyset$ or $e\notin\{s_1,s_2\}$, then there is a one to one correspondence between the circuits of $\cF_{\circ s}$ in $I+e$ and of $\cF$ in $\pi_s(I+e)$. If $I\cap\{s_1,s_2\}\neq\emptyset$ and $e\in\{s_1,s_2\}$, then $I+e$ contains the unique circuit $\{s_1,s_2\}$. Since $\cF$ is $k$-circuit bounded, $I+e$ contains at most $k$ such circuits in both cases. 
\medskip

\noindent\textit{$k$-growth.} Take an arbitrary set $X\subseteq S_{\circ s}$ and an independent set $I\in\cI_{\circ s}$. If $|(I\cup X)\cap \{s_1,s_2\}|\leq 1$ or $\set{s_1,s_2}\subseteq X$, then the $k$-growth axiom \ref{def:kbg} simply follows from that for the projections in the original system. Otherwise, since $I\in\cI_{\circ s}$, we have $|(X\setminus I)\cap\set{s_1,s_2}|=|(I\setminus X)\cap\{s_1,s_2\}|=1$; we may assume without loss of generality that $s_1\in S\setminus I$ and $s_2\in I\setminus X$. Then \ref{def:kbg} for $\pi_s(X)$ and $\pi_s(I)$ in the original system yields a set $Z\subseteq \pi_s(I)\setminus \pi_s(X)$ such that $P\cup (\pi_s(I)\setminus Z)\in\cI$ for every basis $P$ of $\pi_s(X)$, and $|Z|\leq k|\pi_s(X)\setminus \pi_s(I)|=k|X\setminus I|-k$. Here, the last inequality follows from $s\in\pi_s(X)\cap \pi_s(I)$. Then, choosing $Z'\coloneqq Z+s_2$ satisfies $P'\cup(I\setminus Z')\in\cI_{\circ s}$ for every basis $P'$ of $X$ and $|Z'|=|Z|+1\leq k|X\setminus I|-k+1\leq k|X\setminus I|$.
\medskip

\noindent\textit{$k$-extendible.} Take any triple $(A,B,e)$ with $A\subseteq B$, $B\in \cI_{\circ s}$, $e\in S_{\circ s}\setminus B$, and $A+e\in \cI_{\circ s}$. If $B\cap\{s_1,s_2\}=\emptyset$ or $e\notin\{s_1,s_2\}$, then the $k$-extension property of $\cF$ implies that there exists $Z\subseteq \pi_s(B)\setminus \pi_s(A)$ with $|Z|\leq k$ such that $(\pi_s(B)+\pi_s(e))\setminus Z \in \cI$. Let $Z'\coloneqq Z$ if $s\notin Z$ and $Z'\coloneqq (Z-s)\cup(B\cap \{s_1,s_2\})$ otherwise. Then $Z'$ satisfies $(B+e)\setminus Z'\in \cI_{\circ s}$ and $|Z'|\leq k$. If $B\cap\{s_1,s_2\}\neq\emptyset$ and $e\in\{s_1,s_2\}$, then $B+e$ contains the unique circuit $\{s_1,s_2\}$, thus $Z=\{s_1,s_2\}-e$ is a proper choice.

\bigskip

Now we consider contractions. Let $I$ be an independent set in $\cI$ and $J\subseteq S\setminus I$ an independent set of $\cF/I$, that is, $I\cup J\in \cI$.
\medskip

\noindent \textit{$k$-matchoids.} If $\cF$ is defined by matroids $(S_i,\cI_i)$ for $i\in[N]$, then the contraction $\cF/I$ is the $k$-matchoid defined by the matroids $(S_i\setminus I,\cI_i/(I\cap S_i))$ for $i\in[N]$.
\medskip

\noindent\textit{$k$-circuit-bounded.} Let $e\in (S\setminus I)\setminus J$ and $C$ be a circuit of $\cF/I$ contained in $J+e$. Then $I\cup (J+e)\notin \cI$, but $I\cup (J+e-f)\in \cI$ for every $f\in C$. These together imply that $C=C'\setminus I$ for some circuit $C'\subseteq (I\cup J)+e$ of $\cF$. Since $(I\cup J)+e$ contains at most $k$ circuits of $\cF$, we have that $J+e$ contains at most $k$ circuits of $\cF/I$.
\medskip

\noindent\textit{$k$-growth.} Take an arbitrary set $X \subseteq S \setminus I$ and consider the sets $Y \coloneqq X \cup I$ and $F \coloneqq J \cup I$. Note that $F \in \cI$ by definition and, since $\cF$ is a $k$-growth system, there exists $Z \subseteq F \setminus Y$ such that $|Z| \leq k |Y \setminus F|$ and for any basis $P$ of $Y$, we have $P \cup (F \setminus (Y \cup Z)) \in \cI$. However, we have that $I \subseteq Y \cap F$, and thus $Y \setminus F = X \setminus J$, $F \setminus Y = J \setminus X$, and $F \setminus (Y \cup Z) = J \setminus (X \cup Z)$. Let $Q$ be a basis of $X$ in $\cF/I$. Then
$Q \cup I$ is a basis of $Y$: indeed, $Q \cup I \in \cI$ by definition of contraction, and if there
existed $e \in Y \setminus (Q \cup I) = X \setminus Q$ such that $(Q \cup I)+e \in \cI$, then
$Q+e \in \cI/I$, contradicting that $Q$ is a basis of $X$ in $\cF/I$.
Therefore, applying the
$k$-growth axiom to the basis $Q \cup I$ of $Y$, we get $(Q \cup I) \cup (F \setminus (Y \cup Z)) \in \cI$.
Equivalently, $(Q \cup I) \cup (J \setminus (X \cup Z)) \in \cI$, and thus
$Q \cup (J \setminus (X \cup Z)) \in \cI/I$. Therefore, $\cF/I$ is a $k$-growth system.
\medskip
\medskip

\noindent\textit{$k$-extendible.} We show the $k$-extension property for a triple $(A,B,e)$ with $A\subseteq B$, $B\in \cI/I$, $e\in (S\setminus I)\setminus B$, and $A+e\in \cI/I$. By the definition of contraction, we have $I\cup A\subseteq I\cup B\in \cI$, and $I\cup (A+e)\in \cI$. Since the original system is $k$-extendible, then there exists $Z\subseteq (B\cup I)\setminus (A\cup I)=B\setminus A$ with $|Z|\leq k$ such that $(I\cup (B+ e))\setminus Z \in \cI$. Therefore $(B+e)\setminus Z \in \cI/I$.
\end{proof}

Interestingly, the case for $k$-systems is different.

\begin{lemma} \label{lem:ksys}
Every parallel extension and restriction of a $k$-system is a $k$-system. However, for $k\geq 2$, the contraction of a $k$-system is not necessarily a $k$-system.
\end{lemma}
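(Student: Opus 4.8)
The plan is to treat the three operations separately, working throughout with the rank-quotient form of Definition~\ref{def:k-system}: $(S,\cI)$ is a $k$-system iff, for every $X\subseteq S$, the sizes of all bases of $X$ lie within a factor $k$ of one another. Restriction is immediate: for $Y\subseteq S$ and $X\subseteq Y$, every subset of $X$ already lies in $Y$, so a subset of $X$ is independent in $\cF|_Y$ exactly when it is independent in $\cF$; hence the bases of $X$ in $\cF|_Y$ are precisely the bases of $X$ in $\cF$, and the rank-quotient bound is inherited unchanged.

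For parallel extension the key step is to show that the projection $\pi_s$ puts the bases of any $X\subseteq S_{\circ s}$ in $\cF_{\circ s}$ into a size-preserving correspondence with the bases of $\pi_s(X)$ in $\cF$. I would prove two facts: (i) if $B$ is a basis of $X$ in $\cF_{\circ s}$, then $B$ contains at most one of the copies $s_1,s_2$ (they are parallel), so $|B|=|\pi_s(B)|$, and $\pi_s(B)$ is a basis of $\pi_s(X)$ in $\cF$; (ii) conversely, every basis $B'$ of $\pi_s(X)$ in $\cF$ lifts to a basis of $X$ of the same size --- keep $B'$ if $s\notin B'$, and replace $s$ by whichever of $s_1,s_2$ lies in $X$ if $s\in B'$. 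This is cleanest organized by cases on $|X\cap\{s_1,s_2\}|\in\{0,1,2\}$: when $|X\cap\{s_1,s_2\}|\le 1$, $\pi_s$ restricted to subsets of $X$ is an inclusion- and independence-preserving bijection onto subsets of $\pi_s(X)$; the case $\{s_1,s_2\}\subseteq X$ uses the parallel-copies observation both to rule out a basis containing both copies and to match sizes. Since the set of basis-sizes of $X$ in $\cF_{\circ s}$ then equals that of $\pi_s(X)$ in $\cF$, the rank quotient of $X$ is at most $k$, so $\cF_{\circ s}$ is a $k$-system.

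For the contraction claim I would exhibit an explicit $k$-system whose contraction at a single element is not a $k$-system. Fix $k\ge 2$, set $X=\{x_1,\dots,x_{k+1}\}$ and $S=\{e\}\cup X\cup\{y\}$, and let $\cI$ be the downward closure of the three sets $B_1=X+e$, $B_2=\{e,y\}$ and $B_3=X+y$. A routine check over all subsets shows $(S,\cI)$ is a $k$-system: every $Z\subseteq S$ not containing both $e$ and $y$ lies inside $B_1$ or $B_3$ and is therefore independent, while a subset $Z=\{e,y\}\cup J$ with $J\subseteq X$ has all its bases of size $|J|+1$ or $2$ (the size-$2$ one being $\{e,y\}$, which is maximal in $Z$ since $\{e,y,x_i\}\notin\cI$), so its rank quotient is at most $(k+2)/2\le k$. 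Now contract the non-loop $e$. In $\cF/e$ the set $X+y$ has $X$ as a basis --- $X+e=B_1\in\cI$, and $y$ cannot be added because $\{e,y,x_i\}\notin\cI$ --- and it also has $\{y\}$ as a basis --- $\{e,y\}=B_2\in\cI$, and no $x_i$ can be added, for the same reason. Hence $X+y$ has rank quotient $k+1>k$ in $\cF/e$, and $\cF/e$ is not a $k$-system.

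The conceptual crux lies in the contraction part: one must design the $k$-system so that the blow-up in rank quotient is \emph{created} by the contraction rather than already present in $\cF$. In the example this is exactly the role of $e$, which sits in two of the three bases ($B_1$ and $B_2$) and whose removal ``unlinks'' the short basis $\{e,y\}$ from the long ones, promoting $\{y\}$ to a genuine size-$1$ basis of $X+y$ that was previously dominated. The only routine-but-delicate work is the exhaustive verification that $(S,\cI)$ is a $k$-system on every subset, together with, in the parallel-extension part, the bookkeeping in the case $\{s_1,s_2\}\subseteq X$.
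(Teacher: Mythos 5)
Your restriction argument is the paper's, and your parallel-extension argument is a fuller version of the paper's one-line claim that ``duplicating an element into parallel copies does not affect the relative sizes of maximal independent sets.'' The genuinely different part is your contraction counterexample, and it is in fact stronger than the paper's. The paper uses $S=\{x,a,b,c,d\}$ with bases $\{a,b,c,d\}$, $\{x,a\}$, $\{x,b\}$, $\{x,c\}$, $\{x,d\}$; contracting $a$ leaves bases $\{b,c,d\}$ and $\{x\}$ of ratio $3$, so the contraction fails to be a $2$-system, but the contracted system is still a $3$-system, so that one fixed example does not by itself settle $k\ge 3$. Your family $S=\{e\}\cup X\cup\{y\}$ with $|X|=k+1$ and bases $X+e$, $\{e,y\}$, $X+y$ is a $k$-system for every $k\ge 2$ (the only nontrivial subsets are $\{e,y\}\cup J$ with $\emptyset\neq J\subseteq X$, whose bases have sizes $|J|+1$ and $2$, ratio at most $(k+2)/2\le k$), and after contracting $e$ the set $X+y$ has bases $X$ and $\{y\}$ of sizes $k+1$ and $1$. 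So your construction delivers the claim uniformly in $k$, which the paper's does not.

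One caveat on the parallel-extension step, which is really a wrinkle in the paper's own definition rather than a flaw particular to your write-up. Your fact (i) asserts that a basis of $X$ in $\cF_{\circ s}$ contains at most one of $s_1,s_2$ ``since they are parallel,'' and size-preservation of $\pi_s$ hinges on this. But the paper's definition sets $\cI_{\circ s}=\{X:\pi_s(X)\in\cI\}$ and, since $s$ is a non-loop, $\pi_s(\{s_1,s_2\})=\{s\}\in\cI$; hence $\{s_1,s_2\}$ is independent in $\cF_{\circ s}$ and a basis may contain both copies (in fact, when $\{s_1,s_2\}\subseteq X$, every basis of $X$ contains either both copies or neither). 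Under this literal reading the rank quotient can increase: taking $\cI$ to be the downward closure of $\{a\}$ and $\{s,b\}$ on $S=\{s,a,b\}$ gives a $2$-system, yet $S_{\circ s}=\{s_1,s_2,a,b\}$ then has bases $\{a\}$ and $\{s_1,s_2,b\}$ of sizes $1$ and $3$. Your argument is correct for the standard parallel extension in which $\{s_1,s_2\}$ is a circuit; the paper's definition as written omits the clause $|X\cap\{s_1,s_2\}|\le 1$, and the paper's one-line justification silently assumes it too.
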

\begin{proof} 
Closure under parallel extension follows immediately, since duplicating an element into parallel copies does not affect the relative sizes of maximal independent sets in any subset, while closure under restriction is trivial by definition. For closure under contraction, consider the system with $S=\{x,a,b,c,d\}$ and bases $\{a,b,c,d\}$, $\{x,a\}$, $\{x,b\}$, $\{x,c\}$, and $\{x,d\}$. Since every set of size two is independent, for every $X\subseteq S$ with $|X|\le 4$ and for every pair $P,Q$ of bases of $X$, we have $|P|\ge|X|/2\ge|Q|/2$. If $X=S$, then all bases of $X$ have size either $2$ or $4$, so the system is a $2$-system. However, after contracting $a$, both $P=\{b,c,d\}$ and $Q=\{x\}$ are bases, with $|P|/|Q|=3$, and the resulting independence system is not a $2$-system.
\end{proof}

The following theorem reveals a key structural link between $k$-systems and $k$-extendible systems.

\begin{theorem} 
An independence system $\cF=(S,\cI)$ is $k$-extendible if and only if all of its contractions are $k$-systems.
\end{theorem}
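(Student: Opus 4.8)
One implication is essentially already available. If $\cF$ is $k$-extendible, then Lemma~\ref{lem:extendible-contractions} shows that every contraction $\cF/I$ (for $I\in\cI$) is again $k$-extendible, and since every $k$-extendible system is a $k$-system, all contractions of $\cF$ are $k$-systems; taking $I=\emptyset$ recovers $\cF$ itself. So the substance of the theorem lies in the converse, and that is where I would concentrate.

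For the converse, assume every contraction $\cF/I$ with $I\in\cI$ is a $k$-system. To verify the $k$-extension property, fix $A\subseteq B\in\cI$ and $e\in S\setminus B$ with $A+e\in\cI$; the goal is to produce $Z\subseteq B\setminus A$ with $|Z|\le k$ and $(B+e)\setminus Z\in\cI$. If $B+e\in\cI$ we are done with $Z=\emptyset$, so assume $B+e\notin\cI$. The plan is to let $W\subseteq B\setminus A$ be \emph{inclusion-maximal} subject to $A\cup W+e\in\cI$ — this family is nonempty since $W=\emptyset$ qualifies, and it does not contain $W=B\setminus A$ because $B+e\notin\cI$ — and then to set $Z\coloneqq(B\setminus A)\setminus W$. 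Note that $(B+e)\setminus Z=(A\cup W)+e\in\cI$ holds by the choice of $W$, so the only thing left to establish is the bound $|Z|\le k$.

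To bound $|Z|$ I would pass to the contraction $\cF/(A\cup W)$, which is one of the contractions covered by the hypothesis since $A\cup W\subseteq A\cup W+e\in\cI$ forces $A\cup W\in\cI$. Work inside this contracted system with the set $X'\coloneqq Z+e$. The key claim is that $Z$ and $\{e\}$ are \emph{both} bases of $X'$ in $\cF/(A\cup W)$: the set $Z$ is independent there because $A\cup W\cup Z=B\in\cI$, and it is maximal in $X'$ since adjoining $e$ would give $B+e\notin\cI$; the set $\{e\}$ is independent by the choice of $W$, and it is maximal in $X'$ because for any $f\in Z$ the independence of $\{e,f\}$ in the contraction would mean $A\cup(W+f)+e\in\cI$ with $W\subsetneq W+f\subseteq B\setminus A$, contradicting the maximality of $W$. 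Applying the $k$-system property of $\cF/(A\cup W)$ to the two bases $Z$ and $\{e\}$ of $X'$ then gives $|Z|\le k\,|\{e\}|=k$, which is exactly what is needed.

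The one step that is more than routine bookkeeping with the identity $\cI/J=\{X:\ X\cup J\in\cI\}$ is checking that $\{e\}$ is inclusion-maximal in $X'$ inside the contraction — this is precisely where the maximality of $W$ is consumed — so that is the part I would write out most carefully. I would also pause to record that the whole converse uses only the $k$-system property of the single contraction $\cF/(A\cup W)$, so the hypothesis ``all contractions are $k$-systems'' is being used rather economically and the characterization is tight. Finally, it seems worth noting explicitly that this result, together with Lemma~\ref{lem:extendible-contractions} and Lemma~\ref{lem:ksys}, gives a clean structural picture: $k$-extendibility is exactly the property of being a $k$-system hereditarily under contraction, which fails already for plain $k$-systems when $k\ge 2$.
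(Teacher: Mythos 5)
Your proof is correct and takes essentially the same approach as the paper. The only cosmetic difference is that you parameterize by the inclusion-maximal set $W\subseteq B\setminus A$ with $A\cup W+e\in\cI$ and contract by $A\cup W$, whereas the paper picks a cardinality-minimum $Z\subseteq B\setminus A$ with $(B+e)\setminus Z\in\cI$ and contracts by $B\setminus Z$; since $B\setminus Z=A\cup W$ these are the same contraction, and the step where your maximality of $W$ (equivalently, inclusion-minimality of $Z$) shows that $\{e\}$ is a basis of $Z+e$ is exactly the step where the paper invokes minimality of $|Z|$.
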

\begin{proof} 
If $\cF$ is $k$-extendible, then by Lemma~\ref{lem:extendible-contractions}, all its contractions are $k$-extendible, and therefore also $k$-systems.

Now suppose that $\cF$ is a $k$-system whose contractions are $k$-systems. We show that it is $k$-extendible. Let $A\subseteq B\in\cI$ and $e\in S\setminus B$ with $A+e\in\cI$. Since $A+e\subseteq B+e$, choose $Z\subseteq B\setminus A$ of minimum size such that $(B+e)\setminus Z\in\cI$. If $B+e\in\cI$, then $Z=\emptyset$ works and we are done; otherwise assume $B+e\notin\cI$. Then $\{e\}\in\cI/(B\setminus Z)$ and $Z\in\cI/(B\setminus Z)$. We claim that both $\{e\}$ and $Z$ are bases of $Z+e$ in the contracted system $\cF/(B\setminus Z)$. Indeed, if $Z$ were not a basis, then $Z+e$ would be independent in $\cI/(B\setminus Z)$, implying $(Z+e)\cup(B\setminus Z)=B+e\in\cI$, a contradiction. Similarly, if there existed $f\in Z$ such that $\{e,f\}\in\cI/(B\setminus Z)$, then $(B\setminus Z)\cup\{e,f\}=(B+e)\setminus(Z-f)\in\cI$, contradicting the minimality of $Z$ as $Z-f$ would have smaller size. Therefore, both $\{e\}$ and $Z$ are bases of $Z+e$ in $\cI/(B\setminus Z)$. As $\cI/(B\setminus Z)$ is a $k$-system, we must have $|Z|\le k|\{e\}|=k$, and hence $(S,\cI)$ is $k$-extendible.
\end{proof}

Finally, we characterize $k$-circuit bounded systems. Recall that the circuit axiom for matroids says that an independence system is a matroid if and only if for every two circuits $C_1, C_2$ such that $e \in C_1 \cap C_2$ for some element $e$, we have that $(C_1 \cup C_2) \setminus \{e\}$ contains a circuit. We show that the natural generalization of the above axiom captures exactly the class of $k$-circuit bounded systems.

\begin{theorem}\label{thm:circuit-bounded-characterization}
An independence system $\cF=(S,\cI)$ is $k$-circuit bounded if and only if, for any $k+1$ distinct circuits $C_1, \dots, C_{k+1}$ of $\cF$ such that $e \in \bigcap_{i = 1}^{k+1} C_i$ for some $e \in S$, we have that $\bigcup_{i = 1}^{k+1} C_i \setminus \set{e}$ contains a circuit.
\end{theorem}
\begin{proof}
If $\cF$ is $k$-circuit bounded, we know that for every independent set $I \in \cI$ and element $e \in S$, $I+e$ contains at most $k$ circuits. Let $C_1, \dots, C_{k+1}$ of $\cF$ such that $e \in \bigcap_{i = 1}^{k+1} C_i$ for some $e \in S$, and let $I = \bigcup_{i = 1}^{k+1} C_i \setminus \set{e}$. Assume, towards contradiction, that $I$ does not contain a circuit; thus $I$ is independent. Consider adding $e$ to $I$. The resulting set $I+e$ contains $k+1$ circuits, contradicting the fact that $\cF$ is $k$-circuit bounded.

Now suppose that, for any $k+1$ distinct circuits $C_1, \dots, C_{k+1}$ of $\cF$ such that $e \in \bigcap_{i = 1}^{k+1} C_i$ for some $e \in S$, we have that $\bigcup_{i = 1}^{k+1} C_i \setminus \set{e}$ contains a circuit. For an arbitrary independent set $I \in \cI$ and element $e \in S$, assume, towards contradiction, that $I+e$ contains at least $k+1$ distinct circuits $C_1, \dots, C_{k+1}$. Since $I$ is independent, all such circuits contain $e$. Let $J = \bigcup_{i = 1}^{k+1} C_i \setminus \set{e}$ and notice that $J \subseteq I$. By our axiom, $J$ contains a circuit, which contradicts the fact that $I$ is independent.
\end{proof}
\section{Conclusion and Future Directions}\label{sec:discussion}

In this paper, we study the free-order secretary problem in a two-sided bipartite setting, where both agents and items are subject to independent feasibility constraints. This model generalizes standard bipartite matching secretary problems and captures matroid, matroid intersection, and other combinatorial constraints in a unified framework. We also introduce $k$-growth systems, a new class of independence systems that lie properly between $k$-matchoids and $k$-extendible systems and may be of independent combinatorial interest. Building on these structural insights, we design $\Omega(1/k^2)$-competitive algorithms for edge- and agent-arrival settings over $k$-growth systems and their combinations, as well as constant-competitive algorithms for settings where agents may select multiple items and the item-side constraint is sufficiently simple.

Our work leaves several natural questions open. 
\begin{enumerate}[itemsep=0em, left=0pt]
    \item We show a generalization of the Feldman--Svensson--Zenklusen core lemma for $k$-growth systems. Whether or not one can obtain a core lemma for $k$-extendible systems, for suitable definitions of primitive core and primitive hull, is a very interesting open problem.
    \item While we obtain $\bigOm{1/k^2}$-competitive algorithms for $k$-growth systems, it remains an intriguing question whether this dependence on $k$ is optimal, or whether constant competitiveness can be achieved for broader classes such as $k$-extendible systems. 
    \item Our results rely on the free-order model; extending these guarantees to the classical random-order setting, even for restricted subclasses of two-sided systems, appears to require fundamentally new ideas. 
    \item In the agent-arrival model, our approach relies on order-oblivious algorithms on the item side; it would be interesting to see if this requirement can be relaxed, and more generally, to identify further classes of independence systems that admit a core lemma and clarify the boundary between tractable and intractable two-sided online selection problems.
\end{enumerate}

\bibliographystyle{abbrv}
\bibliography{references}

\end{document}